\numberwithin{equation}{section}
\newtheorem{theo}{Theorem}[section]
\newtheorem{lem}[theo]{Lemma}
\newtheorem{prop}[theo]{Proposition}
\theoremstyle{definition}
\theoremstyle{remark}
\newtheorem{rem}[theo]{Remark}
\definecolor{applegreen}{rgb}{0.55, 0.71, 0.0}
\definecolor{yellowcut}{RGB}{0, 200, 0}
\definecolor{orangecross}{RGB}{248, 48, 8}
\definecolor{violetcross}{RGB}{178, 8, 248}
\definecolor{ao(english)}{rgb}{0.0, 0.5, 0.0}
\definecolor{green(html/cssgreen)}{rgb}{0.0, 0.5, 0.0}
\definecolor{lincolngreen}{rgb}{0.11, 0.35, 0.02}
\newcommand\Corr[2]{\langle#2 \rangle_{#1}}
\def\arg{\mathop{\mathrm{arg}}\nolimits}
\def\det{\mathop{\mathrm{det}}\nolimits}
\def\Ker{\mathop{\mathrm{Ker}}\nolimits}
\def\res{\mathop{\mathrm{res}}\limits}
\def\supp{\mathop{\mathrm{supp}}\nolimits}
\def\Xsym{X^\mathrm{sym}_{[\mathbf v,\mathbf u]}}
\def\Xanti{X^\mathrm{anti}_{[\mathbf v,\mathbf u]}}
\begin{document}

\title[Zig-zag layered Ising model and orthogonal polynomials]{Magnetization in the zig-zag layered Ising model and orthogonal polynomials}
\author[Dmitry Chelkak]{Dmitry Chelkak$^\mathrm{a,b}$}
\author[Cl\'ement Hongler]{Cl\'ement Hongler$^\mathrm{c}$}
\author[R\'emy Mahfouf]{R\'emy Mahfouf$^\mathrm{a}$}

\thanks{\textsc{${}^\mathrm{A}$ ENS--MHI Chair, D\'epartement de math\'ematiques et applications, \'Ecole Normale Sup\'erieure, CNRS, PSL University, 45 rue d'Ulm, 75005 Paris, France.}}

\thanks{\textsc{${}^\mathrm{B}$ On leave from St.~Petersburg Dept. of Steklov Mathematical Institute RAS, Fontanka 27, 191023 St.~Petersburg, Russia.}}

\thanks{\textsc{${}^\mathrm{C}$ Chair of Statistical Field Theory, MATHAA Institute, \'Ecole Polytechnique F\'ed\'erale de Lausanne, Station 8, 1015 Lausanne, Switzerland.}}

\thanks{\emph{E-mail:} \texttt{dmitry.chelkak@ens.fr}, \texttt{clement.hongler@epfl.ch}, \texttt{remy.mahfouf@ens.fr}}

\begin{abstract} We discuss the magnetization~$M_m$ in the $m$-th column of the zig-zag layered 2D Ising model on a half-plane using Kadanoff--Ceva fermions and orthogonal polynomials techniques. Our main result gives an explicit representation of~$M_m$ via~$m\times m$ Hankel determinants constructed from the spectral measure of a certain Jacobi matrix which encodes the interaction parameters between the columns. We also illustrate our approach by giving short proofs of the classical Kaufman--Onsager--Yang and McCoy--Wu theorems in the homogeneous setup and expressing~$M_m$ as a Toeplitz+Hankel determinant for the homogeneous sub-critical model in presence of a boundary magnetic field.
\end{abstract}

\keywords{planar Ising model, magnetization, discrete fermions, orthogonal polynomials, Hankel determinants, Toeplitz+Hankel determinants}

\subjclass[2010]{82B20, 47B36, 33C47}

\maketitle

\section{Introduction} The planar Ising (or Lenz--Ising) model, introduced by Lenz almost a century ago, has an extremely rich history which is impossible to overview in a short introduction, instead we refer the interested reader to the monographs~\cite{mccoy-wu-book,baxter-book,palmer-book,smirnov-duminil-lectures} as well as the papers~\cite{niss-I,niss-II,niss-III,mccoy-maillard-12,chelkak-cimasoni-kassel} and references therein for more information on various facets of this history. From the `classical analysis' viewpoint, one of the particularly  remarkable aspects is a fruitful interplay between the explicit computations for the planar Ising model and the theory of {Toeplitz determinants}. This interplay originated in the groundbreaking work of Kaufman and Onsager in late 1940s (see~\cite{baxter-onsager-I,baxter-onsager-II}) and, in particular, lead Szeg\"o to the strong form of his famous theorem on asymptotics of Toeplitz determinants; we refer the interested reader to the recent survey~\cite{deift-its-krasovsky} due to Deift, Its and Krasovsky for more information on the developments of this link since then.

{Besides having representations via (Toeplitz or more complicated) determinants, spin correlations in the planar Ising model are known to satisfy quadratic identities~\cite{Perk-81dubna,mccoy-perk-wu-quadratic} arising when one changes the position of a spin variable by one lattice step. Though a direct asymptotic analysis of these determinants can be easily performed only for `diagonal' or `horizontal' correlations (e.g., see~\cite[Chapters~VIII and~XII]{mccoy-wu-book}), one can then use the aforementioned quadratic identities to analyze asymptotics near these special directions; e.g., see a discussion in~\cite[Section~2]{perk-au-yang-09}. In their turn, the quadratic identities for spin correlations are deeply related to the theory of (discrete) isomonodromic deformations and $\tau$-functions obtained thereof~\cite{SMJ-77}. This also leads to the famous appearance of discrete Painlev\'e equations in the planar Ising model~\cite{SMJ-80b} and in its massive scaling limit~\cite{Wu-McCoy-Tracy-Barouch}. We refer the interested reader to the monograph~\cite{palmer-book} for an account of these developments and only mention that this deep interplay of several topics still remains an active research subject in analysis; e.g., see~\cite{basor-et-al-2015,witte-07} and references therein.}

It is nevertheless worth noting that the research direction {outlined above} mostly originated in questions related to the homogeneous model in the infinite-volume limit -- a well-understood case from the {statistical physics} perspective. At the same time, it seems that the much richer setup of the \emph{layered model} -- first considered by McCoy--Wu and Au-Yang--McCoy in~\cite{McCoy-Wu-random-I,McCoy-Wu-random-II,McCoy-random-III,AuYang-McCoy-layered-I,AuYang-McCoy-layered-II}, see also~\cite[Sections~3.1,3.2]{McCoy-1999} and~\cite{pelizolla1997} for historical comments -- did not attract much attention of mathematicians. Unfortunately, \emph{tour de force} computations summarized in the monograph~\cite{mccoy-wu-book}, are nowadays often considered (at least, in several mathematical sub-communities interested in 2D statistical {physics}) as being too technically involved to develop their analysis further. Certainly, this is an abnormal situation and by writing this paper we hope to bring the attention to this `layered' setup, targeting not only probabilists but also the {spectral theory/orthogonal polynomials} community. In the mathematical physics literature, the interest to the layered Ising model also reappeared recently; e.g. see~\cite{au-yang-13},~\cite{CGG-19-MWmodel} and references therein.

Our paper should not be considered as a `39999th solution of the Ising model'. On the contrary, the methods we use can be viewed as a simplification of the classical ones in presence of the translation and reflection symmetry in the direction orthogonal to the line connecting spins under consideration. Comparing to~\cite{mccoy-wu-book}, this simplification {(which was first presented in~\cite[Section~3]{chelkak-ecm2016} based upon an early version of this paper)} comes from the fact that we use the Kadanoff--Ceva lattice instead of the Onsager (or Fisher) one and, more importantly, work directly with \emph{orthogonal polynomials} instead of Toeplitz determinants. Though such details are not vital in the homogeneous case, this allows us to perform computations for a general `zig-zag layered' model in a transparent way (see Theorem~\ref{thm:layered}); in the latter case, the polynomials are orthogonal with respect to a certain measure on the segment~$[0,1]$ constructed out of a given sequence of interaction constants.

It is worth mentioning that {the simplification discussed above} manifests itself even in the homogeneous setup since we always deal with \emph{real weights}, the simplest possible framework of the OPUC/OPRL theory. From the perspective of the `free fermion algebra' solution~\cite{schultz-mattis-lieb} of the planar Ising model, our derivations can be viewed as its translation to the language of discrete fermionic observables, see~\cite{hongler-kytola-zahabi} for a discussion of such a correspondence. The latter viewpoint was advertised by Smirnov in his celebrated work on the critical Ising model (e.g., see~\cite{smirnov-duminil-lectures} and references therein). We refer the interested reader to~\cite[Section~3]{chelkak-cimasoni-kassel} for a discussion of equivalences between various combinatorial formalisms used to study the planar Ising model, see also~\cite{mercat-CMP} and~\cite[Section~3.2]{chelkak-smirnov-12}. In this paper we also want to make a link between discrete complex analysis techniques and classical computations more transparent; similar ideas are applied to the quantum 1d Ising model in~\cite{li-mahfouf}.

Before formulating our main result -- Theorem~\ref{thm:layered} -- for the layered Ising model, let us briefly mention the list of questions that we discuss along the way in the homogeneous setup:
\begin{itemize}
\item Kaufman--Onsager--Yang theorem on the spontaneous magnetization below criticality: Theorem~\ref{thm:KOY}, cf.~\cite[Section~X.4]{mccoy-wu-book};
\item McCoy--Wu theorem on the asymptotic behavior of the horizontal spin-spin correlations at criticality: Theorem~\ref{thm:crit-homogen}, cf.~\cite[Section~XI.5]{mccoy-wu-book};
\item the {wetting phase transition in the subcritical model caused by a boundary magnetic field~\cite{frohlich-pfister-87,pfister-velenik-96} (which was interpreted as a hysteresis effect in the earlier work~\cite{mccoy-wu-book}):} we discuss a setup similar to~\cite[Section~XIII]{mccoy-wu-book} in Section~\ref{subsect:wetting} and reduce the problem to the analysis of explicit Toeplitz+Hankel determinants, see Theorem~\ref{thm:wetting};
\item Wu's explicit formula for diagonal spin-spin correlations in the fully homogeneous critical Ising model (see~\cite[Section~XI.4]{mccoy-wu-book}). {This very short computation via Legendre polynomials already appeared in~\cite[Section~3]{chelkak-ecm2016}, we repeat it here to emphasize a direct link with similar formulas for the magnetization in the zig-zag half-plane.} Note that we were unable to find neither Theorem~\ref{thm:Mm-homo-even} nor the identity~\eqref{eq:MM=D-crit} in the literature.
\end{itemize}

We now move on to the layered Ising model in a half-plane. Instead of working in the original framework of Au-Yang, McCoy and Wu {who considered the layered Ising model in a discrete half-plane with \emph{straight} boundary and translation invariant interaction constants,} 
we slightly simplify the setup by working in the left half-plane of the $\frac{\pi}{4}$-rotated square grid, which we call the \emph{zig-zag half-plane}~$\mathbb H^\diamond$, {and require that the interaction constants assigned to all edges separating each pair of neighboring columns are the same;} see Fig.~\ref{fig:GlobalLayered}. We believe that such a simplification does not change key features of the problem, at the same it allows us to obtain more transparent results in full generality. We are mostly interested in making our main result -- Theorem~\ref{thm:layered} -- easily accessible to the \emph{mathematical} community interested in orthogonal polynomials rather than in discussing the \emph{physics} behind the problem.
It is worth emphasizing that Theorem~\ref{thm:layered} does \emph{not} express~$M_m$ as a Toeplitz determinant. Nevertheless, we believe that the formula~\eqref{eq:Mm-layered-UJH} is amenable for the asymptotic analysis and is of interest from the mathematical perspective.

The (half-)infinite volume limit of the Ising model on~$\mathbb H^\diamond$ is defined as a limit of probability measures on an increasing sequence of finite domains exhausting~$\mathbb H^\diamond$, with~`$+$' boundary conditions at the right-most column~$\mathrm{C}_0$ and at infinity. All interaction parameters between the columns~$\mathrm{C}_{p-1}$ and~$\mathrm{C}_p$ are assumed to be the same and equal to~$x_p=\exp[-2\beta J_p]=\tan\tfrac{1}{2}\theta_p$, where~$\theta_p\in (0,\frac{1}{2}\pi)$ can be viewed as a convenient parametrization of~$\beta J_p$, see Section~\ref{subsect:contours} for more details. Let
\begin{equation}
\label{eq:Mm-def}
M_m=M_m(\theta_1,\theta_2,\dots)\ :=\ \mathbb E^+_{\mathbb H^\diamond}[\,\sigma_{(-2m-\frac{1}{2},0)}\,]
\end{equation}
be the magnetization in the~$(2m)$-th column (the analysis for odd columns can be done similarly). Denote
\begin{equation}
\label{eq:Deven-def}
D_{\mathrm{even}}:= i\left[\begin{array}{cccc}
\cos\theta_1\cos\theta_2 & 0 & 0 & \dots \\
-\sin\theta_2\sin\theta_3 & \cos\theta_3\cos\theta_4 & 0 & \dots \\
0 & -\sin\theta_4\sin\theta_5 & \cos\theta_5\cos\theta_6 & \dots \\
\dots & \dots & \dots &  \dots\end{array}\right]
\end{equation}
and let
\begin{equation}
\label{eq:polar}
D_\mathrm{even}^*\ =\ U_\mathrm{even}{S_\mathrm{even},\qquad S_\mathrm{even}=}(D^{\phantom{*}}_\mathrm{even}D_\mathrm{even}^*)^{1/2}
\end{equation}
be the polar decomposition of the operator~$D_\mathrm{even}^*$, see also Remark~\ref{rem:U-as-Hilbert} for another interpretation of the (partial) isometry~$U_\mathrm{even}$\,. Further, denote $J:=D^{\phantom{*}}_\mathrm{even}D^*_\mathrm{even}$. A~straightforward computation shows that
\begin{equation}\label{eq:J-def}
J=\left[\begin{array}{cccc}b_1 & -a_1 & 0 & \dots\\ -a_1 & b_2 & -a_2 & \dots \\ 0 & -a_2 & b_3 & \dots \\ \dots & \dots & \dots & \dots \end{array}\right]\quad \begin{array}{l} b_k\,=\,\cos^2\theta_{2k-1}\cos^2\theta_{2k}\\
\phantom{b_k\,=\,}+\sin^2\theta_{2k-2}\sin^2\theta_{2k-1}\,,\\[4pt] a_k\,=\, \cos\theta_{2k-1}\cos\theta_{2k}\sin\theta_{2k}\sin\theta_{2k+1}\,,\end{array}
\end{equation}
where~$\theta_0:=0$ and~$b_1=\cos^2\theta_1\cos^2\theta_2$. Let~$\nu_J$ be the spectral measure of~$J$ associated with the first basis vector. It is easy to see that~$0\le J\le 1$ and thus \mbox{$\supp \nu_J\in [0,1]$}. Given a measure~$\mu$ on~$[0,1]$, let
$\textstyle \mathrm{H}_m[\,\mu\,]\ :=\ \det [\,\int_0^1\lambda^{p+q}\mu(d\lambda)\,]_{p,q=0}^{m-1}$
be the~$m$-th Hankel determinant composed from the moments of this measure. Denote by $P_m$ the orthogonal projector on the space of first $m$ coordinates of $\ell^2$.

\begin{theo}\label{thm:layered} For all~$\theta_1,\theta_2,\ldots \in(0,\frac{\pi}{2})$ and~$m\ge 1$, we have
\begin{align}
\label{eq:Mm-layered-UJH}
M_m 
\ &=\ |\det P_mU_\mathrm{even}P_m|
\ =\ \frac{\det P_m J^{1/2}P_m}{\prod_{k=1}^{2m}\cos\theta_k}\ =
\ \frac{\mathrm{H}_m[\lambda^{1/2}\nu_J]}
{(\,\mathrm{H}_m[\nu_J]\cdot\mathrm{H}_m[\lambda\nu_J]\,)^{1/2}}\,,
\end{align}
where~$U_\mathrm{even}$ is the (partial) isometry factor in the polar decomposition~\eqref{eq:polar}, the Jacobi matrix~$J\!=\!D_\mathrm{even}^{\vphantom{*}}D_\mathrm{even}^*$ is given by~\eqref{eq:J-def}, and $\nu_J$ is the spectral measure of~$J$.
\end{theo}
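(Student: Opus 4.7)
I would split the proof into one probabilistic step and two purely algebraic ones. For the probabilistic step, the plan is to use the Kadanoff--Ceva formalism to write $M_m$ as a Pfaffian of free-fermion two-point functions on $\mathbb{H}^\diamond$ with `$+$' boundary conditions at $\mathrm{C}_0$ and at infinity. Exploiting the vertical translation invariance of $\mathbb{H}^\diamond$, I would Fourier-diagonalise the fermion propagator against a spectral parameter $\lambda \in [0,1]$; for each fixed $\lambda$ the propagator decomposes into a product of $2\times 2$ transfer matrices labelled by the $\theta_p$. Collecting the contributions of the even columns $\mathrm{C}_0, \mathrm{C}_2, \ldots, \mathrm{C}_{2m}$ and integrating out the odd ones should produce exactly the bi-diagonal operator $D_{\mathrm{even}}$ of~\eqref{eq:Deven-def}, together with the spectral object $J = D_{\mathrm{even}} D_{\mathrm{even}}^*$ and the emerging measure $\nu_J$. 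The `$+$' boundary condition at $\mathrm{C}_0$ should isolate the partial-isometry factor of the polar decomposition~\eqref{eq:polar}, yielding the first expression $M_m = |\det P_m U_{\mathrm{even}} P_m|$.

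\textbf{Algebraic identities.} Because $D_{\mathrm{even}}$ is lower bi-diagonal, $P_m D_{\mathrm{even}}(I-P_m)=0$; consequently $P_m D_{\mathrm{even}} X P_m = (P_m D_{\mathrm{even}} P_m)(P_m X P_m)$ for every bounded operator $X$, while $P_m D_{\mathrm{even}} P_m$ is itself triangular with $\det P_m D_{\mathrm{even}} P_m = i^m \prod_{k=1}^{2m}\cos\theta_k$. Taking adjoints in~\eqref{eq:polar} gives $D_{\mathrm{even}} = J^{1/2} U_{\mathrm{even}}^*$, and since $U_{\mathrm{even}}^* U_{\mathrm{even}}$ is the projection onto $\overline{\mathrm{Range}(J^{1/2})}$, whose orthogonal complement is $\mathrm{Ker}(J^{1/2})$, one gets $D_{\mathrm{even}} U_{\mathrm{even}} = J^{1/2}$. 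Applying the block identity with $X = U_{\mathrm{even}}$ then yields $\det P_m J^{1/2} P_m = \det P_m D_{\mathrm{even}} P_m \cdot \det P_m U_{\mathrm{even}} P_m$, and taking absolute values gives the second equality. For the Hankel expression, I would use the standard unitary identification of $\ell^2$ with $L^2([0,1],\nu_J)$ carrying $e_k$ to the orthonormal polynomial $p_{k-1}$ of $\nu_J$: writing $p_n(\lambda)=\sum_{j\le n} c_{n,j}\lambda^j$ and $C=(c_{n,j})_{n,j=0}^{m-1}$, a routine computation yields $\det[\langle e_i,f(J)e_j\rangle]_{i,j=1}^m = (\det C)^2\,\mathrm{H}_m[f\,d\nu_J]$. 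Taking $f\equiv 1$ identifies $(\det C)^2 = 1/\mathrm{H}_m[\nu_J]$, and then specialising to $f=\lambda^{1/2}$ and $f=\lambda$, together with the already computed $\det P_m J P_m = \prod_{k=1}^{2m}\cos^2\theta_k$, reassembles the third formula.

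\textbf{Main obstacle.} The genuine difficulty is the probabilistic step: setting up the correct Kadanoff--Ceva Pfaffian on the zig-zag half-plane, arranging the Fourier reduction so that the $\theta_p$-transfer matrices assemble into $D_{\mathrm{even}}$ on the nose (rather than a sign-twisted or reindexed variant), and verifying that the `$+$' boundary condition at $\mathrm{C}_0$ genuinely produces the isometry factor $U_{\mathrm{even}}$ rather than $D_{\mathrm{even}}^*$ itself or a Cauchy-type kernel. Once this identification is secured, the rest is block-triangular algebra together with the classical dictionary between Jacobi matrices and orthogonal polynomials on $[0,1]$.
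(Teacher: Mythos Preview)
Your algebraic identities (the second and third equalities in~\eqref{eq:Mm-layered-UJH}) are correct and essentially identical to the paper's argument: the block-triangularity $P_m D_{\mathrm{even}} = P_m D_{\mathrm{even}} P_m$, the relation $D_{\mathrm{even}} U_{\mathrm{even}} = J^{1/2}$, and the Jacobi/Hankel dictionary via the spectral representation are used verbatim.

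For the probabilistic step, however, the paper takes a different and more direct route than your Pfaffian proposal. It does \emph{not} write $M_m$ as a Pfaffian of two-point functions, nor does it assemble $D_{\mathrm{even}}$ from column transfer matrices after a Fourier reduction. Instead it considers a \emph{single} Kadanoff--Ceva observable $X_{[\mathbf v]}$ with one branching at $\mathbf v=(-2m-\tfrac32,0)$, restricts it to west and east corners on the real line to obtain vectors $H_0,H^\circ_0\in\ell^2$, and proves three things: (i)~by the $s\mapsto -s$ spinor symmetry, $H_0$ is supported on the first $m{+}1$ coordinates with last entry $M_m$, while $H^\circ_0$ vanishes on those coordinates and has next entry $iM_{m+1}$; (ii)~a uniqueness lemma (an optional-stopping argument for the random walk associated with the harmonicity relations) shows that bounded solutions of the layered Cauchy--Riemann system are determined by the single value at $\mathbf v$; (iii)~an explicit construction of the unique bounded solution forces $H^\circ_0=U_{\mathrm{even}}H_0$ --- so $U_{\mathrm{even}}$ enters as the \emph{discrete Hilbert transform} of the system rather than from a boundary condition on a Pfaffian kernel. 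Combining (i) and (iii) with $D_{\mathrm{even}}U_{\mathrm{even}}=J^{1/2}$ yields
\[
P_{m+1}J^{1/2}P_{m+1}:\ [\,*\ \cdots\ *\ M_m\,]^\top\ \longmapsto\ \cos\theta_{2m+1}\cos\theta_{2m+2}\,[\,0\ \cdots\ 0\ M_{m+1}\,]^\top,
\]
which identifies the Gram--Schmidt norm $\langle e'_{m+1},J^{1/2}e'_{m+1}\rangle$ with $\cos\theta_{2m+1}\cos\theta_{2m+2}\cdot M_{m+1}/M_m$; the determinant $\det P_m J^{1/2}P_m$ then telescopes from $M_0=1$. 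Your Pfaffian approach (presumably via a string of $2m$ corner fermions linking the `$+$' boundary to the spin) may well be viable, but the identification of the resulting $2m\times 2m$ Pfaffian with $|\det P_m U_{\mathrm{even}}P_m|$ is exactly the step you flag as unresolved, and it is not obvious how the half-plane propagator would collapse to the finite-rank object $P_mU_{\mathrm{even}}P_m$; the paper's recursive argument sidesteps this entirely by computing one ratio $M_{m+1}/M_m$ at a time.
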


\begin{rem}
\label{rem:J-homo}
Assume that~$\theta_k=\theta$ for all~$k\ge 1$, i.e., that we work with the fully homogeneous model. One can easily see that
\[
\mathrm{supp}\,\nu_J\;=\;[\cos^2(2\theta)\,,1]\ \ \text{if}\ \theta\le \tfrac{\pi}{4}\quad \text{while}\quad \mathrm{supp}\,\nu_J\;=\;\{0\}\cup [\cos^2(2\theta)\,,1]\ \ \text{if}\ \ \theta>\tfrac{\pi}{4}.
\]
In particular, this clearly marks the critical value~$\theta_\mathrm{crit}=\frac{\pi}{4}$ of the interaction parameter. Moreover, in the supercritical regime~$\theta>\theta_\mathrm{crit}$, the existence of an exponentially decaying eigenfunction~$\psi^\circ_k=(\cot\theta)^{2k}$, $\psi^\circ\in \mathrm{Ker} D_\mathrm{even}^*$, directly leads to the exponential decay of the truncated determinants~$|\det P_mU_{\mathrm{even}}P_m|$.
\end{rem}

\begin{rem}
\label{rem:IDS-periodic}
Assume now that $\theta_{k+2n}=\theta_k$ for all~$k\ge 1$ and some~$n\ge 1$. In this case, the criticality condition reads as~$\prod_{k=1}^{2n}\tan\theta_k=1$, see Lemma~\ref{lem:periodic-crit} below. {This condition is equivalent to the fact} that the continuous spectrum of~$J$ begins at~$0$. {Moreover (see Section~\ref{sub:IDS=}), in this setup the integrated density of states of the periodic Jacobi matrix~$J$} behaves like~$C_J\cdot \pi^{-1}\sqrt{\lambda}$ as~$\lambda\to 0$, {where}
\begin{equation}
\label{eq:IDS=}
\textstyle C_J\ =\ \left[\,n^{-2}\sum_{k=1}^n (\psi^\circ_k)^2\cdot \sum_{k=1}^n(a_k\psi^\circ_k\psi^\circ_{k+1})^{-1}\,\right]^{1/2}
\end{equation}
{and}~$\psi^\circ_k$ denotes the periodic vector solving the equation~$J\psi^\circ=0$. In Section~\ref{subsect:s-embeddings} we show that the quantity~\eqref{eq:IDS=} also admits a clear {geometric} interpretation in the context of the so-called \emph{s-embeddings} of planar Ising models, see~\eqref{eq:BS=CJ} and a discussion following that identity.
\end{rem}

It is clear that the spectral properties of the matrix~$J$ (which can be viewed as an effective propagator in the direction orthogonal to the boundary of~$\mathbb H^\diamond$) are directly related to the behavior of the magnetization~$M_m$ as~$m\to \infty$. Nevertheless, we are not aware of asymptotical results for~\eqref{eq:Mm-layered-UJH} in the general case, especially when~$J$ has a \emph{singular continuous spectrum}.
This leads to the following question:
\begin{itemize}
\item to find necessary and sufficient conditions on the measure~$\nu_J$ that imply the~asymptotics (a)~$\liminf_{m\to\infty} M_m =0$ (b) $\limsup_{m\to\infty}M_m=0$ in~\eqref{eq:Mm-layered-UJH}.
\end{itemize}
We believe that an answer to this question should shed more light, in particular, on the \emph{random} layered 2D Ising model. Moreover, it would be very interesting
\begin{itemize}
\item to understand the dynamics of the measure~$\nu_J$ when the inverse temperature~$\beta$ varies from~$\infty$ to~$0$ and hence all \mbox{$\theta_p=2\arctan\exp[-2\beta J_p]$} increase from~$0$ to~$1$ in a coherent way.
\end{itemize}
Classically, this dynamics should lead to the Griffiths--McCoy phase transition for i.i.d. interaction parameters between the columns and also could give rise to less known effects in the dependent case. As already mentioned above, one of the goals of this paper is to bring the attention of the probability and orthogonal polynomials communities to these questions.

The rest of the paper is organized as follows. In Section~\ref{sc:combinatorics} we review the Kadanoff--Ceva formalism of spin-disorder operators in the planar Ising model. In Section~\ref{sc:homogeneous} we illustrate our approach by giving streamlined proofs of two classical results due to Kaufman--Onsager--Yang and McCoy--Wu, respectively: Theorem~\ref{thm:KOY} and Theorem~\ref{thm:crit-homogen}; we believe that this material should help the reader to position this proof into the classical Ising model landscape. We prove our main result -- Theorem~\ref{thm:layered} -- in Section~\ref{sc:layered}. In Section~\ref{sc:geometry} we briefly discuss the geometric interpretation of our results via s-embeddings of planar Ising models, a generalization of isoradial embeddings of the critical Baxter's Z-invariant model introduced in~\cite{chelkak-icm2018,chelkak-semb}.
The appendix is devoted to the explicit analysis of diagonal correlations (Wu's formula) and of the zig-zag half-plane magnetization at criticality via Legendre polynomials.

\subsection*{Acknowledgements} {We are grateful to Yvan Velenik for bringing our attention to the papers~\cite{frohlich-pfister-87,pfister-velenik-96} on the wetting phase transition in the subcritical model, which was mentioned under the name hysteresis effect in the first version of our paper following the interpretation given in~\cite[Section~XIII]{mccoy-wu-book}.} {We also thank Jacques H.H.~Perk for useful comments on the immense literature on the Ising model correlations.} Several parts of this paper were known and reported since 2012/2013 but caused a very limited interest, we are grateful to colleagues who encouraged us to carry this project out.
Dmitry Chelkak would like to thank Alexander Its, Igor Krasovsky, Leonid Parnovski and Alexander Pushnitski for helpful discussions. The research of Dmitry Chelkak and R\'emy {Mahfouf} was partially supported by the ANR-18-CE40-0033 project DIMERS.
Cl\'ement Hongler would like to acknowledge the support of the ERC SG \mbox{CONSTAMIS}, the
NCCR SwissMAP, the Blavatnik Family Foundation and the Latsis Foundation. We also thank Jhih-Huang Li and the anonymous referees for carefully reading earlier versions of this manuscript.

\section{Combinatorics of the planar Ising model}\label{sc:combinatorics}

In order to keep the presentation self-contained, in this section we collect basic definitions and properties of the planar Ising model observables. Below we adopt the notation from~\cite{chelkak-icm2018,chelkak-semb,chelkak-hongler-izyurov-21}, the interested reader is also referred to~\cite{chelkak-cimasoni-kassel} or~\cite{hongler-kytola-viklund} for more details (note however that these papers use slightly different definitions). Even though we discuss the spin-disorder observables in the full generality ($m$ spins and $n$ disorders), below we are interested in the situations $m=n=2$ (Section~\ref{sc:homogeneous} and Appendix) and~$m=1$, $n=2$ (Section~\ref{sc:layered} and Appendix) only.

\subsection{Definition and domain wall representation}
\label{subsect:contours}
Let~$G$ be a finite connected \emph{planar} graph embedded into the plane such that all its edges are straight segments.
We denote by $G^{\bullet}$ the set of its vertices and by $G^{\circ} $ the set of its faces (identified with their centers).
The (ferromagnetic) \emph{nearest-neighbor} Lenz-Ising model on the graph \emph{dual} to~$G$ is a random assignment of spins~$\sigma_u\in\{\pm 1\}$ to the~\emph{faces} $u\in G^\circ$ such that the probability of a spin configuration~$\sigma\!=\!(\sigma_u)$ is proportional to
\[
\textstyle \mathbb{P}_G[\,\sigma\,]\propto\exp\,[\,\beta \sum_{u\sim w} J_e\sigma_u\sigma_w\,]\,,\qquad e=(uw)^*,
\]
where a positive parameter~$\beta=1/kT$ is called the \emph{inverse temperature}, the sum is taken over all pairs of adjacent faces~$u,w$ (equivalently, edges~$e$) of~$G$, and~$J=(J_e)$ is a collection of positive \emph{interaction constants}, indexed by the edges of~$G$. Below we use the following \emph{parametrization} of~$J_e$:
\begin{equation}\label{eq:parametrization-model}
\textstyle x_e=\tan\frac{1}{2}\theta_e:=\exp[-2\beta J_e].
\end{equation}
Note that the quantities~$x_e\in (0,1)$ and~$\theta_e:=2\arctan x_e\in (0,\frac{1}{2}\pi)$ have the same monotonicity as the temperature~$\beta^{-1}$.

We let the spin~$\sigma_{\mathrm{out}}$ of the outermost face of~$G$ be fixed to~$+1$, in other words we impose \emph{`$+$' boundary conditions}. In this case, the~\emph{domain wall representation} (also known as the \emph{low-temperature expansion}) of the Ising model is a~$1$-to-$1$ correspondence between spin configurations and even subgraphs~$P$ of~$G$: given a spin configuration,~$P$ consists of all edges that separate pairs of disaligned spins. One can consider a decomposition (not unique in general) of~$P$ into a collection of \emph{non-intersecting and non-self-intersecting loops}. The above correspondence implies that
\[
\textstyle \mathbb{E}_G[\sigma_{u_1}\dots\sigma_{u_m}]\;=\; \mathcal{Z}_G^{-1}\sum_{P\in\mathcal{E}_G}x(P)(-1)^{\mathrm{loops}_{[u_1,...,u_m]}(P)}
\]
for~$u_1,\dots,u_m\in G^\circ$, where~$\mathcal{E}_G$ denotes the set of all even subgraphs of~$G$,
\[
\textstyle \mathcal{Z}_G:=\sum_{P\in\mathcal{E}_G} x(P),\qquad x(P):=\prod_{e\in P} x_e,
\]
and $\mathrm{loops}_{[u_1,...,u_m]}(P)$ is the number (always well defined modulo~$2$) of loops in~$P$ surrounding an odd number of faces~$u_1,...,u_m$. Up to a factor~$\exp[\beta\sum_{e\in \mathcal{E}_G}J_e]$, the quantity~$\mathcal{Z}_G$ is the~\emph{partition function} of the Ising model on~$G^\circ$.

\subsection{Disorder insertions}
\label{subsect:spin-disorder} Following Kadanoff and Ceva~\cite{kadanoff-ceva-71}, given an even number of vertices
$v_1,\dots,v_{n}\in G^\bullet$ we define the correlation of \emph{disorders}~$\mu_{v_1},\dots,\mu_{v_{n}}$
\begin{equation}
\label{eq:disorders-def}
\textstyle \Corr{G}{\mu_{v_1}\dots\mu_{v_{n}}}:= \mathcal{Z}_G^{-1}\cdot \mathcal{Z}_G^{[v_1,...,v_{n}]}\,,\quad \mathcal{Z}_G^{[v_1,...,v_{n}]}:=\sum_{P\in\mathcal{E}_G(v_1,...,v_{n})}x(P)\, ,
\end{equation}
where~$\mathcal{E}_G(v_1,...,v_{n})$ denotes the set of subgraphs $P$ of~$G$ such that each of the vertices~$v_1,\dots,v_{n}$ has an odd degree in~$P$ while all other vertices have an even degree. Probabilistically, one can easily see that
\begin{equation}
\label{eq:disorders-corr-proba}
\textstyle \Corr{G}{\mu_{v_1}\dots\mu_{v_{n}}}=\mathbb E_G\big[\exp[-2\beta\sum_{(uw)^*\in P_0(v_1,\dots,v_{n})} J_e\sigma_u\sigma_w]\,\big],
\end{equation}
where~$P_0(v_1,\dots,v_{n})$ is a fixed collection of edge-disjoint paths matching in pairs the vertices~$v_1,\dots,v_{n}$; note that the right-and side does not depend on the choice of these paths. The \emph{Kramers--Wannier duality} implies {(e.g., see~\cite{kadanoff-ceva-71})} that
\begin{equation}
\label{eq:disorders=dual-spins}
\textstyle \Corr{G}{\mu_{v_1}\dots\mu_{v_{n}}}=\mathbb E^\star_{G^\bullet}\big[\sigma^\bullet_{v_1}\dots\sigma^\bullet_{v_n}\,\big],
\end{equation}
where the expectation in the right-hand side is taken with respect to the Ising model on \emph{vertices} of~$G$, with dual weights~$x_{e^*}:=\tan\tfrac{1}{2}(\frac{\pi}{2}-\theta_e)$ and free boundary conditions. Indeed,~\eqref{eq:disorders-def} is nothing but the \emph{high-temperature expansion} of~\eqref{eq:disorders=dual-spins}.

Similarly to~$\mathcal{Z}_G$, one can interpret~$\mathcal{Z}_G^{[v_1,...,v_{n}]}$ as the {low-temperature (domain walls) expansion of} the partition function of the Ising model defined on the faces of a \emph{double cover}~$G^{[v_1,...,v_{n}]}$ of the graph~$G$ that branches over~$v_1,\dots,v_{n}$, with the following \emph{spin-flip symmetry constraint}: we require $\sigma_{u}\sigma_{u^\star}=-1$ for any pair of faces of the double cover such that~$u$ and $u^\star$ lie over the same face in~$G$. Using this interpretation, we introduce mixed correlations
\begin{equation}
\label{eq:disorders-spins-mixed-def}
\Corr{G}{\mu_{v_1}\dots\mu_{v_{n}}\sigma_{u_1}\dots\sigma_{u_m}}\;:=\; \Corr{G}{\mu_{v_1}\dots\mu_{v_{n}}}\cdot \mathbb{E}_{G^{[v_1,...,v_{n}]}}[\sigma_{u_1}\dots\sigma_{u_m}]\, ,
\end{equation}
where~$u_1,\dots,u_m$ should be understood as faces of the double cover~$G^{[v_1,...,v_{n}]}$. Similarly to~\eqref{eq:disorders-corr-proba} one can easily give a probabilistic interpretation of these quantities in terms of the original Ising model on~$G$. Nevertheless, we prefer to speak about the Ising model on~$G^{[v_1,...,v_{n}]}$ as this approach is more invariant and does not require to fix an {arbitrary} choice of the disorder lines~$P_0(v_1,\dots,v_{n})$.

By definition of the Ising model on~$G^{[v_1,...,v_{n}]}$, the correlation~\eqref{eq:disorders-spins-mixed-def} fulfills the sign-flip symmetry constraint between the sheets of the double cover.
When considered as a function of both vertices~$v_p$ and faces~$u_q$, this correlation is defined on a double cover 
of~$(G^\bullet)^{n}\times(G^\circ)^m$ and changes sign each time one of the vertices~$v_p\in G^\bullet$ turns around one of the vertices~$u_q\in G^\circ$ (or vice versa).
We call \emph{spinors} functions defined on double covers that obey such a sign-flip property.

\subsection{Fermions and the propagation equation} We need an additional notation. {Let~$\Lambda(G)$ be a planar bipartite graph (the so-called \emph{quad-graph}) whose vertices are $G^\bullet\cup G^\circ$ and the set of (degree four) faces~$\diamondsuit(G)$ is in a $1$-to-$1$ correspondence with the set of edges of~$G$; in other words, the edges of $\Lambda(G)$ connect a vertex $v\in G^\bullet$ with all adjacent vertices $u\in G^\circ$ of the dual graph and vice versa.} Let~$\Upsilon(G)$ denote the \emph{medial} graph of~$\Lambda(G)$, 
whose vertices are in a $1$-to-$1$ correspondence with edges~$(vu)$ of~$\Lambda(G)$ and are also called \emph{corners} of~$G$, while the faces of~$\Upsilon(G)$ correspond either to vertices of~$G^\bullet$ or to vertices of~$G^\circ$ or to quads from~$\diamondsuit(G)$. We denote by~$\Upsilon^\times(G)$ a double cover of {the graph}~$\Upsilon(G)$ that branches around each of its faces (e.g., see~\cite[Fig.~3A]{chelkak-semb} or \cite[Fig.~27]{mercat-CMP},~\cite[Fig.~6]{chelkak-smirnov-12}). For a corner $c=(v(c)u(c))\in \Upsilon^\times(G)$ (with $u(c)\in G^\circ $ and $v(c)\in G^\bullet$), let
\begin{equation}
\label{eq:Dirac_spinor}
\eta_c:=i\cdot\exp[-\tfrac{i}{2}\arg(v(c)-u(c))],
\end{equation}
where the global prefactor~$i$ is chosen for later convenience. Though a priori the sign in the expression~\eqref{eq:Dirac_spinor} is ambiguous, it can be fixed so that~$\eta_c$ is a spinor on~$\Upsilon^\times(G)$, called the \emph{Dirac spinor}, by requiring that the values of~$\eta_c$ at corners $c$ surrounding each face of $\Upsilon(G)$ are defined in a `continuous' way. (In particular, this local definition implies the spinor property of~$\eta_c$ on~$\Upsilon^\times(G)$.)

Given $c\in \Upsilon^\times(G)$, one defines the \emph{Kadanoff--Ceva fermion} as $\chi_c:=\mu_{v(c)}\sigma_{u(c)}$.
More accurately, we set
\begin{equation}\label{eq:chi:=mu_sigma_def}
X_{\varpi}(c)
:= \Corr{G}{\mu_{v(c)}\mu_{v_1}\dots\mu_{v_{n-1}}\sigma_{u(c)}\sigma_{u_1}\dots\sigma_{u_{m-1}}},
\end{equation}
for $\varpi:=(v_1,\dots,v_{n-1},u_1,\dots,u_{m-1})\in (G^\bullet)^{n-1}\times(G^\circ)^{m-1}$. Let~$\Upsilon^\times_\varpi(G)$ denote a double cover of~$\Upsilon(G)$ that branches over each of {the faces of~$\Upsilon(G)$} \emph{except} {those corresponding to} the points from~$\varpi$.
The preceding discussion of mixed spin-disorder correlations ensures that $X_{\varpi}$ is a spinor on~$\Upsilon^\times_\varpi(G)$. Finally, let
\begin{equation}
\label{eq:psi:=eta_chi_def}
\Psi_{\varpi}(c):=\eta_c X_{\varpi}(c)
\end{equation}
where~$\eta_c$ is defined by~\eqref{eq:Dirac_spinor}. The function $\Psi_{\varpi}$ locally does \emph{not} branch (the signs changes of~$\chi_c$ and~$\eta_c$ cancel each other). More precisely, $\Psi_\varpi$ is a spinor on the double cover~$\Upsilon_\varpi(G)$ of $\Upsilon(G)$ that branches \emph{only} over points from~$\varpi$: it changes the sign only when~$c$ turns around one of the vertices $v_p$ or the faces~$u_q$.

We now move on to the crucial three-term equation for the correlations~\eqref{eq:chi:=mu_sigma_def}, called the \emph{propagation equation} for Kadanoff--Ceva fermions on~$\Upsilon^\times(G)$, see~\cite{Perk-81dubna,dotsenko-dotsenko,mercat-CMP} or~\cite[Section~3.5]{chelkak-cimasoni-kassel} for more details. For a quad~$z_e\in\diamondsuit(G)$ corresponding to an edge~$e$ of~$G$, we denote its vertices by~\mbox{$v_0(z_e)\in G^\bullet$}, \mbox{$u_0(z_e)\in G^\circ$}, \mbox{$v_1(z_e)\in G^\bullet$}, and \mbox{$u_1(z_e)\in G^\circ$}, listed in the counterclockwise order. Further, for $p,q \in \{ 0 , 1 \}$, let \mbox{$c_{p,q}(z_e):=(v_p(z_e)u_q(z_e))$}. 
The following identity holds for all triples of consecutive (on~$\Upsilon^\times_\varpi(G)$) corners $c_{p,1-q}(z_e)$, $c_{p,q}(z_e)$ and $c_{1-p,q}(z_e)$ surrounding the edge~$e$:
\begin{equation}
\label{eq:propagation_on_corners}
X_\varpi(c_{p,q})=X_\varpi(c_{p,1-q})\cos\theta_e+X_\varpi(c_{1-p,q})\sin\theta_e\,,
\end{equation}
where~$\theta_e$ stands for the parametrization~\eqref{eq:parametrization-model} of the Ising model weight~$x_e$ of~$e$.
In recent papers, the equation~\eqref{eq:propagation_on_corners} is often used in the context of rhombic lattices, in which case the parameter~$\theta_e$ admits a geometric interpretation (see Section~\ref{subsect:isoradial}), but in fact it does not rely upon a particular choice of an embedding (up to a homotopy) of~$\diamondsuit(G)$ into~$\mathbb C$ provided that~$\theta_e$ is \emph{defined} by~\eqref{eq:parametrization-model}.

\begin{figure}
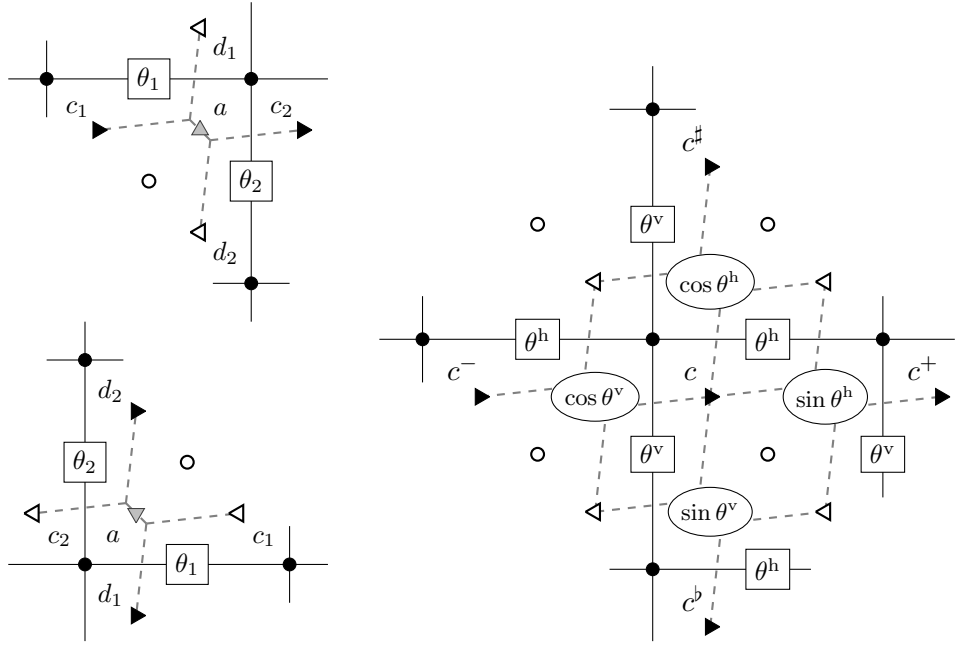
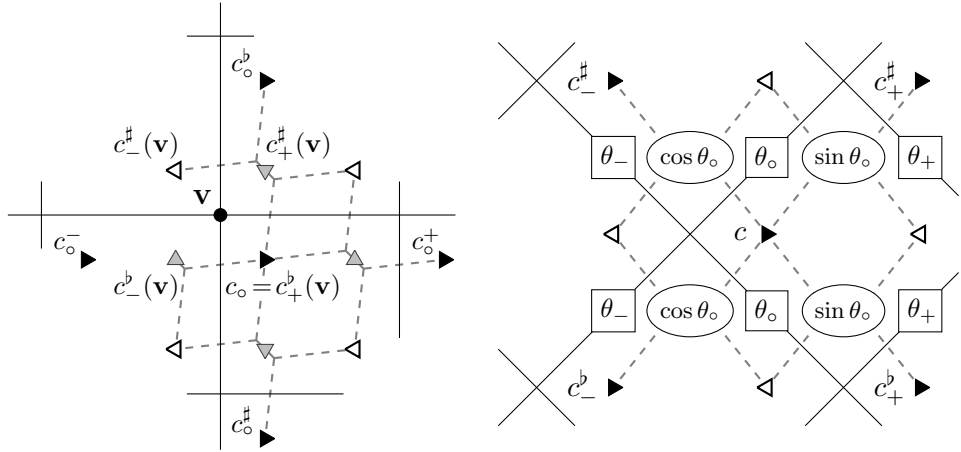

\centering{\begin{minipage}[b][][b]{0.36\textwidth}
\begin{tikzpicture}[scale=0.17]
\usetikzlibrary{shapes}
\input{Fig1A.txt}
\end{tikzpicture}

\bigskip

\textsc{(A)} The notation used in Proposition~\ref{prop:Cauchy-Riemann-general} (Cauchy--Riemann equations~\eqref{eq:Cauchy-Riemann-general}).
\end{minipage}
\hskip 0.05\textwidth
\begin{minipage}[b]{0.58\textwidth}
\hskip -0.05\textwidth
\begin{tikzpicture}[scale=0.17]
\usetikzlibrary{shapes}
\input{Fig1B.txt}
\end{tikzpicture}

\bigskip

\textsc{(B)} The notation used in Proposition~\ref{prop-massive-harmonicity} (massive harmonicity of fermionic observables in the homogeneous model away from the branchings).
\end{minipage}

\vskip 20pt

\begin{minipage}[b]{0.47\textwidth}
\begin{tikzpicture}[scale=0.17]
\usetikzlibrary{shapes}
\input{Fig1C.txt}
\end{tikzpicture}

\bigskip

\textsc{(C)} The notation used in the proof of Lemma~\ref{lem:laplacians} (the value $[\Delta^{(m)}X^\mathrm{sym}_{[\mathbf v,\mathbf u]}]$ near the branching point~$\mathbf v=(0,\frac{1}{4})$).

\end{minipage}
\hskip 0.05\textwidth
\begin{minipage}[b]{0.47\textwidth}
\hskip -0.03\textwidth\begin{tikzpicture}[scale=0.17]
\usetikzlibrary{shapes}
\input{Fig1D.txt}
\end{tikzpicture}
\bigskip

\textsc{(D)} The notation used in Proposition~\ref{prop-harmonicity-layered} (harmonicity-type identities in the zig-zag layered model).
\end{minipage}}
\caption{Local relations for Kadanoff--Ceva fermionic observables. {We indicate the four `types' of corners of (subgraphs of) the square grid by orienting and coloring the triangles depicting them.}}
\label{fig:LocalRelations}
\end{figure}

\subsection{Cauchy--Riemann and Laplacian-type identities on the square grid} From now on we assume that~$G$ is a subgraph of the regular square grid~$\mathbb Z^2\subset\mathbb C$. In this situation one can use~\eqref{eq:propagation_on_corners} to derive a version of discrete Cauchy--Riemann equations for the complex-valued observable~$\Psi_\varpi$ defined by~\eqref{eq:psi:=eta_chi_def}.

\begin{prop}\label{prop:Cauchy-Riemann-general}
Let~$c_1,d_1,c_2,d_2$ be corners of~$G$ located as in Fig.~\ref{fig:LocalRelations}A (and located on the same sheet of the double cover~$\Upsilon_\varpi(G)$). Let~$\theta_1,\theta_2$ be the interaction parameters assigned via~\eqref{eq:parametrization-model} to the edges~$e_1,e_2$. Then, the following identity holds:
\begin{equation}\label{eq:Cauchy-Riemann-general}
[\Psi_\varpi(c_2)\cos\theta_2-\Psi_\varpi(c_1)\sin\theta_1]\; =\; \pm i\cdot [\Psi_\varpi(d_2)\sin\theta_2-\Psi_\varpi(d_1)\cos\theta_1]\,,
\end{equation}
where the `$\pm$' sign is `$+$' if the square $(c_1d_2c_2d_1)$ is oriented counterclockwise (top picture in Fig.~\ref{fig:LocalRelations}A) and~`$-$' otherwise (bottom picture in Fig.~\ref{fig:LocalRelations}A).
\end{prop}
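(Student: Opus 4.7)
The identity~\eqref{eq:Cauchy-Riemann-general} is essentially the propagation equation~\eqref{eq:propagation_on_corners} recast in complex form, so the plan is to apply~\eqref{eq:propagation_on_corners} to the two quads of $\diamondsuit(G)$ adjacent to the central point of Fig.~\ref{fig:LocalRelations}A (those corresponding to the edges $e_1$ and $e_2$) and to rewrite the resulting scalar identities using the Dirac phases~$\eta_c$.

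First, I would apply~\eqref{eq:propagation_on_corners} to the quad of~$e_1$ and then to the quad of~$e_2$, obtaining two three-term linear relations among the values of~$X_\varpi$ at the corners of each quad, with coefficients $\cos\theta_1,\sin\theta_1$ and $\cos\theta_2,\sin\theta_2$ respectively. Since the two quads share a corner located at the central point of Fig.~\ref{fig:LocalRelations}A, eliminating the value of $X_\varpi$ at this shared corner between the two equations yields a single scalar identity purely among $X_\varpi(c_1), X_\varpi(d_1), X_\varpi(c_2), X_\varpi(d_2)$.

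Second, I would translate this scalar relation into~\eqref{eq:Cauchy-Riemann-general} via the substitution $\Psi_\varpi(c) = \eta_c X_\varpi(c)$. On the square grid, the vector $v(c) - u(c)$ points in one of the four diagonal directions, so $\eta_c$ takes four possible values (up to the spinor sign) which are related by successive multiplications by $e^{i\pi/4}$. Computing the ratios $\eta_{c_1}/\eta_{d_1}$, $\eta_{c_2}/\eta_{d_2}$, $\eta_{c_2}/\eta_{c_1}$ in the geometric configuration of Fig.~\ref{fig:LocalRelations}A, and using that by hypothesis all four corners lie on a single sheet of $\Upsilon_\varpi(G)$, one sees that these phase factors combine with the trigonometric coefficients precisely to produce the imaginary unit on the right-hand side of~\eqref{eq:Cauchy-Riemann-general}.

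The principal technical point lies in the sign bookkeeping of Step~2. I would verify that the two possible orientations of the quadrilateral $(c_1 d_2 c_2 d_1)$ correspond exactly to the two cases in which the common vertex of the two quads belongs to~$G^\bullet$ or to~$G^\circ$; exchanging these two cases swaps the roles of the $c$- and $d$-corners and conjugates the phase ratio, producing the~$\pm$ sign in~\eqref{eq:Cauchy-Riemann-general}. Aside from this geometric care with phases, no additional ingredient is required: the content of~\eqref{eq:Cauchy-Riemann-general} is nothing more than two instances of the propagation equation~\eqref{eq:propagation_on_corners} combined and expressed in complex notation.
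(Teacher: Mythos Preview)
Your proposal is correct and follows essentially the same route as the paper: write two propagation equations~\eqref{eq:propagation_on_corners} at the central corner~$a$ shared by the quads of~$e_1$ and~$e_2$, eliminate~$X_\varpi(a)$, and then multiply through by the Dirac phases, using that~$\eta_{c_1}=\eta_{c_2}$ and~$\eta_{d_1}=\eta_{d_2}$ differ from~$\eta_a$ by~$e^{\mp i\pi/4}$ and~$e^{\pm i\pi/4}$ respectively. One small geometric imprecision: the two quads share an \emph{edge} of~$\Lambda(G)$ (whose midpoint is the corner~$a$), not a single vertex, so your proposed dichotomy ``common vertex in~$G^\bullet$ versus~$G^\circ$'' is not quite the right way to distinguish the two orientations; the paper instead reads the sign directly from the direction of the phase rotation~$\eta_{c_j}/\eta_a$, which is cleaner.
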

\begin{proof} Let~$a\in \Upsilon^\times_\varpi(G)$ be the center of the square~$(c_1d_2c_2d_2)$ and let~$c_1,d_2,c_2,d_1$ be the neighbors of~$a$ on~$\Upsilon^\times_\varpi(G)$. Writing two propagation equations~\eqref{eq:propagation_on_corners} at~$a$ one gets the identity
\[
X_\varpi(c_2)\cos\theta_2+X_\varpi(d_2)\sin\theta_2 \;=\; X_\varpi(a) \;=\; X_\varpi(c_1)\sin\theta_1+X_\varpi(d_1)\cos\theta_1.
\]
Since~$\eta_{d_1}=\eta_{d_2}=e^{\pm i\frac{\pi}{4}}\eta_a$ (with the same choice of the sign: `$+$' for the left picture, `$-$' for the right one) and~$\eta_{c_1}=\eta_{c_2}=e^{\mp i\frac{\pi}{4}}\eta_a$, the result immediately follows.
\end{proof}

Below we often focus on the values of observables~$\Psi_\varpi$ or~$X_\varpi$ at corners $c\in\Upsilon(G)$ of one of four `types'; by a type of~$c$ we mean its geometric position inside the face of~$G\subset \mathbb Z^2$ to which $c$ belongs, see~Fig.~\ref{fig:LocalRelations}. For each type of corners, the values~$\eta_c$ are all the same and, moreover, the branching structure of $\Upsilon^\times_\varpi(G)$ restricted to this type of corners coincides with the one of~$\Upsilon_\varpi(G)$. In other words, $\Psi_\varpi$ and~$X_\varpi$ differ only by a global multiplicative constant on each of the four types of corners.

In this paper, we are interested in the following two setups:

-- \emph{homogeneous} model, in which all the parameters~$\theta_e$ corresponding to horizontal edges of~$\mathbb Z^2$ have the common value $\theta^{\mathrm{h}}$ (resp., $\theta^{\mathrm{v}}$ for vertical edges);

-- \emph{zig-zag layered} model on the $\tfrac{\pi}{4}$-rotated grid, in which all interaction constants between each pair of adjacent columns have the same value (see Fig.~\ref{fig:GlobalLayered}).

In both situations, one can use~\eqref{eq:Cauchy-Riemann-general} to derive a harmonicity-type identity for the values of~$X_\varpi$ (note however that this is not possible in the general case).

\begin{prop}\label{prop-massive-harmonicity}
In the homogeneous setup, assume that a corner~$c\in\Upsilon_\varpi(G)$ is not located near the branching, i.e., that neither~$v(c)$ nor~$u(c)$ are in~$\varpi$. Then, the observable~$X_\varpi$ satisfies the following equation at~$c$:
\[
X_\varpi(c)\; =\; \tfrac{1}{2}{\sin\theta^{\mathrm{h}}\cos\theta^{\mathrm{v}}}\!\cdot[X_\varpi(c^{+})\!+\!X_\varpi(c^{-})] + \tfrac{1}{2}{\cos\theta^{\mathrm{h}}\sin\theta^{\mathrm{v}}}\!\cdot[X_\varpi(c^{\sharp})\!+\!X_\varpi(c^{\flat})],
\]
where $c^{+},c^{\sharp},c^{-},c^{\flat}$ are the four nearby corners {having} the same type as~$c$, located at the east, north, west and south direction from~$c$, respectively {(see Fig.~\ref{fig:LocalRelations}B).}
\end{prop}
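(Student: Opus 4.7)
The strategy is to iterate the propagation equation \eqref{eq:propagation_on_corners} and to exploit the branching of $\Upsilon^\times_\varpi(G)$ around $v(c)$ and $u(c)$ (which are branching points of $\Upsilon^\times_\varpi$ precisely because $v(c),u(c)\notin\varpi$, by the hypothesis on~$c$). Since~$c$ belongs to exactly two quads of $\diamondsuit(G)$ (those whose edges of~$G$ are incident to~$v(c)$ and bound~$u(c)$), two applications of \eqref{eq:propagation_on_corners} give
\[
X_\varpi(c) \;=\; \cos\theta^\mathrm{v}\, X_\varpi(c_1)+\sin\theta^\mathrm{v}\, X_\varpi(c_2),\qquad X_\varpi(c) \;=\; \cos\theta^\mathrm{h}\, X_\varpi(c_4)+\sin\theta^\mathrm{h}\, X_\varpi(c_5),
\]
where $c_1,c_4$ are the two neighbors of $c$ in $\Upsilon(G)$ sharing $v(c)$ (reached across a vertical and horizontal edge respectively) and $c_2,c_5$ are the two neighbors sharing $u(c)$. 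All four intermediate corners have a type different from that of~$c$.

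I then apply \eqref{eq:propagation_on_corners} once more at each intermediate corner, using the second quad containing it (the one not adjacent to~$c$). For example, $c_1$ lies on a second quad, corresponding to the horizontal edge of~$G$ opposite to $u(c)$ at~$v(c)$, and propagation there expresses $X_\varpi(c_1)$ as a $\sin\theta^\mathrm{h}$-weighted contribution of a same-type neighbor of~$c$ (here~$c^-$) plus a $\cos\theta^\mathrm{h}$-weighted contribution of a further intermediate corner~$c_7$ located diagonally to~$c$ (the one sharing $v(c)$ with both $c_1$ and $c_4$). Analogous expansions hold for $c_4$ (producing $X_\varpi(c^\flat)$ and again $X_\varpi(c_7)$) and for $c_2,c_5$ (producing $X_\varpi(c^\sharp),X_\varpi(c^+)$ and in both cases a second diagonal intermediate corner~$c_8$ sharing $u(c)$).

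Substituting these into the sum $2X_\varpi(c)=\cos\theta^\mathrm{v} X_\varpi(c_1)+\sin\theta^\mathrm{v} X_\varpi(c_2)+\cos\theta^\mathrm{h} X_\varpi(c_4)+\sin\theta^\mathrm{h} X_\varpi(c_5)$, the same-type neighbors $c^+,c^-,c^\sharp,c^\flat$ appear with precisely the coefficients $\sin\theta^\mathrm{h}\cos\theta^\mathrm{v}$ (east/west) and $\cos\theta^\mathrm{h}\sin\theta^\mathrm{v}$ (north/south). The corner~$c_7$ enters twice with total weight $\cos\theta^\mathrm{h}\cos\theta^\mathrm{v}$: once via the path $c\!\to\! c_1\!\to\! c_7$, once via the path $c\!\to\! c_4\!\to\! c_7$. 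Crucially, the concatenation of these two paths is a loop in $\Upsilon(G)$ that encircles $v(c)$, and hence (since $v(c)\notin\varpi$) lifts to a non-contractible loop in $\Upsilon^\times_\varpi(G)$; thus the two lifts of~$c_7$ obtained by the two paths lie on opposite sheets of the double cover, and by the spinor property the two $X_\varpi(c_7)$-contributions cancel. The analogous cancellation of the $c_8$-contributions uses that $u(c)\notin\varpi$, so that $\Upsilon^\times_\varpi$ branches around $u(c)$.

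After these cancellations, dividing by~$2$ yields the desired identity. The main technical care needed is the sheet bookkeeping on $\Upsilon^\times_\varpi(G)$; it is precisely the assumption that $c$ is away from $\varpi$ that ensures both $v(c)$ and $u(c)$ are branchings, which is exactly what makes the diagonal contributions cancel. In the general (non-homogeneous) case, the trigonometric weights of the $c^\pm$-contributions would differ from those of~$c^{\sharp,\flat}$ (since horizontal and vertical $\theta$'s may differ across the paths), and no such identity would hold.
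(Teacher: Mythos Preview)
Your argument is correct and takes a genuinely different route from the paper's. The paper does not iterate the propagation equation~\eqref{eq:propagation_on_corners} directly; instead it passes to the complexified observable $\Psi_\varpi=\eta_c X_\varpi$ and writes four Cauchy--Riemann identities~\eqref{eq:Cauchy-Riemann-general} (one between~$c$ and each of $c^\pm,c^{\sharp},c^{\flat}$), multiplies them by $\sin\theta^\mathrm{h},\cos\theta^\mathrm{h},\cos\theta^\mathrm{v},\sin\theta^\mathrm{v}$ respectively, and sums so that the contributions of the other-type intermediate corners cancel. In that language the cancellation rests on the fact that $\Psi_\varpi$ does \emph{not} branch around $v(c),u(c)\notin\varpi$ on~$\Upsilon_\varpi(G)$ (cf.\ the Remark after Proposition~\ref{prop-harmonicity-layered}). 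Your route stays with the real spinor~$X_\varpi$ on~$\Upsilon^\times_\varpi(G)$ and uses six applications of~\eqref{eq:propagation_on_corners} rather than the eight hidden in four CR equations; the cancellation of the diagonal corners $c_7,c_8$ then uses the dual statement that $X_\varpi$ \emph{does} branch around $v(c),u(c)$ on~$\Upsilon^\times_\varpi$. This is the same mechanism seen through the $X_\varpi/\Psi_\varpi$ correspondence, but your version avoids the Dirac spinor~$\eta_c$ and Proposition~\ref{prop:Cauchy-Riemann-general} entirely, at the cost of explicit sheet bookkeeping.

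One small slip in the labeling: with your conventions ($c_1,c_4$ sharing $v(c)$ and $c_2,c_5$ sharing $u(c)$), the second expansion of $c_4$ produces $c^\sharp$ and $c_7$, while that of $c_2$ produces $c^\flat$ and $c_8$ (you have these swapped). Since both $c^\sharp$ and $c^\flat$ receive the weight $\cos\theta^\mathrm{h}\sin\theta^\mathrm{v}$ either way, this does not affect the argument or the final identity.
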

\begin{proof} Recall that, at corners of a given type, the values~$X_\varpi$ and~$\Psi_\varpi$ differ only by a multiplicative constant. Due to the symmetry of the homogeneous model, we can assume that~$c,c^{+},c^{\sharp},c^{-},c^{\flat}$ are located as in Fig.~\ref{fig:LocalRelations}B. Let us write four Cauchy--Riemann equations~\eqref{eq:Cauchy-Riemann-general} between {$c$ and~$c^{+}$, $c$ and $c^{\sharp}$, $c$ and $c^{-}$, $c$ and $c^{\flat}$.} Multiplying the first equation by~$\sin\theta^{\mathrm{h}}$, the second by~$\cos\theta^{\mathrm{h}}$, the third by~$\cos\theta^{\mathrm{v}}$, the fourth by~$\sin\theta^{\mathrm{v}}$, and taking the sum with appropriate signs we get the result.
\end{proof}

\begin{rem} \label{rem:massive-Laplacian}
(i) Proposition~\ref{prop-massive-harmonicity} can be reformulated as the {massive harmonicity} condition~$[\Delta^{(m)}X_\varpi](c)=0$, where the \emph{massive Laplacian}~$\Delta^{(m)}$ is defined as
\[
[\Delta^{(m)}F](c) := -F(c)+\tfrac{1}{2}{\sin\theta^{\mathrm{h}}\cos\theta^{\mathrm{v}}}\!\cdot[F(c^{+})+F(c^{-})] + \tfrac{1}{2}{\cos\theta^{\mathrm{h}}\sin\theta^{\mathrm{v}}}\!\cdot[F(c^{\sharp})+F(c^{\flat})].
\]
It is worth noting that~$\Delta^{(m)}$ is a generator of a (continuous time) random walk on~$\mathbb Z^2$ with killing rate~$1-\sin(\theta^{\mathrm{h}}\!+\!\theta^{\mathrm{v}})$; in particular, one can easily guess from Proposition~\ref{prop-massive-harmonicity} the classical criticality condition
\[
\textstyle \theta^{\mathrm{h}}+\theta^{\mathrm{v}}=\frac{\pi}{2}\quad\Leftrightarrow\quad \sinh [2\beta J^\mathrm{h}]\cdot\sinh[2\beta J^\mathrm{v}]=1.
\]

\noindent (ii) The fact that the near-critical homogeneous Ising model on~$\mathbb{Z}^2$ admits a description via massive holomorphic fermions is a commonplace in the theoretical physics literature. In the probabilistic community, an explicit link between formulas for spin-spin correlations derived in~\cite{mccoy-wu-book} and the partition functions of killed random walks was pointed out in~\cite{messikh-arxiv06}. We refer the interested reader to the paper~\cite{beffara-duminil}, in which the massive holomorphicity property of fermionic observables was used for the analysis of the exponential decay rate of spin-spin correlations~$\mathbb{E}[\sigma_{0}\sigma_{n\alpha}]$, $n\to\infty$, and of its dependence on the direction~$\alpha$ in the \emph{super}-critical model on~$\mathbb{Z}^2$.
\end{rem}

A similar identity holds in the layered setup (see Fig.~\ref{fig:LocalRelations}D for the notation). Assume that~$c$ is a west corner of a face on the~$\frac{\pi}{4}$-rotated square grid. Denote by $c^\sharp_\pm$, $c^\flat_\pm$ the four nearby corners of the same type as~$c$ and let~$\theta_{-}$,~$\theta_\circ$ and~$\theta_{+}$ be the parameters assigned via~\eqref{eq:parametrization-model} to the edges to the left of~$c^{\sharp,\flat}_-$, to the left of $c$, and to the left of~$c^{\sharp,\flat}_+$, respectively.
\begin{prop} \label{prop-harmonicity-layered}
In the setup described above (see also {Fig.~\ref{fig:LocalRelations}D}), assume that neither~$v(c)$ nor~$u(c)$ are in~$\varpi$. Then, the following identity holds:
\[
X_\varpi(c)=\tfrac{1}{2}\sin\theta_-\cos\theta_\circ\!\cdot[X_\varpi(c^{\sharp}_{-})\!+\!X_\varpi(c^{\flat}_{-})]+ \tfrac{1}{2}\sin\theta_\circ\cos\theta_+\!\cdot[X_\varpi(c^{\sharp}_{+})\!+\!X_\varpi(c^{\flat}_{+})].
\]
\end{prop}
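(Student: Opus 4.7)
The plan is to follow the same strategy as in the proof of Proposition~\ref{prop-massive-harmonicity}. First I would observe that, at corners of any fixed type, the values of $X_\varpi$ and $\Psi_\varpi$ differ only by a global multiplicative constant; since all five corners $c,c^\sharp_\pm,c^\flat_\pm$ have the same type, it suffices to establish the identity with $X_\varpi$ replaced by $\Psi_\varpi$ and then divide out the common constant.

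Next, I would apply the Cauchy--Riemann identity \eqref{eq:Cauchy-Riemann-general} on each of the four elementary squares of the medial graph $\Upsilon(G)$ that contain $c$ as a vertex. Each such square connects $c$ to exactly one of the neighbors $c^\sharp_\pm,c^\flat_\pm$ of the same type, together with two intermediate corners of a different type. By the labeling in Fig.~\ref{fig:LocalRelations}D, the two squares lying to the left of $c$ carry interaction parameters from the pair $\{\theta_-,\theta_\circ\}$, while the two squares lying to its right carry parameters from $\{\theta_\circ,\theta_+\}$.

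Then I would take a linear combination of these four identities, multiplying the two involving $c^\sharp_-,c^\flat_-$ by $\tfrac{1}{2}\sin\theta_-\cos\theta_\circ$ and the two involving $c^\sharp_+,c^\flat_+$ by $\tfrac{1}{2}\sin\theta_\circ\cos\theta_+$, with signs dictated by the orientation convention of Proposition~\ref{prop:Cauchy-Riemann-general}. The role of these particular coefficients is exactly to make the contributions of the four intermediate (opposite-type) corners cancel pairwise, thanks to the $\pm i$ factors appearing in \eqref{eq:Cauchy-Riemann-general}. What survives on the right-hand side is the weighted sum $\tfrac{1}{2}\sin\theta_-\cos\theta_\circ [\Psi_\varpi(c^\sharp_-)+\Psi_\varpi(c^\flat_-)]+\tfrac{1}{2}\sin\theta_\circ\cos\theta_+[\Psi_\varpi(c^\sharp_+)+\Psi_\varpi(c^\flat_+)]$, and on the left-hand side a single multiple of $\Psi_\varpi(c)$ with coefficient $1$ (after using $\sin^2+\cos^2=1$), which is exactly the claimed identity once $\Psi_\varpi$ is converted back to $X_\varpi$.

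The main obstacle I expect is the sign bookkeeping on the branched cover $\Upsilon_\varpi(G)$: one has to check that the four elementary squares around $c$ can be lifted to a single sheet of this cover (which is possible precisely because $v(c)$ and $u(c)$ are not in $\varpi$), and that the orientations of these four squares combine the $\pm i$ factors in the intended way so that the opposite-type corners really do drop out. Once these local sign checks are carried out, the identity follows by inspection, with no further input beyond the Cauchy--Riemann equations of Proposition~\ref{prop:Cauchy-Riemann-general}.
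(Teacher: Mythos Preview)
Your overall strategy is exactly the paper's: write the four Cauchy--Riemann identities~\eqref{eq:Cauchy-Riemann-general} around~$c$ and take a suitable linear combination so that the opposite-type corners cancel. However, the specific multipliers you propose are wrong, and this is not just a cosmetic issue --- with your choice the cancellation actually fails.

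In each of the four Cauchy--Riemann equations, the term $\Psi_\varpi(c)$ carries a coefficient $\cos\theta_\circ$ or $\sin\theta_\circ$ (depending on whether $c$ plays the role of $c_2$ or $c_1$ in Fig.~\ref{fig:LocalRelations}A), while the neighbour $\Psi_\varpi(c^{\sharp,\flat}_\pm)$ carries $\sin\theta_-$ (left pair) or $\cos\theta_+$ (right pair). If you multiply the two left equations by $\tfrac12\sin\theta_-\cos\theta_\circ$, the coefficient of $\Psi_\varpi(c^\sharp_-)$ becomes $\tfrac12\sin^2\theta_-\cos\theta_\circ$, not the desired $\tfrac12\sin\theta_-\cos\theta_\circ$. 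More seriously, the two intermediate corners lying on the vertical through~$c$ (the ones shared between a left and a right equation) appear with coefficients $\sin\theta_\circ$ and $\cos\theta_\circ$ respectively; your weights would require $\sin\theta_-=\pm\cos\theta_+$ for those contributions to cancel, which is false in general. So the $\sin^2+\cos^2=1$ simplification you invoke never materializes.

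The paper's choice --- and the one that works --- is to multiply the two left equations by $\pm\cos\theta_\circ$ and the two right equations by $\pm\sin\theta_\circ$ (exactly the labels shown in Fig.~\ref{fig:LocalRelations}D). These multipliers involve only $\theta_\circ$; the factors $\sin\theta_-$ and $\cos\theta_+$ in the final identity come from the Cauchy--Riemann equations themselves, not from the weights. With this choice the shared intermediate corners cancel via $\cos\theta_\circ\cdot\sin\theta_\circ-\sin\theta_\circ\cdot\cos\theta_\circ=0$, the coefficient of $\Psi_\varpi(c)$ collapses to $2(\cos^2\theta_\circ+\sin^2\theta_\circ)=2$, and dividing by~$2$ gives the claim. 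Your discussion of the branching hypothesis and the lift to a single sheet is fine; only the numerical weights need to be corrected.
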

\begin{proof} The result follows by summing four Cauchy--Riemann equations~\eqref{eq:Cauchy-Riemann-general} with coefficients $\pm\cos\theta_\circ$, $\pm\sin\theta_\circ$ similarly to the proof of Proposition~\ref{prop-massive-harmonicity}.
\end{proof}
\begin{rem} It is worth emphasizing that the harmonicity-type identities discussed in Propositions~\ref{prop-massive-harmonicity} and~\ref{prop-harmonicity-layered} {fail} when~$c$ is located near the branching. The reason is that applying~\eqref{eq:Cauchy-Riemann-general} four times and summing the results as in the proofs of Proposition~\ref{prop-massive-harmonicity} and Proposition~\ref{prop-harmonicity-layered} one gets the difference~$X_\varpi(d^*)-X_\varpi(d)$ with~$d^*,d$ located over the same point on the \emph{different sheets} of the double cover~$\Upsilon_\varpi(G)$.
\end{rem}


\section{Homogeneous model}\label{sc:homogeneous}
In this section we discuss classical results on the horizontal spin-spin correlations in the infinite volume for the homogeneous model. Namely, we assume that all horizontal edges have a weight $\exp[-2\beta J^{\mathrm{h}}]=\tan \frac{1}{2}\theta^{\mathrm{h}}$ while all vertical edges have a weight $\exp[-2\beta J^{\mathrm{v}}]=\tan \frac{1}{2}\theta^{\mathrm{v}}$, see also Appendix in which the diagonal spin-spin correlations are treated in the fully homogeneous critical case~$\theta^{\mathrm{h}}=\theta^{\mathrm{v}}=\frac{\pi}{4}$. Though these results and even a roadmap of the proofs are well-known (e.g., see the classical treatment by McCoy and Wu~\cite{mccoy-wu-book}), we use this setup to illustrate a simplification that comes from working directly with {\emph{real-valued}} orthogonal polynomials instead of Toeplitz determinants, an approach that we apply to the layered model.

\begin{figure}
\centering{\begin{tikzpicture}[scale=0.15]
\input{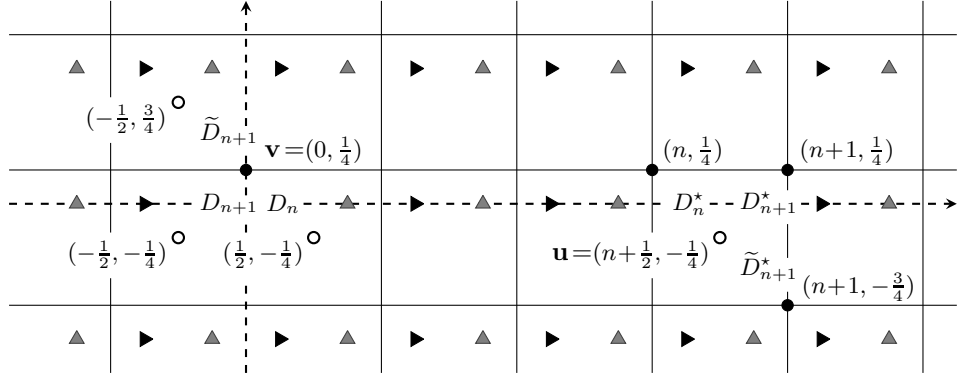}
\end{tikzpicture}

\bigskip

{\textsc {(A)} Particular values, considered \emph{up to the sign}, of the Kadanoff--Ceva $\phantom{(A}$~fermionic~observable $X_{[\mathbf v,\mathbf u]}$ near its branching points $\mathbf v,\mathbf u$; see~\eqref{eq:Dn-etc=}.}

\bigskip

\begin{tikzpicture}[scale=0.15]
\input{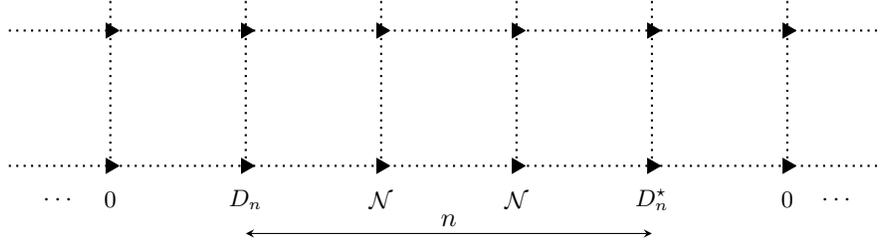}
\end{tikzpicture}

\bigskip

{\textsc (B)} Boundary value problem~$\mathrm{[P^{sym}_n]}$ for the symmetrized observable~$X^{\mathrm{sym}}_{[\mathbf v,\mathbf u]}$.

\bigskip

\begin{tikzpicture}[scale=0.15]
\input{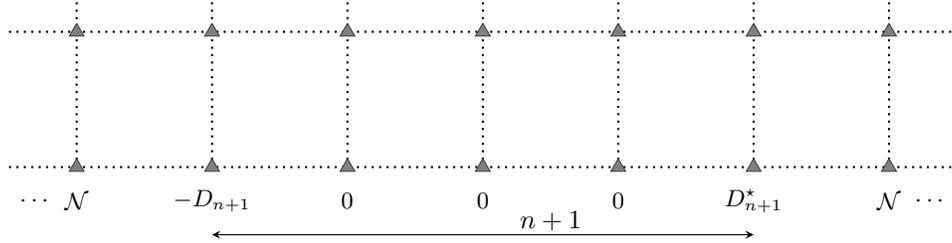}
\end{tikzpicture}

\bigskip

{\textsc (C)} Boundary value problem~$\mathrm{[P^{anti}_{n+1}]}$ for the anti-symmetrized observable~$X^{\mathrm{anti}}_{[\mathbf v,\mathbf u]}$.}

\caption{To derive recurrence relations on horizontal spin-spin correlations, we consider the Kadanoff--Ceva fermionic observable~$X_{[\mathbf v,\mathbf u]}$ with two branchings at~$\mathbf v=(0,\frac{1}{4})$ and~$\mathbf u=(n+\frac{1}{2},-\frac{1}{4})$. The symmetrized observable~$X^\mathrm{sym}_{[\mathbf v,\mathbf u]}$ is defined on north-west corners (marked as~$\triangleright$ in the figure) and the anti-symmetrized observable~$X^\mathrm{anti}_{[\mathbf v,\mathbf u]}$ is defined on north-east corners (marked as~$\scriptscriptstyle{\triangle}$). Both~$X^\mathrm{sym}_{[\mathbf v,\mathbf u]}$ and~$X^\mathrm{anti}_{[\mathbf v,\mathbf u]}$ are massive harmonic in the upper half-plane~$\mathbb Z\times\mathbb N_0$ and solve boundary value problems~$\mathrm{[P^{sym}_n]}$,~$\mathrm{[P^{anti}_{n+1}]}$, respectively; the sign $\mathcal{N}$ denotes Neumann boudary conditions.}
\label{fig:GlobalPicture}
\label{fig:sym-anti-bvp}
\end{figure}

\subsection{Full-plane observable with two branchings} Assume that the square grid on which the Ising model lives is shifted so that its vertices coincide with~\mbox{$\mathbb Z\times (\mathbb Z+\tfrac{1}{4})$} and the centers of faces are~\mbox{$(\mathbb Z+\frac{1}{2})\times (\mathbb Z-\tfrac{1}{4})$}, see Fig.~\ref{fig:GlobalPicture}A. It is well known (e.g., see~\cite{friedli-velenik-book}) that there are no more than two extremal Gibbs measures (coming from~`$+$' and `$-$' boundary conditions at infinity) and that the spin correlations in the infinite volume limit are translationally invariant. Given~$n\ge 0$, we define the horizontal and next-to-horizontal correlations
\begin{align}
D_n:=\mathbb{E}_{\mathbb Z^2}\big[\sigma_{(\frac{1}{2},-\frac{1}{4})} \sigma_{(n+\frac{1}{2},-\frac{1}{4})}]\,,\qquad & D_n^{\star}:= \mathbb{E}^\star_{(\mathbb Z^2)^\bullet}[\sigma^\bullet_{(0,\frac{1}{4})} \sigma^\bullet_{(n,\frac{1}{4})}]\,,\notag \\
D_{n+1}:=\mathbb{E}_{\mathbb Z^2}\big[\sigma_{(-\frac{1}{2},-\frac{1}{4})} \sigma_{(n+\frac{1}{2},-\frac{1}{4})}]\,,\qquad
  & D_{n+1}^{\star}:= \mathbb{E}^\star_{(\mathbb Z^2)^\bullet}[\sigma^\bullet_{(0,\frac{1}{4})} \sigma^\bullet_{(n+1,\frac{1}{4})}]\,,
  \label{eq:Dn-etc=}\\ \widetilde{D}_{n+1}:= \mathbb{E}_{\mathbb Z^2}[\sigma_{(-\frac{1}{2},\frac{3}{4})}\sigma_{(n+\frac{1}{2},-\frac{1}{4})}]\,,\qquad & \widetilde{D}_{n+1}^{\star}:= \mathbb{E}^\star_{(\mathbb Z^2)^\bullet}[\sigma^\bullet_{(0,\frac{1}{4})} \sigma^\bullet_{(n+1,-\frac{3}{4})}]\,,\notag
\end{align}
where the expectations in the second column are taken for the \emph{dual} Ising model with interaction parameters~$\tan \frac{1}{2}(\frac{\pi}{2}-\theta^\mathrm{v})$ and~$\tan \frac{1}{2}(\frac{\pi}{2}-\theta^\mathrm{h})$ assigned to horizontal and vertical edges of the dual {square} grid $(\mathbb{Z}^2)^\bullet$, respectively. Due to~\eqref{eq:disorders=dual-spins} one can view these quantities as disorder-disorder correlations in the original model.

Let~$\mathbf{v}=(0,\frac{1}{4})$ and $\mathbf{u}=(n+\frac{1}{2},-\frac{1}{4})$. Below we rely upon the \emph{full-plane} observable~$X_{[\mathbf{v},\mathbf{u}]}$ which can be thought of as a (subsequential) limit of the similar observables defined on finite graphs~$G$ exhausting the square grid. Indeed, since
\begin{equation}
\label{eq:X-bound-onG}
\big|\Corr{G}{\mu_{v(c)}\mu_{\mathbf{v}}\sigma_{u(c)}\sigma_{\mathbf{u}}}\big|\ \le\ \Corr{G}{\mu_{v(c)}\mu_{\mathbf{v}}}\ =\ \mathbb{E}_{G^\bullet}^{\star}[\sigma^\bullet_{v(c)}\sigma^\bullet_{\mathbf v}]\ \le\ 1,
\end{equation}
{a point-wise subsequential limit exists; its uniqueness (and hence the existence of the true limit) follows from Lemma~\ref{lem:uniqueness} given below. Moreover, in Section~\ref{subsect:construction-homogen}, we provide an explicit construction of functions satisfying the conditions listed in Lemma~\ref{lem:uniqueness}, which allows us to identify~$X_{[\mathbf{v},\mathbf{u}]}$ with these explicit functions.}

Let~$[(\mathbb Z\pm\tfrac{1}{4})\times\mathbb Z;\mathbf v,\mathbf u]$ denote the double cover of the lattice $(\mathbb Z\pm\tfrac{1}{4})\times\mathbb Z$ branching over~$\mathbf v$ and~$\mathbf u$.
We now introduce the following \emph{symmetrized} and \emph{anti-symmetrized} versions of the observable~$X_{[\mathbf{v},\mathbf{u}]}(\cdot)$ on north-west and north-east corners, respectively (see Fig.~\ref{fig:GlobalPicture}):
\begin{align}
\label{eq:Xsym-def}
\Xsym(c):=\tfrac{1}{2}[X_{[\mathbf{v},\mathbf{u}]}(c)+X_{[\mathbf{v},\mathbf{u}]}(\bar{c})],\quad &c\in [(\mathbb Z+\tfrac{1}{4})\times \mathbb Z;\mathbf v,\mathbf u],\\
\label{eq:Xanti-def}
\Xanti(c):=\tfrac{1}{2}[X_{[\mathbf{v},\mathbf{u}]}(c)-X_{[\mathbf{v},\mathbf{u}]}(\bar{c})],\quad &c\in [(\mathbb Z-\tfrac{1}{4})\times\mathbb Z;\mathbf v,\mathbf u],
\end{align}
where the continuous conjugation $z\mapsto \bar{z}$ on~$[(\mathbb Z\pm\tfrac{1}{4})\times\mathbb Z;\mathbf v,\mathbf u]$ is defined so that it maps the segment $[\tfrac{1}{4},n+\tfrac{1}{4}]\times \{ 0 \} $ 
to itself (i.e., the conjugate of each point located over this segment is chosen to be on the \emph{same} sheet of the double cover). Once~$z\mapsto \bar{z}$ is specified in between the branching points, it {can be `continuously' extended} to the entire {double cover} $[(\mathbb Z\pm\tfrac{1}{4})\times\mathbb Z;\mathbf v,\mathbf u]$. In particular, the points~$c$ located over the real line outside of the segment~$[\frac{1}{4},n+\frac{1}{4}]$ are mapped by~$z\mapsto\overline{z}$ to their counterparts~$c^*$ on the \emph{other} sheet of the double cover.

Let us list basic properties of the observables~$\Xsym$ and~$\Xanti$ and show that they characterize these observables uniquely. Due to~\eqref{eq:X-bound-onG} we have
\[
\phantom{x}\;\qquad \big|\Xsym(k\!+\!\tfrac{1}{4},s)\big|\le 1\quad \text{and}\quad \big|\Xanti(k\!-\!\tfrac{1}{4},s)\big|\le 1\quad \text{for all}\ \ k,s\in\mathbb Z.
\]
Proposition~\ref{prop-massive-harmonicity} (see also Remark~\ref{rem:massive-Laplacian}) ensures that the observables~$\Xsym$ and~$\Xanti$ are massive harmonic away from the branching points~$\mathbf v,\mathbf u$. In particular, one has
\begin{align}
\label{eq:X-laplacian}
[\Delta^{(m)}\!\Xsym]((k\!+\!\tfrac{1}{4},s))=0\quad \text{and}\quad [\Delta^{(m)}\!\Xanti]((k\!-\!\tfrac{1}{4},s))=0\quad \text{if}\ \ s\ne 0.
\end{align}
Further, the spinor property of the observable~$X_{[\mathbf v,\mathbf u]}$ together with the choice of the conjugation described above gives
\begin{align}
\label{eq:Xsym-vanishes}
\Xsym((k\!+\!\tfrac{1}{4},0))\!=\!0,\ k\not\in[0,n],\quad & [\Delta^{(m)}\!\Xsym]((k\!+\!\tfrac{1}{4},0))\!=\!0,\ k\in [1,n\!-\!1];\\
\label{eq:Xanti-vanishes}
\Xanti((k\!-\!\tfrac{1}{4},0))\!=\!0, \ k\in [1,n],\quad & [\Delta^{(m)}\!\Xanti]((k\!-\!\tfrac{1}{4},0))\!=\!0,\ k\not\in [0,n\!+\!1].
\end{align}
Finally, the definition of~$X_{[\mathbf v,\mathbf u]}$
implies
\begin{align}
\label{eq:Xsym-values}
\Xsym((\tfrac{1}{4},0))=D_n,\qquad & \Xsym((n\!+\!\tfrac{1}{4},0))=D_n^\star;\\
\label{eq:Xanti-values}
\Xanti((-\tfrac{1}{4},0))=-D_{n+1},\qquad & \Xanti((n\!+\!\tfrac{3}{4},0))=D_{n+1}^\star,
\end{align}
where we assume that these pairs of corners are located on the \emph{same sheet} of the double cover~$[(\mathbb Z\pm \tfrac{1}{4})\times\mathbb Z;\mathbf v,\mathbf u]$ \emph{as viewed from the upper half-plane}; this is why the value~$D_{n+1}$ at~$(-\tfrac{1}{4},0)$ appears with the different sign.

\begin{lem} \label{lem:uniqueness}
(i) The uniformly bounded observable~$\Xsym$ given by~\eqref{eq:Xsym-def} is uniquely characterized by the properties~\eqref{eq:X-laplacian}, \eqref{eq:Xsym-vanishes} and its values~\eqref{eq:Xsym-values} near~$\mathbf v$ and~$\mathbf u$.

\smallskip
\noindent (ii) Similarly, the uniformly bounded observable~$\Xanti$ given by~\eqref{eq:Xanti-def} is uniquely characterized by the properties~\eqref{eq:X-laplacian}, \eqref{eq:Xanti-vanishes} and its values~\eqref{eq:Xanti-values} near~$\mathbf v$ and~$\mathbf u$.
\end{lem}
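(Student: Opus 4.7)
The plan is to prove both (i) and (ii) by the standard difference argument: if $X_1$ and $X_2$ are two candidates satisfying the listed hypotheses, then $F:=X_1-X_2$ is uniformly bounded, obeys the homogeneous versions of the massive-harmonicity conditions~\eqref{eq:X-laplacian} and~\eqref{eq:Xsym-vanishes} (resp.~\eqref{eq:Xanti-vanishes}), and vanishes at the two anchor corners near~$\mathbf v,\mathbf u$ because the prescribed values in~\eqref{eq:Xsym-values} (resp.~\eqref{eq:Xanti-values}) agree. Thus $F$ is a bounded function on the appropriate double cover that is massive-harmonic away from the two anchor corners, inherits the symmetry defining $\Xsym$ (resp.\ $\Xanti$), and carries the Dirichlet data $F=0$ on a non-empty subset of the bottom row which includes both anchor corners. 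The task is to show this forces $F\equiv 0$.

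I would deduce this from the probabilistic interpretation of~$\Delta^{(m)}$ given in Remark~\ref{rem:massive-Laplacian}: the massive Laplacian is the generator of a continuous-time random walk on~$\mathbb Z^2$ with killing rate $1-\sin(\theta^{\mathrm h}+\theta^{\mathrm v})\geq 0$ in the sub-/critical regimes. The symmetrization (resp.\ anti-symmetrization) defining $F$ translates into a reflection of the random walk at those bottom-row sites where only the Laplacian equation is imposed, while the Dirichlet sites become absorption states. The standard representation $F(c)=\mathbb E_c[F(X_T)\,\mathbf 1_{T<\tau}]$, with $T$ the absorption time and $\tau$ the killing time, then gives $F(c)=0$, because $F\equiv 0$ on the absorption set. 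In the subcritical regime, positive killing ensures $T\wedge\tau<\infty$ almost surely; in the critical regime the planar random walk is recurrent, so the (non-empty, locally finite) absorption set is hit almost surely. The supercritical regime is not directly covered by this picture and would require an auxiliary argument, e.g.\ Fourier analysis in the horizontal coordinate combined with the observation that there is no bounded transverse mode compatible with vanishing at the top/bottom.

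The main subtlety is that $\Delta^{(m)}F$ is \emph{not} required to vanish at the two anchor corners --- only the pointwise value $F=0$ is controlled there. This is harmless: we simply treat the anchor corners as additional Dirichlet absorption sites for the random walk. Since $F$ vanishes on this enlarged Dirichlet set, the walk started from any other corner is absorbed almost surely at a site where $F=0$, which closes the argument. The proof of~(ii) differs from that of~(i) only in which subset of the bottom row plays the role of absorption versus reflection (a bounded interval for~$\Xanti$ versus the complement of one for~$\Xsym$, cf.\ Fig.~\ref{fig:sym-anti-bvp}B,C), the argument itself being identical.
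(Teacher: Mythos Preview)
Your approach is essentially the paper's: take the difference $F=X_1-X_2$, note it is bounded and massive-harmonic away from a Dirichlet set on which it vanishes, and conclude $F\equiv 0$ via the optional stopping theorem for the random walk associated with~$\Delta^{(m)}$. The paper phrases this directly on the double cover (the walk almost surely dies or hits the set $\{(k+\tfrac14,0):k\notin[1,n-1]\}$ where $F=0$) rather than via your reflection/absorption picture on the half-plane, but the two viewpoints are equivalent.

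One correction: your concern about the supercritical regime is unfounded. Since $\theta^{\mathrm h},\theta^{\mathrm v}\in(0,\tfrac{\pi}{2})$, one has $\theta^{\mathrm h}+\theta^{\mathrm v}\in(0,\pi)$ and hence $\sin(\theta^{\mathrm h}+\theta^{\mathrm v})\in(0,1]$, so the killing rate $1-\sin(\theta^{\mathrm h}+\theta^{\mathrm v})$ is \emph{always} non-negative. It is strictly positive in both the sub- and super-critical regimes and vanishes only at criticality, where recurrence of the planar random walk (projected from the double cover) guarantees the Dirichlet set is hit almost surely. No auxiliary Fourier argument is needed.
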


{
\begin{proof} (i) Let $X_1$ and $X_2$ be two bounded spinors satisfying~\eqref{eq:X-laplacian},\eqref{eq:Xsym-vanishes} and~\eqref{eq:Xsym-values}. Denote $X_0:=X_1-X_2$. It follows from~\eqref{eq:X-laplacian} and \eqref{eq:Xsym-vanishes} that $[\Delta^{(m)}X_0](k+\frac{1}{4},s)=0$ if $s\ne 0$ or if $s=0$ and $k\in[1,n-1]$. Moreover, $X_0(k+\frac{1}{4},0)=0$ if $k\not\in [1,n-1]$ due to~\eqref{eq:Xsym-vanishes} and~\eqref{eq:Xsym-values}. We need to show that these conditions, together with the uniform boundedness, imply that $X_0$ vanishes everywhere on $[(\mathbb Z+\tfrac{1}{4})\times\mathbb Z;\mathbf u,\mathbf v]$.

Let $(Z_j)_{j\geq 0}$ be a random walk (with killing) on the double cover $[(\mathbb Z+\tfrac{1}{4})\times\mathbb Z;\mathbf u,\mathbf v]$ that corresponds to the massive Laplacian~$\Delta^{(m)}$. (In other words, at each step this random walk makes a $(\pm 1,0)$ jump with probability $\frac{1}{2}\sin\theta^\mathrm{h}\cos\theta^\mathrm{v}$, a $(0,\pm 1)$ jump with probability $\frac{1}{2}\cos\theta^\mathrm{h}\sin\theta^\mathrm{v}$ and, if $\theta^\mathrm{h}+\theta^\mathrm{v}\ne \frac12\pi$, dies with probability $1-\sin(\theta^\mathrm{h}\!+\!\theta^\mathrm{v})$.) Let~$\tau$ be the first time when this random walk hits the set $\{(k\!+\!\tfrac{1}{4},0),k\not\in [1,n\!-\!1]\}\}$ or dies; clearly $\tau<+\infty$ almost surely. Since the process $(X_0(Z_j))_{j\ge 0}$ is a uniformly bounded martingale, the optional stopping theorem implies that $X_0(c)=\mathbb{E}[X_0(Z_\tau)]=0$ for all starting points $c\in [(\mathbb Z+\tfrac{1}{4})\times\mathbb Z;\mathbf u,\mathbf v]$.

The proof of (ii) is similar. In this case $\tau$ defines the first moment when a similar random walk on $[(\mathbb Z-\tfrac{1}{4})\times\mathbb Z;\mathbf u,\mathbf v]$ hits the set $\{(k\!-\!\tfrac{1}{4},0),k\in [0,n\!+\!1]\}\}$ or dies. Note that $\tau<+\infty$ almost surely even if $\theta^\mathrm{h}+\theta^\mathrm{v}=\frac{\pi}{2}$ as $(Z_k)_{k\ge 0}$ is recurrent.
\end{proof} }

The next lemma allows one to use {an explicit construction of functions $X_{[\mathbf{v},\mathbf{u}]}$ given in Section~\ref{subsect:construction-homogen}} in order to get a recurrence relation for the spin-spin correlations. For~$n\ge 1$, denote
\[
L_n:=\tfrac{1}{2}\cos\theta^{\mathrm{v}}\cdot[D_{n}+\cos\theta^\mathrm{h}\cdot \widetilde{D}_{n}],\qquad
L_n^\star:=\tfrac{1}{2}\sin\theta^{\mathrm{h}}\cdot[D_{n}^\star+\sin\theta^\mathrm{v}\cdot \widetilde{D}_{n}^\star].
\]
\begin{lem} \label{lem:laplacians}
For each~$n\ge 1$, the following identities are fulfilled:
\begin{align}
\label{eq:Xsym-laplacians}
-[\Delta^{(m)}\Xsym]((\tfrac{1}{4},0))=L_{n+1},\qquad & -[\Delta^{(m)}\Xsym]((n\!+\!\tfrac{1}{4},0))=L_{n+1}^\star;\\
\label{eq:Xanti-laplacians}
-[\Delta^{(m)}\Xanti]((-\tfrac{1}{4},0))=-L_n,\qquad & -[\Delta^{(m)}\Xanti]((n\!+\!\tfrac{3}{4},0))=L_n^\star,
\end{align}
with the same choice of points on the double covers~$[(\mathbb Z\pm \tfrac{1}{4})\times\mathbb Z;\mathbf v,\mathbf u]$ as above. 
If~$n=0$, the identities~\eqref{eq:Xsym-laplacians} should be replaced by~$-[\Delta^{(m)}\Xsym]((\tfrac{1}{4},0))=L_1\!+\!L_1^\star$ while \eqref{eq:Xanti-laplacians} hold with $L_0:=\cos\theta^{\mathrm{v}}$ and $L_0^\star:=\sin\theta^{\mathrm{h}}$.
\end{lem}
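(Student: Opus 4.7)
The plan is to imitate the derivation of the massive harmonicity identity in Proposition~\ref{prop-massive-harmonicity} at each of the four corners, while carefully tracking the defect produced by the nearby branching. Recall that the proof of Proposition~\ref{prop-massive-harmonicity} combines four Cauchy--Riemann equations~\eqref{eq:Cauchy-Riemann-general} around $c$ with the trigonometric weights $\sin\theta^{\mathrm h}, \cos\theta^{\mathrm h}, \cos\theta^{\mathrm v}, \sin\theta^{\mathrm v}$ so that the contributions at four intermediate opposite-type corners cancel. As emphasized in the remark following Proposition~\ref{prop-harmonicity-layered}, when $v(c)$ or $u(c)$ sits at a branching point, one of these cancellations must be carried out across the two sheets of~$\Upsilon_\varpi(G)$, and by the spinor property the two ``shared'' values differ by a sign; this sign flip produces the nontrivial residual that we must compute.

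First, I would work out the residual at $c_\circ=(\tfrac{1}{4},0)$ using the notation of Fig.~\ref{fig:LocalRelations}C. Among the four corners of $\Upsilon^\times(G)$ surrounding $\mathbf v$, only $c_\circ=c_+^\flat(\mathbf v)$ is of north-west type; the three remaining corners $c_\pm^\sharp(\mathbf v), c_-^\flat(\mathbf v)$ each have $v(\cdot)=\mathbf v$, so that the disorder insertion $\mu_{v(\cdot)}$ coincides with $\mu_{\mathbf v}$ and, via $\mu_{\mathbf v}^2=1$, the value of $X_{[\mathbf v,\mathbf u]}$ at each such corner collapses to a genuine spin-spin correlation with $\mathbf u$. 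Two applications of the propagation equation~\eqref{eq:propagation_on_corners} on the edges of $G$ incident to~$\mathbf v$ (the vertical edge $\{\mathbf v,(0,-\tfrac{3}{4})\}$ with weight $\theta^{\mathrm v}$, and the horizontal edge $\{\mathbf v,(-1,\tfrac{1}{4})\}$ with weight $\theta^{\mathrm h}$) then convert these collapsed values into $D_{n+1}$ (spin at the face $(-\tfrac{1}{2},-\tfrac{1}{4})$, reached by a single vertical propagation step) and $\widetilde{D}_{n+1}$ (spin at $(-\tfrac{1}{2},\tfrac{3}{4})$, reached by composing with a further horizontal step), weighted by $\cos\theta^{\mathrm v}$ and $\cos\theta^{\mathrm v}\cos\theta^{\mathrm h}$ respectively. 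Together with the global factor $\tfrac{1}{2}$ coming from the symmetrization~\eqref{eq:Xsym-def}, this assembles into $L_{n+1}$.

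The identity at $(n\!+\!\tfrac{1}{4},0)$ follows by the same mechanism applied at the other branching $\mathbf u$; since now $u(c)$, rather than $v(c)$, equals the branching point, the propagation equation swaps the roles of $\cos$ and $\sin$, explaining the appearance of $\sin\theta^{\mathrm h},\sin\theta^{\mathrm v}$ in $L_{n+1}^\star$ in place of $\cos\theta^{\mathrm v},\cos\theta^{\mathrm h}$. The two identities for $\Xanti$ go through with two adjustments: the corners $(-\tfrac{1}{4},0)$ and $(n\!+\!\tfrac{3}{4},0)$ lie \emph{outside} the segment $[\tfrac{1}{4},n\!+\!\tfrac{1}{4}]\times\{0\}$, so the conjugation of~\eqref{eq:Xanti-def} swaps the two sheets and the antisymmetrization flips the overall sign at $(-\tfrac{1}{4},0)$ (producing $-L_n$ instead of $+L_n$); moreover the index shifts from $n\!+\!1$ to $n$ because for north-east corners the relevant ``opposite-side'' faces lie one step closer to the branching. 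The special case $n=0$ for $\Xsym$ reflects that both $\mathbf v$ and $\mathbf u$ are adjacent to the same corner $c_\circ$, so the defects at the two branchings accumulate and yield $L_1+L_1^\star$. The hard part of this plan is the sign bookkeeping on the double cover, in particular checking that the (anti)symmetrization correctly folds the complex-valued spin-disorder observables into the real spin-spin correlations $D_k,\widetilde{D}_k,D_k^\star,\widetilde{D}_k^\star$ with precisely the stated trigonometric coefficients.
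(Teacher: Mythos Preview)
Your approach is essentially the same as the paper's: redo the proof of Proposition~\ref{prop-massive-harmonicity} at~$c_\circ$ and identify the residual left by the branching, using that the values of~$X_{[\mathbf v,\mathbf u]}$ at the corners adjacent to~$\mathbf v$ collapse (via~$\mu_{\mathbf v}^2=1$) to the spin--spin correlations~$D_{n+1}$ and~$\widetilde D_{n+1}$. Two small expository points to tighten: first, the values $X_{[\mathbf v,\mathbf u]}(c_-^\flat(\mathbf v))=D_{n+1}$ and $X_{[\mathbf v,\mathbf u]}(c_-^\sharp(\mathbf v))=\widetilde D_{n+1}$ hold \emph{directly} (no further propagation is needed to identify them), the ``missing'' propagation equation is what links the Laplacian at~$c_\circ$ to this pair of values; second, the factor~$\tfrac12$ in~$L_{n+1}$ does not come from the symmetrization but from the~$\tfrac12$ in the definition of~$\Delta^{(m)}$ --- the role of the symmetrization is rather to ensure~$\Xsym(c_\circ^-)=0$, which is what lets you drop exactly one of the eight propagation identities.
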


\begin{proof} We focus on the first identity in~\eqref{eq:Xsym-laplacians}. Let~$c_\circ=c_+^\flat(\mathbf v):=(\tfrac{1}{4},0)$, see Fig.~\ref{fig:LocalRelations}C for the notation. First, note that~$\Xsym(c_\circ^-)=0$ and hence
\begin{align*}
\notag -[\Delta^{(m)}\Xsym](c_\circ)=X_{[\mathbf v,\mathbf u]}(c_\circ) & -
\tfrac{1}{2}{\sin\theta^{\mathrm{h}}\cos\theta^{\mathrm{v}}}\!\cdot X_{[\mathbf v,\mathbf u]}(c_\circ^+) \\ &- \tfrac{1}{2}{\cos\theta^{\mathrm{h}}\sin\theta^{\mathrm{v}}}\!\cdot [X_{[\mathbf v,\mathbf u]}(c_\circ^\sharp)+X_{[\mathbf v,\mathbf u]}(c_\circ^\flat)]\,.
\end{align*}
Recall that we deduced the massive harmonicity property of the observables~$X_{[\mathbf v,\mathbf u]}$ away from the branchings from four Cauchy--Riemann identities~\eqref{eq:Cauchy-Riemann-general}, each of them based upon two propagation equations~\eqref{eq:propagation_on_corners}; {see Fig.~\ref{fig:LocalRelations}B.} We now repeat the same proof but with \emph{seven} three-terms identities~\eqref{eq:propagation_on_corners} instead of eight ones required to prove Proposition~\ref{prop-massive-harmonicity}, the one involving the values of~$X_{[\mathbf v,\mathbf u]}$ at \mbox{$c_\circ^-=(-\tfrac{3}{4},0)$}, \mbox{$c_-^\flat(\mathbf v)=(-\tfrac{1}{4},0)$} and \mbox{$c_-^\sharp(\mathbf v)=(-\tfrac{1}{4},\tfrac{1}{2})$} missing; {see Fig.~\ref{fig:LocalRelations}C.} As a result, one sees that the value $[\Delta^{(m)}\Xsym](c_+^\flat(\mathbf v))$ is $\tfrac{1}{2}\cos\theta^\mathrm{v}$ times the missing linear combination of the values
\[
X_{[\mathbf v,\mathbf u]}(c_-^\flat(\mathbf v))=D_{n+1}
 \quad \text{and}\quad X_{[\mathbf v,\mathbf u]}(c_-^\sharp(\mathbf v))\cdot\cos\theta^{\mathrm h}=\widetilde{D}_{n+1}\cdot\cos\theta^\mathrm{h},
\]
which leads to the first identity in~\eqref{eq:Xsym-laplacians} (we let the reader to check the signs obtained along the computation). The proofs of the other three identities for~$n\ge 1$ are similar. If~\mbox{$n=0$}, one should sum \emph{six} three-term identities~\eqref{eq:propagation_on_corners} when dealing with~$\Xsym$ and \emph{eight} ones when dealing with~$\Xanti$. In the latter case, the values~$L_0$ and~$L_0^\star$ appear due to the presence of the branchings~$\mathbf v,\mathbf u$ near the points at which~$\Delta^{(m)}\Xanti$ is computed (and due to the fact that~$D_0=D_0^\star=1$).
\end{proof}

\subsection{Construction via the Fourier transform and orthogonal polynomials} \label{subsect:construction-homogen}
In this section we construct two bounded functions satisfying the properties~\eqref{eq:X-laplacian}--\eqref{eq:Xanti-values} using Fourier transform and orthogonal polynomials techniques, the explicit formulas are given in Lemma~\ref{lem:solution-sym} and Lemma~\ref{lem:solution-anti}. Recall that these explicit solutions must coincide with~$\Xsym$ and~$\Xanti$ due to Lemma~\ref{lem:uniqueness}. Instead of the double covers~$[(\mathbb Z\pm\tfrac{1}{4})\times\mathbb Z;\mathbf u,\mathbf v]$, we work in the upper half-plane~$\mathbb Z\times \mathbb N_0$ only (see Lemma~\ref{lem:H-to-dbl-cover} for the link between the two setups).

For a function~$V:\mathbb Z\times\mathbb N_0\to\mathbb R$ we use the same definition of the massive Laplacian~$[\Delta^{(m)}V](k,s)$ as above for~$s\ge 1$ and introduce the values
\begin{align}
\notag [\mathcal{N}V](k,0):=  V(k,0) & - \cos\theta^{\mathrm{h}}\sin\theta^{\mathrm{v}}\cdot V(k,1)\\
 & - \tfrac{1}{2} \sin\theta^{\mathrm{h}}\cos\theta^{\mathrm{v}}\cdot [V(k\!-\!1,0)+V(k\!+\!1,0)]
 \label{eq:N-def}
\end{align}
which might be viewed as a version of the normal derivative of~$V$ at the point~$(k,0)$. We now formulate two problems~$\mathbf{[P^{sym}_n]}$ and~$\mathbf{[P^{anti}_n]}$ to solve. {Due to Lemma~\ref{lem:uniqueness}, these problems are equivalent to constructing explicitly the functions~$X^\mathrm{sym}_{[\mathbf{v},\mathbf{u}]}$ and $X^\mathrm{anti}_{[\mathbf{v},\mathbf{u}]}$, respectively; see also Fig.~\ref{fig:sym-anti-bvp}B and Fig.~\ref{fig:sym-anti-bvp}C.}

\smallskip

\begin{itemize}
\item $\mathbf{[P^{sym}_n]}:$ given~$n\ge 1$, to construct a bounded function~$V:\mathbb Z\times\mathbb N_0\to\mathbb R$ such~that the following conditions are fulfilled:
    \begin{align*}
    [\Delta^{(m)}V](k,s)=0~\text{if}~s\ge 1;\qquad & [\mathcal{N}V](k,0)=0~\text{for}~k\in [1,n\!-\!1];\\
    V(k,0)=0~\text{for}~k\not\in [0,n];\qquad & V(0,0)=D_n\ \ \text{and}\ \ V(n,0)=D_n^\star\,.
    \end{align*}
\item $\mathbf{[P^{anti}_{n+1}]}:$ given~$n\ge 0$, to construct a bounded function~$V:\mathbb Z\times\mathbb N_0\to\mathbb R$ such~that the following conditions are fulfilled:
    \begin{align*}
    \quad\qquad [\Delta^{(m)}V](k,s)=0~\text{if}~s\ge 1;\qquad & [\mathcal{N}V](k,0)=0~\text{for}~k\not\in [0,n\!+\!1];\\
    V(k,0)=0~\text{for}~k\in [1,n];\qquad & V(0,0)=-D_{n+1};\quad V(n\!+\!1,0)=D_{n+1}^\star\,.
    \end{align*}
\end{itemize}

\begin{lem}\label{lem:H-to-dbl-cover} Assume that a function~$V^\mathrm{sym}_n$ (resp., $V^\mathrm{anti}_{n+1}$) solves the problem $\mathrm{[P^{sym}_n]}$ (resp., $\mathrm{[P^{anti}_{n+1}]}$). Then, the following identities hold:
\begin{align}
\label{eq:Vsym-laplacians}
[\mathcal{N}V^\mathrm{sym}_n](0,0)=L_{n+1},\qquad & [\mathcal{N}V^\mathrm{sym}_n](n,0)=L_{n+1}^\star;\\
\label{eq:Vanti-laplacians}
[\mathcal{N}V^\mathrm{anti}_{n+1}](0,0)=-L_n,\,\qquad & [\mathcal{N}V^\mathrm{anti}_{n+1}](n\!+\!1,0)=L_n^\star.
\end{align}
\end{lem}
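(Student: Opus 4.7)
The strategy is to identify a bounded solution of $\mathrm{[P^{sym}_n]}$ (resp.~$\mathrm{[P^{anti}_{n+1}]}$) with the restriction of the full-plane observable $\Xsym$ (resp.~$\Xanti$) to the upper half-plane, and then read off the claimed Neumann values from Lemma~\ref{lem:laplacians}. The identification relies on the uniqueness statement of Lemma~\ref{lem:uniqueness}.

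Given a bounded solution $V=V^{\mathrm{sym}}_n$ of $\mathrm{[P^{sym}_n]}$, I extend it to a spinor $\widetilde V$ on the double cover $[(\mathbb Z+\tfrac14)\times\mathbb Z;\mathbf v,\mathbf u]$ by setting $\widetilde V(k+\tfrac14,s):=V(k,s)$ on a chosen sheet for $s\ge 0$ and imposing the reflection symmetry $\widetilde V(\bar c)=\widetilde V(c)$. Concretely, the conjugate of $(k+\tfrac14,s)$ with $s>0$ lies on the same sheet at $(k+\tfrac14,-s)$ when $k\in[0,n-1]$ (between the branching points) and on the other sheet otherwise, which pins down $\widetilde V$ on the lower half of both sheets. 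To apply Lemma~\ref{lem:uniqueness}(i), I check that $\widetilde V$ is bounded, satisfies the Dirichlet data \eqref{eq:Xsym-vanishes}, takes the correct values \eqref{eq:Xsym-values} at the corners neighbouring $\mathbf v,\mathbf u$, and is massive harmonic away from these corners, i.e.~condition \eqref{eq:X-laplacian}. The latter is inherited from $V$ for $s\ge 1$ and follows by symmetry for $s\le -1$; at the interior real-line corners $c=(k+\tfrac14,0)$ with $k\in[1,n-1]$, the symmetric extension forces $\widetilde V(c^\flat)=\widetilde V(c^\sharp)=V(k,1)$, so the five-point massive Laplacian collapses to the four-term combination of \eqref{eq:N-def}, which vanishes by the Neumann condition of $\mathrm{[P^{sym}_n]}$. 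Hence $\widetilde V=\Xsym$.

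With this identification, a direct expansion using $V(-1,0)=0$ and $\widetilde V(\tfrac14,-1)=V(0,1)$ shows
\[
[\mathcal N V](0,0)\;=\;V(0,0)-\cos\theta^{\mathrm h}\sin\theta^{\mathrm v}V(0,1)-\tfrac12\sin\theta^{\mathrm h}\cos\theta^{\mathrm v}V(1,0)\;=\;-[\Delta^{(m)}\Xsym](\tfrac14,0),
\]
which equals $L_{n+1}$ by Lemma~\ref{lem:laplacians}; the argument at the other endpoint $(n,0)$ is identical. The anti-symmetric case is completely parallel: one extends $V^{\mathrm{anti}}_{n+1}$ by $\widetilde V(\bar c)=-\widetilde V(c)$, which now places the Dirichlet zeros of $\mathrm{[P^{anti}_{n+1}]}$ on the segment between $\mathbf v,\mathbf u$ and produces the massive harmonicity at the outer real-line corners, matching \eqref{eq:X-laplacian} and \eqref{eq:Xanti-vanishes}. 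Lemma~\ref{lem:uniqueness}(ii) then identifies $\widetilde V$ with $\Xanti$, and unpacking $[\mathcal N V](0,0)$ and $[\mathcal N V](n+1,0)$ yields \eqref{eq:Vanti-laplacians} via Lemma~\ref{lem:laplacians}. The only delicate point — and the main potential obstacle — is the sheet-bookkeeping of the conjugation near the branching corners $k=0,n$ (resp.~$k=0,n+1$): a misplaced sign in the extension would destroy the exact cancellation between the Neumann defect of $V$ and the Laplacian defect of the double-cover observable, which is precisely the mechanism producing the identities.
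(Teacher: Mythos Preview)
Your proposal is correct and follows essentially the same route as the paper: extend the half-plane solution to a spinor on the double cover by (anti-)symmetric reflection, invoke the uniqueness Lemma~\ref{lem:uniqueness} to identify it with $\Xsym$ (resp.~$\Xanti$), and then read off the Neumann values from the Laplacian identities of Lemma~\ref{lem:laplacians} via the definition~\eqref{eq:N-def}. The paper phrases the extension slightly more concisely (``vanishing on the cuts'' rather than tracking sheets explicitly), but the mechanism and the final comparison $[\mathcal N V](\,\cdot\,,0)=-[\Delta^{(m)}X^{\mathrm{sym/anti}}_{[\mathbf v,\mathbf u]}](\,\cdot\,)$ are identical.
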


\begin{proof} Consider a section of the double cover~$[(\mathbb Z\pm \frac{1}{4})\times\mathbb Z;\mathbf v,\mathbf u]$ with a cut going along the horizontal axis outside the segment $[0,n\!+\!\tfrac{1}{2}]$ for the problem~$\mathrm{[P^{sym}_n]}$ and along $[0,n\!+\!\tfrac{1}{2}]$ for the problem~$\mathrm{[P^{anti}_{n+1}]}$. Define two functions on north-west and north-east, respectively, corners of the grid by
\[
V^\mathrm{sym}_{[\mathbf v,\mathbf u]}((\pm k\!+\!\tfrac{1}{4},s)):=V^\mathrm{sym}_n(k,s)\qquad V^\mathrm{anti}_{[\mathbf v,\mathbf u]}((\pm k\!-\!\tfrac{1}{4},s)):=\pm V^\mathrm{anti}_{n+1}(k,s).
\]
These functions vanish on the cuts and thus can be viewed as bounded spinors on the double covers~$[(\mathbb Z\pm \frac{1}{4})\times\mathbb Z;\mathbf v,\mathbf u]$, which satisfy all the conditions~\eqref{eq:X-laplacian}--\eqref{eq:Xanti-values}. Due to the uniqueness result provided by Lemma~\ref{lem:uniqueness}, this implies~$\Xsym=V^\mathrm{sym}_{[\mathbf v,\mathbf u]}$ and~$\Xanti=V^\mathrm{anti}_{[\mathbf v,\mathbf u]}$. The identities~\eqref{eq:Vsym-laplacians}, \eqref{eq:Vanti-laplacians} now easily follow from~\eqref{eq:Xsym-laplacians}, \eqref{eq:Xanti-laplacians} and the definition~\eqref{eq:N-def}.
\end{proof}

Let~$V$ be a solution to the problem~$\mathrm{[P^{sym}_n]}$, recall that this solution is unique due to Lemma~\ref{lem:uniqueness}. To construct it explicitly, we start with a heuristic argument. Assume for a moment that the Fourier series
\begin{equation}
\textstyle \widehat{V}_s(e^{it}):=\sum_{k \in \mathbb{Z}} V(k,s)e^{ikt},\quad s\ge 0, \quad t\in [0,2\pi],\nonumber
\end{equation}
are well-defined. The massive harmonicity property~$[\Delta^{(m)}V](k,s)=0$ for~$s\ge 1$ can be rewritten as the recurrence relation
\[
[1-\sin\theta^{\mathrm{h}}\cos\theta^{\mathrm{v}}\cos t]\cdot \widehat{V}_s(e^{it})\ =\ \tfrac{1}{2}\cos\theta^{\mathrm{h}}\sin\theta^{\mathrm{v}}\cdot  [\widehat{V}_{s-1}(e^{it}) +\widehat{V}_{s+1}(e^{it})].
\]
A general solution to this recurrence relation is a linear combination of the functions $(y_-(t;\theta^{\mathrm{h}}\!,\theta^{\mathrm{v}}))^s$ and $(y_+(t;\theta^{\mathrm{h}}\!,\theta^{\mathrm{v}}))^s$, where $0\le y_-\le 1\le y_+$ solve the quadratic equation
\[
[1-\sin\theta^{\mathrm{h}}\cos\theta^{\mathrm{v}}\cos t]\cdot y(t)\ =\ \tfrac{1}{2}\cos\theta^{\mathrm{h}}\sin\theta^{\mathrm{v}}\cdot[(y(t))^2+1].
\]
At level~$s=0$ we have $\widehat{V}_0(e^{it})=Q_n(e^{it})$, an unknown trigonometric polynomial of degree $n$. Since we are looking for bounded Fourier coefficients of~$\widehat{V}_s$, we are tempted to say that $\widehat{V}_s(e^{it})=Q_n(e^{it})\cdot(y_-(t;\theta^{\mathrm{h}}\!,\theta^{\mathrm{v}}))^s$ for~$s\ge 1$. A straightforward computation shows that
\begin{align}
\notag 
\textstyle \sum_{k\in\mathbb Z}[\mathcal{N}V](k,0)e^{ikt}\ &=\ w(t;\theta^{\mathrm{h}}\!,\theta^{\mathrm{v}})Q_n(e^{it}),\\
\label{eq:w-def}
w(t;\theta^{\mathrm{h}}\!,\theta^{\mathrm{v}})\
& :=\ \big[(1\!-\!\sin\theta^{\mathrm{h}}\cos\theta^{\mathrm{v}}\cos t)^2-(\cos\theta^{\mathrm{h}}\sin\theta^{\mathrm{v}})^2\big]^{1/2}.
\end{align}
The key observation of this section is that {the property~$[\mathcal{N}V](k,0)=0$, $k\in[1,n-1]$, reads as} a simple orthogonality condition for the polynomial~$Q_n(e^{it})$.

We now use the heuristics developed in the previous paragraph to rigorously identify the unique solution to~$\mathrm{[P^{sym}_n]}$.
\begin{lem}
\label{lem:solution-sym}
Let~$n\ge 1$. If a trigonometric polynomial \mbox{$Q_n(e^{it})=D_n\!+\ldots+\!D_n^{\star}e^{int}$} of degree~$n$ with prescribed free and leading coefficients is orthogonal to the family $\{e^{it},\ldots,e^{i(n-1)t}\}$ with respect to the measure $w(t;\theta^{\mathrm{h}}\!,\theta^{\mathrm{v}})\frac{dt}{2\pi}$, then the function
\[
\textstyle V(k,s):=\frac{1}{2\pi}\int_{-\pi}^{\pi} e^{-ikt} Q_n(e^{it})(y_-(t;\theta^{\mathrm{h}}\!,\theta^{\mathrm{v}}))^sdt
\]
is uniformly bounded and solves the problem $\mathrm{[P^{sym}_n]}$. Moreover,
\begin{equation}
\label{eq:<Qn>=L-sym}
\langle Q_n,1\rangle_{\frac{w}{2\pi}dt} =L_{n+1} \quad\text{and}\quad \langle Q_n,e^{int}\rangle_{\frac{w}{2\pi}dt}=L_{n+1}^\star,
\end{equation}
where the scalar product is taken with respect to the same measure on the unit circle.
\end{lem}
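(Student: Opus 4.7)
My plan is to verify the four defining conditions of the problem $\mathrm{[P^{sym}_n]}$ directly from the Fourier-integral definition of $V$, using the elementary properties of the quadratic equation defining $y_\pm$, and then to deduce the evaluation formulas~\eqref{eq:<Qn>=L-sym} as an immediate corollary via Lemma~\ref{lem:H-to-dbl-cover}.

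First, I would record the elementary facts $0\le y_-(t)\le 1\le y_+(t)$ and $y_-(t)y_+(t)=1$ for all $t\in[-\pi,\pi]$; the inequality $|y_-|\le 1$ immediately yields the uniform bound $|V(k,s)|\le \|Q_n\|_\infty$. Next, for $s\ge 1$, the massive harmonicity $[\Delta^{(m)}V](k,s)=0$ reduces, after exchanging the (finite) linear combination implicit in $\Delta^{(m)}$ with the integral, to the pointwise identity~\eqref{eq:Fourier-series-reccurence} on the circle, which $y_-(t)$ satisfies by its very definition; this is the \emph{raison d'\^etre} of the ansatz $\widehat V_s(e^{it})=Q_n(e^{it})(y_-(t))^s$. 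At level $s=0$, the values $V(k,0)$ are the Fourier coefficients of the trigonometric polynomial $Q_n(e^{it})$ of degree~$n$, which gives $V(k,0)=0$ for $k\notin[0,n]$ automatically, together with $V(0,0)=D_n$ and $V(n,0)=D_n^\star$ by the hypothesis on the extremal coefficients of~$Q_n$.

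The key computation is the fourth step, namely verifying the identity
\[
[\mathcal{N}V](k,0)\ =\ \int_{-\pi}^{\pi}e^{-ikt}\,w(t;\theta^\mathrm{h}\!,\theta^\mathrm{v})\,Q_n(e^{it})\,\tfrac{dt}{2\pi}
\]
by substituting the definition~\eqref{eq:N-def} of $\mathcal{N}$ into the integral for $V(k,0)$ and $V(k,1)$, and then using $y_-(t)+y_-(t)^{-1}=y_-(t)+y_+(t)$. The main obstacle is purely algebraic: one must recognise that the resulting combination in $y_-(t)$ and $\cos t$ simplifies to the radical~\eqref{eq:w-def}, which ultimately reduces to the identity $(y_+(t)-y_-(t))\cdot\cos\theta^\mathrm{h}\sin\theta^\mathrm{v}=w(t;\theta^\mathrm{h}\!,\theta^\mathrm{v})$ obtained from the quadratic formula applied to the equation defining~$y_\pm$. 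Once this reduction is done, the assumed orthogonality of $Q_n$ to $\{e^{it},\ldots,e^{i(n-1)t}\}$ against $w\,\tfrac{dt}{2\pi}$ kills exactly the frequencies $k\in[1,n-1]$, completing the verification of $\mathrm{[P^{sym}_n]}$.

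Finally, the identities~\eqref{eq:<Qn>=L-sym} follow at once: the above integral representation of $[\mathcal N V](k,0)$ is literally the inner product $\langle Q_n,e^{ikt}\rangle_{\frac{w}{2\pi}dt}$ evaluated at frequency $k$, and Lemma~\ref{lem:H-to-dbl-cover} identifies these two boundary values at $k=0$ and $k=n$ with $L_{n+1}$ and $L_{n+1}^\star$, respectively.
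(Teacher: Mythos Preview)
Your proposal is correct and follows exactly the approach of the paper's proof, which is stated tersely there as ``the required properties of the values $V(k,0)$ and $[\mathcal{N}V](k,0)$ follow from the assumptions made on the polynomial $Q_n$'' together with an appeal to~\eqref{eq:Vsym-laplacians}; you have simply spelled out the details. One minor slip: the algebraic identity you quote should read $\tfrac{1}{2}(y_+(t)-y_-(t))\cos\theta^\mathrm{h}\sin\theta^\mathrm{v}=w(t;\theta^\mathrm{h}\!,\theta^\mathrm{v})$ (the factor $\tfrac{1}{2}$ comes from Vieta's formulas applied to the defining quadratic), but this does not affect the argument.
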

\begin{proof} The values~$V(k,s)$ are uniformly bounded as~$0\le y_-\le 1$, the massive harmonicity property~$[\Delta^{(m)}V](k,s)=0$ for~$s\ge 1$ is straightforward and the required properties of the values~$V(k,0)$ and~$[\mathcal{N}F](k,0)$ follow from the assumptions made on the polynomial~$Q_n$. The identities~\eqref{eq:Vsym-laplacians} give~\eqref{eq:<Qn>=L-sym}.
\end{proof}

A similar construction can be done for the problem~$\mathrm{[P^{anti}_{n+1}]}$, {see Fig.~\ref{fig:sym-anti-bvp}C.} The only difference is that at level~$s=0$ we now require that
$\widehat V_0(e^{it})$ does not contain monomials~$e^{it},\dots,e^{i(n+1)t}$ while
\[
\textstyle \sum_{k\in\mathbb Z}[\mathcal{N}V](k,0)e^{ikt}\ =\ w(t;\theta^{\mathrm{h}}\!,\theta^{\mathrm{v}})\widehat V_0(e^{it})\ =\ -L_n+\ldots+L_n^\star e^{i(n+1)t}
\]
is a trigonometric polynomial of degree~$n\!+\!1$. In other words, this polynomial is orthogonal to~$\{e^{it},\dots,e^{int}\}$ with respect to the weight
\begin{equation}
\label{eq:whash-def}
w^\# (t;\theta^{\mathrm{h}}\!,\theta^{\mathrm{v}})\; :=\; (w(t;\theta^{\mathrm{h}}\!,\theta^{\mathrm{v}}))^{-1},\qquad t\in [0,2\pi].
\end{equation}
provided that~$w^\#$ is integrable on the unit circle. One can easily see from~\eqref{eq:w-def} that this is true if and only if~$\theta^{\mathrm{h}}+\theta^{\mathrm{v}}\ne \frac{\pi}{2}$. We discuss a modification of the next claim required for the analysis of the critical case~$\theta^{\mathrm{h}}+\theta^{\mathrm{v}}=\frac{\pi}{2}$ in Section~\ref{subsect:crit-homogen}.
\begin{lem}
\label{lem:solution-anti}
Let~$n\ge 0$ and assume that~$\theta^{\mathrm{h}}+\theta^{\mathrm{v}}\ne \frac{\pi}{2}$. If a trigonometric polynomial $Q^\#_{n+1}(e^{it})=-L_n+\ldots+L_n^{\star}e^{i(n+1)t}$ is orthogonal to the family $\{e^{it},\ldots,e^{int}\}$ with respect to the measure~$w^\#\!(t;\theta^{\mathrm{h}}\!,\theta^{\mathrm{v}})\frac{dt}{2\pi}$, then the function
\begin{equation}
\label{eq:solution-anti}
\textstyle V(k,s):=\frac{1}{2\pi}\int_{-\pi}^{\pi} e^{-ikt} Q^\#_{n+1}(e^{it})(y_-(t;\theta^{\mathrm{h}}\!,\theta^{\mathrm{v}}))^s w^\#\!(t;\theta^{\mathrm{h}}\!,\theta^{\mathrm{v}})dt
\end{equation}
is uniformly bounded and solves the problem $\mathrm{[P^{anti}_{n+1}]}$. Moreover,
\begin{equation}
\label{eq:<Qn>=L-anti}
\langle Q^\#_{n+1},1\rangle_{\frac{w^\#}{2\pi}dt} = -D_{n+1} \quad\text{and}\quad \langle Q^\#_{n+1},e^{i(n+1)t}\rangle_{\frac{w^\#}{2\pi}dt}= D_{n+1}^\star,
\end{equation}
where the scalar product is taken with respect to the same measure on the unit circle.
\end{lem}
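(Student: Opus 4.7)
The plan is to follow the template of the proof of Lemma~\ref{lem:solution-sym}, with adjustments accounting for the extra weight $w^\#$ in the integrand. The assumption $\theta^{\mathrm{h}} + \theta^{\mathrm{v}} \ne \tfrac{\pi}{2}$ keeps $w$ strictly positive on $[-\pi,\pi]$, so $w^\#$ is bounded; together with $|y_-| \le 1$ and the continuity of $Q^\#_{n+1}$ this yields uniform boundedness of the values $V(k,s)$. The massive harmonicity $[\Delta^{(m)}V](k,s) = 0$ for $s \ge 1$ is immediate from the quadratic equation satisfied by $y_-$, exactly as in the symmetric case.

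All boundary conditions of $\mathrm{[P^{anti}_{n+1}]}$ except the values at $(0,0)$ and $(n+1,0)$ (which will be extracted at the end) I would check directly from the integral representation. Fourier inversion gives $V(k,0) = \langle Q^\#_{n+1}, e^{ikt}\rangle_{w^\#(t)\,dt/(2\pi)}$, which vanishes for $k \in [1,n]$ by the orthogonality hypothesis. Repeating the derivation of~\eqref{eq:NV-fourier} with $\widehat V_0(e^{it}) = Q^\#_{n+1}(e^{it}) \cdot w^\#(t)$ and using $w \cdot w^\# \equiv 1$ gives
\[
\textstyle \sum_{k \in \mathbb Z}[\mathcal{N}V](k,0)\,e^{ikt} \;=\; w(t)\,\widehat V_0(e^{it}) \;=\; Q^\#_{n+1}(e^{it}),
\]
so that $[\mathcal{N}V](k,0) = (Q^\#_{n+1})_k$ vanishes outside $[0,n+1]$ and equals $-L_n$ at $k=0$, $L_n^\star$ at $k=n+1$, by the prescribed form of $Q^\#_{n+1}$.

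To extract~\eqref{eq:<Qn>=L-anti}, I would identify $V$ with the canonical solution $V^{\mathrm{anti}}_{n+1}$ of $\mathrm{[P^{anti}_{n+1}]}$ produced from $X^{\mathrm{anti}}_{[\mathbf v,\mathbf u]}$ via Lemma~\ref{lem:H-to-dbl-cover}. Setting $W := V - V^{\mathrm{anti}}_{n+1}$, the steps above and~\eqref{eq:Vanti-laplacians} show that $W$ is bounded, massive harmonic in the open upper half-plane, Dirichlet-zero on $[1,n]\times\{0\}$, and Neumann-zero on the \emph{entire} boundary $\mathbb Z \times \{0\}$. Lifting $W$ to a bounded spinor on $[(\mathbb Z - \tfrac{1}{4})\times \mathbb Z; \mathbf v, \mathbf u]$ as in the proof of Lemma~\ref{lem:H-to-dbl-cover}, the Dirichlet data becomes vanishing on branch-cut corners, while the uniformly zero Neumann data converts into massive harmonicity across the rest of the real axis and, crucially, into the absence of any source at the branchings $\mathbf v$ and $\mathbf u$ (this is precisely where the matching of the Neumann values $-L_n$ and $L_n^\star$ is used). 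A martingale / optional-stopping argument in the spirit of the proof of Lemma~\ref{lem:uniqueness}(ii) then forces $W \equiv 0$, so $V(0,0) = -D_{n+1}$ and $V(n+1,0) = D_{n+1}^\star$, which is~\eqref{eq:<Qn>=L-anti}.

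The main obstacle is this last step: carefully verifying that the mixed Dirichlet/Neumann lift produces a bounded massive harmonic spinor on the \emph{full} double cover with no residual source at the branchings. This cancellation is essentially dual to the computation in Lemma~\ref{lem:laplacians} and explains why the leading and trailing coefficients of $Q^\#_{n+1}$ in the statement are forced to equal~$-L_n$ and~$L_n^\star$.
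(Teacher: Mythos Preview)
Your proof is correct and is precisely the argument the paper's one-line ``the proof repeats the arguments used in the proof of Lemma~\ref{lem:solution-sym}'' leaves implicit; you have correctly spotted that the Dirichlet and Neumann roles are swapped relative to the symmetric case, so that verifying $V(0,0)=-D_{n+1}$ and $V(n+1,0)=D_{n+1}^\star$ genuinely requires comparing with the observable and a Neumann-type uniqueness, rather than following directly from the coefficients of $Q^\#_{n+1}$.

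Your ``main obstacle'' can in fact be sidestepped: instead of lifting $W:=V-V^{\mathrm{anti}}_{n+1}$ to the branched double cover, extend it \emph{evenly} to $\mathbb Z^2$ by $W(k,-s):=W(k,s)$. A one-line computation gives $[\Delta^{(m)}W](k,0)=-[\mathcal N W](k,0)$, which vanishes for every $k\notin[1,n]$ (using the matching of the Neumann values $-L_n,L_n^\star$ at $k=0,n{+}1$ that you already established), while $W(k,0)=0$ for $k\in[1,n]$. Since $\theta^{\mathrm h}+\theta^{\mathrm v}\ne\tfrac\pi2$, the associated random walk has strictly positive killing rate and therefore almost surely either dies or hits the finite Dirichlet-zero set $\{(k,0):1\le k\le n\}$; optional stopping then forces $W\equiv 0$ on all of $\mathbb Z^2$ without any branching analysis.
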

\begin{proof}
The proof repeats the arguments used in the proof of Lemma~\ref{lem:solution-sym}.
\end{proof}

\subsection{Horizontal spin-spin correlations below criticality} \label{subsect:below-crit-homogen}
In this section we combine the results of Lemmas~\ref{lem:solution-sym} and~\ref{lem:solution-anti} into a single result on asymptotics of the horizontal spin-spin correlations~$D_n$ as~$n\to \infty$. We assume that~$\theta^{\mathrm{h}}+\theta^{\mathrm v}<\frac{\pi}{2}$ and rely upon the fact that~$D_n^{\star}\to 0$ as~$n\to \infty$. This can be easily derived from the monotonicity of~$D_n$ with respect to the temperature and the fact that~$D_n=D_n^\star\to 0$ as~$n\to\infty$ in the critical regime \mbox{$\theta^{\mathrm{h}}+\theta^{\mathrm{v}}=\frac{\pi}{2}$} which is discussed in the next section.

\begin{theo}[{\bf Kauffman--Onsager--Yang}] \label{thm:KOY}
Let~$\theta^{\mathrm{h}}+\theta^{\mathrm{v}}<\frac{\pi}{2}$. Then, the spontaneous magnetization~$\mathcal{M}(\theta^\mathrm{h},\theta^\mathrm{v})$ of the homogeneous Ising model is given by
\begin{equation}
\label{eq:M=(KOY-thm)}
\mathcal{M}(\theta^\mathrm{h}\!,\theta^\mathrm{v})\ := \lim\limits_{n\rightarrow \infty} D_n^{1/2}\ =\ \big[1-(\tan\theta^{\mathrm{h}}\tan\theta^{\mathrm{v}})^2\big]^{1/8}.
\end{equation}
(Note that under the parametrization~\eqref{eq:parametrization-model} one has $\tan\theta_e=(\sinh(2\beta J_e))^{-1}$.)
\end{theo}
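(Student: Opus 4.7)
Plan:

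The plan is to follow the classical Onsager--Kaufman roadmap, phrased directly in terms of orthogonal polynomials on the unit circle rather than through Toeplitz determinants. First, I would use the representation of Lemma~\ref{lem:solution-sym}: the polynomial $Q_n$ has constant coefficient $D_n$ and leading coefficient $D_n^\star$, and is orthogonal to $\{z,\ldots,z^{n-1}\}$ with respect to $w(t)\,dt/(2\pi)$. Expanding $Q_n = A\Phi_n + B\Phi_n^*$ in the basis of the monic OPUC $\Phi_n$ for $w$ and its reverse polynomial $\Phi_n^*(z):=z^n\Phi_n(1/z)$ -- both with real coefficients, since $w$ is even in $t$ -- and writing $\alpha_n := \Phi_n(0)\in\mathbb{R}$, $\kappa_n^2 := \|\Phi_n\|_w^2$, the identity~\eqref{eq:<Qn>=L-sym} becomes
\begin{equation*}
L_{n+1} \;=\; \frac{\kappa_n^2\,(D_n-\alpha_n D_n^\star)}{1-\alpha_n^2},\qquad L_{n+1}^{\star} \;=\; \frac{\kappa_n^2\,(D_n^\star-\alpha_n D_n)}{1-\alpha_n^2}.
\end{equation*}
Lemma~\ref{lem:solution-anti} yields an analogous system linking $(L_n,L_n^\star)$ to $(D_{n+1},D_{n+1}^\star)$ via Verblunsky coefficients $\tilde\alpha_n$ and squared norms $\tilde\kappa_n^2$ of the OPUC for the reciprocal weight $w^\#=1/w$.

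Next, I would observe that subcriticality $\theta^\mathrm{h}+\theta^\mathrm{v}<\frac{\pi}{2}$ is equivalent to $\alpha+\beta=\sin(\theta^\mathrm{h}+\theta^\mathrm{v})<1$, where $\alpha := \sin\theta^\mathrm{h}\cos\theta^\mathrm{v}$ and $\beta := \cos\theta^\mathrm{h}\sin\theta^\mathrm{v}$; consequently $w>0$ is real-analytic on the unit circle and both $w,w^\#$ lie in the Szeg\H{o} class. By Szeg\H{o}'s theorem, $\alpha_n,\tilde\alpha_n\to 0$, $\kappa_n^2\to e^{G_1}$ and $\tilde\kappa_n^2\to e^{-G_1}$, where $G_1 := \int_0^{2\pi}\log w(t)\,dt/(2\pi)$. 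Kramers--Wannier duality together with the monotonicity of $D_n$ in temperature and the forthcoming critical-case result yield $D_n^\star\to 0$ (and hence $L_n^\star\to 0$). Chaining the two systems and passing to the limit -- or, equivalently, using the Wiener--Hopf/Szeg\H{o} factorization $w=|D(e^{it})|^2$ with $D$ outer in the unit disk to identify $Q_\infty(z):=\lim_n Q_n(z) = \mathcal{M}^2\,D(0)/D(z)$ -- pins down $\mathcal{M}^2$ via an explicit Szeg\H{o}-type formula.

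Finally, the explicit evaluation rests on the factorization $w^2 = (1-\beta-\alpha\cos t)(1+\beta-\alpha\cos t)$, the pleasant identity $(1\mp\beta)^2-\alpha^2 = (\cos\theta^\mathrm{h}\mp\sin\theta^\mathrm{v})^2$, and the standard integral $\int_0^{2\pi}\log(a-b\cos t)\,dt/(2\pi)=\log\frac{a+\sqrt{a^2-b^2}}{2}$ for $a>|b|>0$. Combined, these give $(1\mp\beta)+\sqrt{(1\mp\beta)^2-\alpha^2}=(1+\cos\theta^\mathrm{h})(1\mp\sin\theta^\mathrm{v})$, so that the Szeg\H{o} constant of $w$ is explicit; a routine trigonometric simplification then produces $\mathcal{M}^4 = 1-(\tan\theta^\mathrm{h}\tan\theta^\mathrm{v})^2$, matching~\eqref{eq:M=(KOY-thm)}. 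The main obstacle is the limit identification of $\mathcal{M}^2$ itself: a naive limit in either OPUC system alone only yields a tautology (e.g.\ $L_\infty = e^{G_1}\mathcal{M}^2$), so extracting $\mathcal{M}^2$ requires combining both lemmas -- either through a telescoping product obtained from iterating the chained recursion, or by invoking the Szeg\H{o} factorization to fix the limit function $Q_\infty$ up to a single multiplicative normalization.
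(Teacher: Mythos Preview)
Your framework matches the paper's exactly: expand $Q_n$ in $\Phi_n,\Phi_n^*$, read off the $2\times2$ relation between $(D_n,D_n^\star)$ and $(L_{n+1},L_{n+1}^\star)$, and do the analogous thing for $w^\#$. The gap is precisely at the point you flag as ``the main obstacle,'' and neither of your two proposed fixes quite closes it.

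For the telescoping route, you need a \emph{scalar} quantity that the chained $2\times2$ recursions multiply. The paper's trick is to pass to the quadratic invariant
\[
L_{n+1}^2-(L_{n+1}^\star)^2\ =\ \kappa_n^2\,\kappa_{n-1}^2\,\bigl(D_n^2-(D_n^\star)^2\bigr),\qquad
D_{n+1}^2-(D_{n+1}^\star)^2\ =\ \tilde\kappa_{n+1}^2\,\tilde\kappa_n^2\,\bigl(L_n^2-(L_n^\star)^2\bigr),
\]
which, iterated down to the explicit seed $L_0^2-(L_0^\star)^2=\cos(\theta^{\mathrm h}\!+\!\theta^{\mathrm v})\cos(\theta^{\mathrm h}\!-\!\theta^{\mathrm v})$, expresses $D_{2m+1}^2-(D_{2m+1}^\star)^2$ as $\prod_{k=0}^{2m+1}\tilde\kappa_k^2\cdot\prod_{k=0}^{2m-1}\kappa_k^2$ times that seed. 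Without this difference-of-squares observation your ``chained recursion'' remains a product of $2\times2$ matrices with no obvious closed-form limit.

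Second, the product $\prod_k\kappa_k^2\cdot\prod_k\tilde\kappa_k^2$ is \emph{not} controlled by the first Szeg\H o constant $e^{G_1}$ you compute: since $\tilde\kappa_k^2\kappa_k^2\to 1$, the limit of the product is a $0\cdot\infty$ form governed by the \emph{strong} Szeg\H o theorem (the exponent $\sum_{k\ge1}k|\widehat{\log w}(k)|^2$). Your evaluation of $\int\log w$ via $(1\mp\beta)+\sqrt{(1\mp\beta)^2-\alpha^2}=(1+\cos\theta^{\mathrm h})(1\mp\sin\theta^{\mathrm v})$ is correct and pleasant, but it only gives $e^{G_1}$; you still need the strong-Szeg\H o factor, which the paper gets by writing $w(t)=C\,|1-q_-^2e^{it}|\,|1-q_+^2e^{it}|$ with $q_\pm^2=\tan\tfrac12\theta^{\mathrm h}\tan(\tfrac\pi4\mp\tfrac12\theta^{\mathrm v})$ and summing the resulting series explicitly. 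Your ``Szeg\H o factorization / $Q_\infty$'' alternative runs into the same tautology you already noticed: $Q_\infty(0)=\mathcal M^2$ and $\langle Q_\infty,1\rangle_w=\mathcal M^2|D(0)|^2$ just reproduce $L_\infty=e^{G_1}\mathcal M^2$.
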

{\begin{rem}
It is worth mentioning that the value~$\tan\theta^\mathrm{h}\tan\theta^\mathrm{v}$ also admits a fully \emph{geometric} interpretation as Baxter's elliptic parameter of the Z-invariant Ising model on isoradial graphs~\cite[Eq.~(7.10.50)]{baxter-book}, see Section~\ref{subsect:isoradial} for details.
\end{rem}}
\begin{proof} Classically, the computation given below is based upon the strong Szeg\"o theorem on the asymptotics of the norms of orthogonal polynomials on the unit circle. Note however that we use this result in its simplest form, for \emph{real} weights~$w$ and~$w^\#$ given by~\eqref{eq:w-def} and~\eqref{eq:whash-def}.

Let $\Phi_n(z)=z^n+\ldots -\alpha_{n-1}$ be the $n$-th monic orthogonal polynomial on the unit circle with respect to the measure~$w(t;\theta^{\mathrm{h}}\!,\theta^{\mathrm{v}})\frac{dt}{2\pi}$, the real number~$\alpha_{n-1}$ is called the \emph{Verblunsky coefficient}, recall that~$|\alpha_{n-1}|<1$ for all~$n\ge 1$. Denote by~$\Phi_n^* :=z^{n}\Phi_n(z^{-1})=-\alpha_{n-1}z^n+\ldots+1$ the reciprocal polynomial. Matching the free and the leading coefficients, it is easy to see that the polynomial~$Q_n$ from Lemma~\ref{lem:solution-sym} can be written as
\[
Q_n(e^{it})=c_n\Phi_n(e^{it})+c_n^*\Phi_n^*(e^{it}),\ \ \text{where}\ \ \left[ \begin{array}{c} c_n^* \\ c_n \end{array} \right]= \left[ \begin{array}{cc} 1 & -\alpha_{n-1} \\ -\alpha_{n-1} & 1 \end{array}\right]^{-1}\left[\begin{array}{c} D_n^\star \\ D_n \end{array} \right]\!.
\]
Moreover, one has~$\langle\Phi_n,e^{int}\rangle=\langle\Phi^*_n,1\rangle=\|\Phi_n\|^2=:\beta_n=\beta_0\prod_{k=1}^n(1-\alpha_{k-1}^2)$ (e.g.,~see~\cite[Theorem~2.1]{simon-OPUC-one-foot}) and $\langle \Phi_n,1\rangle= \langle \Phi^*_n,e^{int}\rangle=0$, here and below we drop the measure~$w\frac{dt}{2\pi}$ from the notation for shortness. Therefore, the identities~\eqref{eq:Vsym-laplacians} imply that
\begin{equation}
\label{eq:Dn-to-Ln+1}
\left[ \begin{array}{c} L_{n+1}^\star \\ L_{n+1} \end{array} \right]= \beta_n\left[ \begin{array}{c} c_n^* \\ c_n \end{array} \right] =\beta_{n-1} \left[ \begin{array}{cc} 1 & \alpha_{n-1} \\ \alpha_{n-1} & 1 \end{array}\right]\;\left[\begin{array}{c} D_n^\star \\ D_n \end{array} \right],
\end{equation}
and hence
\begin{equation}
\label{eq:Dn-to-Ln+1-squares}
L_{n+1}^2-(L_{n+1}^\star)^2\ =\ \beta_n\beta_{n-1}\cdot (D_n^2-(D_n^\star)^2)\quad \text{for}\ \ n\ge 1.
\end{equation}

Similarly, it follows from Lemma~\ref{lem:solution-anti} that
\begin{equation}
\label{eq:Ln-to-Dn+1}
\left[ \begin{array}{c} D_{n+1}^\star \\ -D_{n+1} \end{array} \right]\ =\ \beta^\#_n\left[ \begin{array}{cc} 1 & \alpha^\#_n \\ \alpha^\#_n & 1 \end{array}\right]\;\left[\begin{array}{c} L_n^\star \\ -L_n \end{array} \right],
\end{equation}
where~$\alpha^\#_n$ and~$\beta^\#_n$ stand for the Verblunsky coefficients and squared norms of monic orthogonal polynomials corresponding to the weight~\eqref{eq:whash-def}. In particular, we have
\begin{equation}
\label{eq:Ln-to-Dn+1-squares}
D_{n+1}^2-(D_{n+1}^\star)^2\ =\ \beta^\#_{n+1}\beta^\#_n\cdot (L_n^2-(L_n^\star)^2)\quad \text{for}\ \ n\ge 0.
\end{equation}
The recurrence relations~\eqref{eq:Ln-to-Dn+1-squares}, \eqref{eq:Dn-to-Ln+1-squares} applied for even and odd indices~$n$, respectively, lead to the formula
\begin{align*}
\textstyle D_{2m+1}^2-(D_{2m+1}^\star)^2\ &=\ \beta^\#_{2m+1}\beta^\#_{2m}\cdot\beta_{2m-1}\beta_{2m-2}\cdot (D_{2m-1}^2-(D_{2m-1}^\star)^2)\\
&\textstyle =\ \dots\ =\ \prod_{k=0}^{2m+1}\beta^\#_k\cdot\prod_{k=0}^{2m-1}\beta_k\cdot (L_0^2-(L_0^*)^2)\,,
\end{align*}
note that~$L_0^2-(L_0^*)^2=(\cos\theta^\mathrm{v})^2-(\sin\theta^\mathrm{h})^2= \cos(\theta^\mathrm{h}\!+\theta^\mathrm{v})\cos(\theta^\mathrm{h}\!-\theta^\mathrm{v})$.

Recall that~$D^\star_{2m+1}\to 0$ as~$m\to \infty$. It remains to apply the Szeg\"o theory (e.g., see \cite[Section~5.5]{szego-grenander-book} or~\cite[Theorems~8.1~and~8.5]{simon-OPUC-one-foot}) to the weights~\eqref{eq:w-def} and~\eqref{eq:whash-def}. 
A straightforward computation shows that
\begin{align*}
w(t;\theta^\mathrm{h}\!,\theta^\mathrm{v})\ =\ Cw_{q_-}(t)w_{q_+}(t),\qquad & \text{where}\ \ C=(\cos \tfrac{1}{2}\theta^\mathrm{h})^2\cos\theta^\mathrm{v},\\ w_q(t):=\big[(1\!+\!q^2)^2-(2q\cos\tfrac{t}{2})^2\big]^{1/2}\!=|1\!-\!q^2e^{it}|,\qquad  &q_{\pm}^2=\tan(\tfrac{1}{2}\theta^\mathrm{h})\tan(\tfrac{\pi}{4}\mp\tfrac{1}{2}\theta^\mathrm{v})\,.
\end{align*}
Since~$w^\#(t;\theta^\mathrm{h}\!,\theta^\mathrm{v})=(w(t;\theta^\mathrm{h}\!,\theta^\mathrm{v}))^{-1}$, we have
\[
\textstyle \lim_{m\to\infty} \prod_{k=0}^{2m+1}\beta^\#_k\cdot\prod_{k=0}^{2m-1}\beta_k\ =\ C^{-2}\cdot G^2,
\]
where
\begin{align*}
G\ &=\ \exp\Big[\frac{1}{4\pi} \iint_{\mathbb D} \Big|\frac{d}{dz} \big(\log(1\!-\!q_-^2 z)+ \log(1-q_+^2 z) \Big|^2 \mathrm{d}^2z \Big] \\ 
&\textstyle =\ \exp \big[-\sum_{k\geq 1} \tfrac{1}{4k} (q_-^{2k} +q_+^{2k})^2 \big]\ =\ \big[(1-q_-^4)(1-q_+^4)(1-q_-^2 q_+^2)^2\big]^{-1/4}\\
&=\ {(\cos\tfrac{1}{2}\theta^\mathrm{h})^2(\cos\theta^\mathrm{v})^{1/2}(\cos\theta^\mathrm{h})^{-1/2}} {(\cos(\theta^\mathrm{h}\!+\theta^\mathrm{v})\cos(\theta^\mathrm{h}\!-\theta^\mathrm{v}))^{-1/4}}
\end{align*}
{(in the first line, $\mathrm{d}^2z$ denotes the area measure in the unit disc $\mathbb{D}$).} Putting all the factors together, one gets~\eqref{eq:M=(KOY-thm)}.
\end{proof}

\begin{rem}
The identity~\eqref{eq:Ln-to-Dn+1} with~$n=0$ also provides a formula
\[
D_1=\beta_0^\#\cdot [\cos\theta^\mathrm{v}-\alpha_0^\#\sin\theta^\mathrm{h}]
\]
for the \emph{energy density} (on a vertical edge) of the homogeneous Ising model.
\end{rem}

\subsection{Asymptotics of horizontal correlations~$\bm{D_n}$ as~$\bm{n\to\infty}$ at criticality} \label{subsect:crit-homogen}
Assume now that~$\theta^\mathrm{h}+\theta^\mathrm{v}=\frac{\pi}{2}$. Another classical result that we discuss in this section is that spin-spin correlations~$D_m$ decay like~$m^{-1/4}$ at large distances.
\begin{theo}[{\bf McCoy--Wu}] \label{thm:crit-homogen}
Let~$\mathcal{C}_\sigma:=2^{\frac{1}{6}}e^{\frac{3}{2}\zeta'(-1)}$, $\theta^\mathrm{h}=\theta$ and~$\theta^\mathrm{v}=\frac{\pi}{2}-\theta$. Then,
\begin{equation}
\label{eq:crit-homogen}
D_m\ \sim\ \mathcal{C}_\sigma^2\cdot (2m\cos\theta)^{-1/4}\quad \text{as}\ \ m\to\infty.
\end{equation}
\end{theo}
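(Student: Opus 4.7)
The plan is to adapt the strategy of Section~\ref{subsect:below-crit-homogen} to the critical regime, where the symmetric weight $w$ develops a Fisher--Hartwig singularity and the weight $w^\#$ ceases to be integrable. First, I would specialize the computation from the proof of Theorem~\ref{thm:KOY} to $\theta^{\mathrm{h}}+\theta^{\mathrm{v}}=\tfrac{\pi}{2}$: setting $\theta^{\mathrm{h}}=\theta$, the values $q_\pm^2 = \tan(\theta^{\mathrm{h}}/2)\tan(\tfrac{\pi}{4}\mp\theta^{\mathrm{v}}/2)$ reduce to $q_-^2 = \tan^2(\theta/2)=:q^2$ and $q_+^2 = 1$, so the symmetric weight factors as $w(t) = C\cdot|1-q^2e^{it}|\cdot|1-e^{it}|$, a Fisher--Hartwig symbol with a root singularity of strength $\tfrac{1}{2}$ at $t=0$. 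Separately, Kramers--Wannier self-duality at the critical point forces $D_n=D_n^\star$ and $L_n=L_n^\star$ for every $n\ge 0$, and the symmetric recurrence~\eqref{eq:Dn-to-Ln+1} collapses to the single identity $L_{n+1} = \beta_{n-1}(1+\alpha_{n-1})\,D_n$, where $(\alpha_n,\beta_n)$ are the Verblunsky coefficients and squared monic norms of OPUC associated with~$w$.

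Second, I would modify Lemma~\ref{lem:solution-anti} to bypass the non-integrability of $w^\#$. The natural move is to force the trigonometric polynomial $w\cdot\widehat{V}_0$ to vanish at $t=0$ by factoring it as $(1-e^{it})R_n(e^{it})$ with $\deg R_n=n$; this compensates the Fisher--Hartwig zero of $w$ and makes $\widehat{V}_0$ a bona fide bounded Fourier series. The resulting orthogonality conditions on $R_n$ are taken with respect to a modified, integrable weight and produce a second one-step recurrence coupling $D_{n+1}$ to $L_n$ through its OPUC data. Chaining this modified recurrence with the symmetric one, exactly as in the proof of Theorem~\ref{thm:KOY}, yields an explicit finite product formula for $D_{2m+1}^2$ in terms of the two sets of spectral quantities.

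Finally, I would extract the asymptotics of that finite product via Fisher--Hartwig theory (as surveyed, e.g., in~\cite{deift-its-krasovsky}). The exponent~$m^{-1/4}$ is the Fisher--Hartwig prediction $\alpha^2 = \tfrac{1}{4}$ for a single root singularity of strength $\alpha=\tfrac{1}{2}$; the universal constant $\mathcal{C}_\sigma^2 = 2^{1/3}e^{3\zeta'(-1)}$ enters through Barnes $G$-function ratios at $\tfrac{1}{2}$ (the Glaisher--Kinkelin part), and the remaining $\cos\theta$ inside $(2m\cos\theta)^{-1/4}$ is produced by the ordinary strong Szeg\H{o} constant of the smooth factor $|1-q^2e^{it}|$, after evaluating $1-q^4$ as a rational function of $\theta$. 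The main obstacle is the exact bookkeeping of multiplicative constants across the product formula and the Fisher--Hartwig expansion: one must use a fully explicit (not just leading-order) form of the asymptotics so that the transcendental Glaisher--Kinkelin contribution and the analytic Szeg\H{o} contribution assemble into $\mathcal{C}_\sigma^2(2\cos\theta)^{-1/4}$ without spurious factors, and one must simultaneously verify that the modified Verblunsky coefficients decay fast enough for the telescoping product to converge in the critical regime.
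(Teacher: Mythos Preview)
Your outline is sound and matches the paper's strategy in spirit: exploit self-duality $D_n=D_n^\star$, $L_n=L_n^\star$, cure the non-integrability of $w^\#$ by pulling out a factor vanishing at $t=0$, and finish with Fisher--Hartwig asymptotics. The paper, however, implements this differently. Rather than staying with OPUC and tracking Verblunsky data through a modified weight on the circle, it passes to orthogonal polynomials on the \emph{real} segment $[-1,1]$ via the substitution $x=\cos\tfrac{t}{2}$. Self-duality forces $Q_n$ and $Q^\#_{n+1}$ to be, up to the prefactors $e^{int/2}$ and $(e^{it}-1)e^{int/2}$, polynomials in $\cos\tfrac{t}{2}$; the corresponding OPRL weights are $\overline{w}(x)=[1-(\sin\theta\cdot x)^2]^{1/2}$ and $\overline{w}^\#(x)=[1-(\sin\theta\cdot x)^2]^{-1/2}$, \emph{both integrable} on $[-1,1]$. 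This removes the need for any ad hoc ``modified integrable weight'' and yields clean one-step relations $L_{n+1}=\pi^{-1}2^{2n}\sin\theta\,\|P_n\|^2_{\overline{w}}\,D_n$ and $D_{n+1}=\pi^{-1}2^{2n}(\sin\theta)^{-1}\|P^\#_n\|^2_{\overline{w}^\#}\,L_n$, whose product telescopes to a closed formula for $D_{m+1}D_m$ (not $D_{2m+1}^2$). The asymptotics is then read off from the Hankel-determinant results of~\cite{basor-et-al-2015} with $\alpha=0$, $\beta=\pm\tfrac12$, $k=\sin\theta$, which give the constant directly in terms of $[G(\tfrac12)]^4$ and $(1-k^2)^{-1/4}=(\cos\theta)^{-1/2}$. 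The missing piece $D_{m+1}\sim D_m$ is handled separately via the quadratic identities of Remark~\ref{rem:quadratic}.

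Your OPUC route should also work, but the step ``orthogonality conditions on $R_n$ are taken with respect to a modified, integrable weight'' is where the labor hides: writing $Q^\#_{n+1}=(1-e^{it})R_n$ does not by itself convert the $w^\#$-orthogonality of $Q^\#_{n+1}$ into orthogonality of $R_n$ against a nice weight unless you also use the $t\mapsto -t$ symmetry coming from $L_n=L_n^\star$. Once you do, you are effectively rediscovering the OPRL reduction the paper uses. The payoff of the paper's choice is that the endgame cites a Hankel result tailored to exactly these Jacobi-type weights, so no bookkeeping of Fisher--Hartwig constants on the circle is needed.
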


\begin{proof}
A straightforward computation shows that
\[
w(t;\theta,\tfrac{\pi}{2}-\theta)=2\sin\theta\cdot [1-(\sin\theta\cos\tfrac{1}{2}t)^2]^{1/2}\cdot |\sin\tfrac{1}{2}t|\,.
\]
In particular, the weight~$w^\#:=w^{-1}$ is not integrable and the arguments used in the proof of Theorem~\ref{thm:KOY} require a modification. Also, the Kramers--Wannier duality ensures that~$D_n=D_n^\star$, $L_n=L_n^\star$ and hence the identities~\eqref{eq:Dn-to-Ln+1-squares},~\eqref{eq:Ln-to-Dn+1-squares} become useless (though one still could use~\eqref{eq:Dn-to-Ln+1}).
In this situation we prefer to switch to the framework of orthogonal polynomials on the \emph{real line} (more precisely, on the segment~$[-1,1]$) for computations. Let
\begin{equation}
\label{eq:w-real}
\overline{w}(x;\theta)\ :=\ [\,1-(\sin\theta\cdot x)^2\,]^{1/2}\,,\quad x\in [-1;1],
\end{equation}
and let~$P_n(x)=x^n+\dots$ be the monic orthogonal polynomial of degree~$n$ on~$[-1,1]$ with respect to the weight $\overline{w}(x,\theta)$. It is easy to check that the trigonometric polynomial
\[
Q_n(e^{it})\ :=\ D_n \cdot e^{\frac{1}{2}int}\cdot 2^nP_n(\cos\tfrac{1}{2}t)
\]
fits the construction given in Lemma~\ref{lem:solution-sym} to solve the problem~$[\mathrm{P^{sym}_n}]$. The formula~\eqref{eq:<Qn>=L-sym} gives
\begin{align}
L_{n+1}\ &=\ \frac{1}{2\pi}\int_{-\pi}^{\pi} Q_{n}(e^{it})w(t;\theta,\tfrac{\pi}{2}\!-\!\theta)dt \notag\\
&=\ \frac{D_n2^{n-1}}{\pi}\int_{-\pi}^{\pi} \cos(\tfrac{1}{2}nt)P_n(\cos\tfrac{1}{2}t)w(t;\theta,\tfrac{\pi}{2}\!-\!\theta)dt \notag\\
&=\ \frac{D_n2^{n+1}\sin\theta}{\pi}\int_{-1}^1 (2^{n-1}x^n+\dots)P_n(x)\overline{w}(x;\theta)dx \notag \\ &=\ \pi^{-1}2^{2n}\sin\theta\cdot\|P_n\|^2_{\overline{w}dx}\cdot D_n,\qquad n\ge 1.
\label{eq:Dn-to-Ln+1-real}
\end{align}
Moreover, a similar computation for~$n=0$ implies that
\begin{equation}
\label{eq:D0-to-L1-real}
\textstyle 2L_1\ =\ 2\pi^{-1}\sin\theta\int_{-1}^1P_0(x)\overline{w}(x;\theta)dx = 2\pi^{-1}\sin\theta\cdot\|P_0\|^2_{\overline{w}dx}
\end{equation}
since~$D_0=1$ and due to the modification required in Lemma~\ref{lem:laplacians} in the case~$n=0$.

We can use the same line of reasoning to construct a solution of the problem~$\mathrm{[P^{anti}_{n+1}]}$ treated in Lemma~\ref{lem:solution-anti} in the non-critical regime. Namely, let~$P_n^\#(x)$ be the monic orthogonal polynomial of degree~$n$ on~$[-1,1]$ with respect to the weight
\begin{equation}
\label{eq:whash-real}
\overline{w}^\#(x)\ :=\ [\,1-(\sin\theta\cdot x)^2\,]^{-1/2}\,,\quad x\in [-1,1],
\end{equation}
and
\[
Q^\#_{n+1}(e^{it})\ :=\ L_n\cdot (e^{it}\!-\!1)e^{\frac{1}{2}int}\cdot 2^nP^\#_n(\cos\tfrac{1}{2}t).
\]
It is straightforward to check that the formula~\eqref{eq:solution-anti} gives a solution to the boundary value problem~$\mathrm{[P^{anti}_{n+1}]}$, note that the product~$(e^{it}-1)w^\#(t;\theta,\tfrac{\pi}{2}-\theta)$ {is} integrable on the unit circle as the first factor kills the singularity of~$w^\#$ at~$t=0$. Moreover, the computation~\eqref{eq:<Qn>=L-anti} remains valid and reads as
\begin{align}
D_{n+1}\ &=\ -\frac{1}{2\pi}\int_{-\pi}^{\pi} Q^\#_{n+1}(e^{it})w^\#(t;\theta,\tfrac{\pi}{2}\!-\!\theta)^{-1}dt \notag\\
&=\ \frac{L_n2^{n}}{\pi}\int_{-\pi}^{\pi} \frac{\sin(\tfrac{1}{2}(n\!+\!1)t)}{\sin\tfrac{1}{2}t} P^\#_n(\cos\tfrac{1}{2}t)\frac{(\sin\frac{1}{2}t)^2dt}{ w(t;\theta,\tfrac{\pi}{2}\!-\!\theta)} \notag\\
&=\ \frac{L_n2^{n}}{\pi\sin\theta}\int_{-1}^1 (2^{n}x^n+\dots)P^\#_n(x)\overline{w}^\#(x)dx \notag \\ &=\ \pi^{-1}2^{2n}(\sin\theta)^{-1}\cdot \|P^\#_n\|^2_{\overline{w}^\#dx}\cdot L_n,\quad n\ge 0.
\label{eq:Ln-to-Dn+1-real}
\end{align}

Recall that~$L_0=\sin\theta$ (see Lemma~\ref{lem:laplacians}). Taking a product of the recurrence relations \eqref{eq:D0-to-L1-real}, \eqref{eq:Dn-to-Ln+1-real} for~$n=1,\dots,m\!-\!1$, and~\eqref{eq:Ln-to-Dn+1-real} for~$n=0,\dots, m$, one obtains the identity
\begin{align*}
D_{m+1}D_{m}\ &=\ \pi^{-2m-1} 2^{2m^2}\prod\nolimits_{k=0}^{m-1}\|P_k\|^2_{\overline{w}dx} \cdot \prod\nolimits_{k=0}^{m}\|P^\#_k\|^2_{\overline{w}^\#dx}\,,
\end{align*}
where the weights~$w(x;\theta)$ and~$w^\#(x;\theta)$ on~$[-1,1]$ are given by~\eqref{eq:w-real} and \eqref{eq:whash-real}.

This is again a classical setup of the orthogonal polynomials theory, note that, if one now passes back to the unit circle, then the $|t|$-type singularity of the weights appear at the point~$e^{it}=1$. One might now use the general results (summarized, e.g., in~\cite{deift-its-krasovsky-annals}) but we prefer to refer to a specific treatment~\cite{basor-et-al-2015}. Applying~\cite[Theorem~1.7]{basor-et-al-2015} with parameters~$\alpha=0,\beta=\pm \tfrac{1}{2}$ and~$k=\sin\theta$ one obtains the asymptotics
\[
D_{m+1}D_{m}\ \sim\ \pi[G(\tfrac{1}{2})]^4(1-k^2)^{-1/4}m^{-1/2}\ \sim\ 2^{2/3}e^{6\zeta'(-1)}(2m\cos\theta)^{-1/2},\quad m\to\infty,
\]
where~$G$ denotes the Barnes G-function. (Note that~\cite{basor-et-al-2015} also provides sub-leading terms of this asymptotics.) The proof of~\eqref{eq:crit-homogen} is complete modulo the fact that~$D_{m+1}\sim D_m$ as~$m\to\infty$. This statement can be proved by the arguments given in the next remark (or, alternatively, using probabilistic estimates).
\end{proof}

\begin{rem}\label{rem:quadratic} Due to the famous quadratic identities~\cite{Perk-81dubna,mccoy-perk-wu-quadratic} for the spin-spin correlations, one can write~\eqref{eq:Dn-to-Ln+1-real} and~\eqref{eq:Ln-to-Dn+1-real} as
\begin{align*}
A_n\ :=\ \pi^{-1}2^{2n}\|P_n\|^2_{\overline{w}dx}=\frac{D_{n+1}\!+\!\cos\theta\cdot\widetilde{D}_{n+1}}{D_n}= \frac{D_{n+2}}{D_{n+1}\!-\!\cos\theta\cdot\widetilde{D}_{n+1}}\,,\\
B_{n+1}\ :=\ \pi^{-1}2^{2n+2}\|P^\#_{n+1}\|^2_{\overline{w}^\#dx}=\frac{D_{n+2}}{D_{n+1}\!+\!\cos\theta\cdot\widetilde{D}_{n+1}}= \frac{D_{n+1}\!-\!\cos\theta\cdot\widetilde{D}_{n+1}}{D_n}.
\end{align*}
In fact, one can also prove these identities by considering the anti-symmetrization (resp., symmetrization) of the observable~$X_{[\mathbf{u},\mathbf{v}]}$ on the north-west (resp., north-east) corners of the lattice and noticing that, up to a multiplicative constant, it solves the problem~$\mathrm{[P^{anti}_{n+2}]}$ (resp., $\mathrm{[P^{sym}_{n-1}]}$). In particular, we have
\[
D_{m+1}/D_m\ =\ \tfrac{1}{2}(A_m+B_{m+1})\ =\ 2(A_{m-1}^{-1}+B_m^{-1})^{-1}
\]
so one can see that~$D_{m+1}\sim D_m$ and find sub-leading corrections to the asymptotics of~$D_m$ (and~$\widetilde{D}_m$) using the analysis of orthogonal polynomials performed in~\cite{basor-et-al-2015}.
\end{rem}

\section{Layered model in the zig-zag half-plane}\label{sc:layered}

In this section we work with the (half-)infinite volume limit of the Ising model on the zig-zag half-plane~$\mathbb H^\diamond$ (see Fig.~\ref{fig:GlobalLayered} for the notation), which is defined as a limit of probability measures on an increasing sequence of finite domains exhausting~$\mathbb H^\diamond$, with~`$+$' boundary conditions at the right-most column~$\mathrm{C}_0$ and at infinity. All interaction parameters between the columns~$\mathrm{C}_{p-1}$ and~$\mathrm{C}_p$ are assumed to be the same and equal to~$x_p=\exp[-2\beta J_p]=\tan\tfrac{1}{2}\theta_p$. The goal is to find a representation for the magnetization~$M_m$ at the column $\mathrm{C}_{2m}$, see~\eqref{eq:Mm-def}. The uniqueness of the relevant half-plane fermionic observable is discussed in Section~\ref{subsect:observable-layered} and our main result -- Theorem~\ref{thm:layered} -- is proved in Section~\ref{subsect:Mm-layered}. In Section~\ref{subsect:wetting} we use Theorem~\ref{thm:layered} to discuss the {\emph{wetting phase transition}~\cite{frohlich-pfister-87,pfister-velenik-96}} caused by a boundary magnetic field. In this case the Jacobi matrix~$J$ can be explicitly diagonalized and the final answer can be written in terms of the so-called Toeplitz+Hankel determinants.

\subsection{Half-plane fermionic observable} \label{subsect:observable-layered}
Let~$\mathbf v=(-2m\!-\!\tfrac{3}{2},0)$. Below we work with the fermionic observable~$X_{[\mathbf v]}$ defined by~\eqref{eq:chi:=mu_sigma_def}; comparing with Section~\ref{sc:homogeneous} one can think about the spin~$\sigma_\mathbf u:=\sigma_{\mathrm{out}}$ as being attached to the vertical boundary. We are mostly interested in the values of~$X_{[\mathbf v]}$ at west corners (see Fig.~\ref{fig:GlobalLayered})
\[
\ H(-k,s)\ :=\ \Psi_{[\mathbf v]}((-k,s))= X_{[\mathbf v]}((-k,s)),\quad k\in\mathbb N_0,\ s\in\mathbb Z,\ k+s\not\in 2\mathbb Z,
\]
note the convention on~$\eta_c$ chosen in~\eqref{eq:Dirac_spinor}. By definition, one has
\begin{equation}
\label{eq:V=Mm-layered}
H(-2m\!-\!1,0)\ =\ \mathbb E^+_{\mathbb H^\diamond}[\sigma_{(-2m-\frac{1}{2},0)}]\ =\ M_m.
\end{equation}
We also need the values of~$X_{[\mathbf v]}$ at east corners:
\[
H^\circ(-k,s)\ :=\ \Psi_{[\mathbf v]}((-k,s))=iX_{[\mathbf v]}((-k,s)),\quad k\in\mathbb N,\ s\in\mathbb Z,\ k+s\in 2\mathbb Z.
\]
It is convenient to set~$\theta_0:=0$ and~$H^\circ(0,s):=0$ for all~$s\in 2\mathbb Z$.

The infinite-volume observable~$X_{[\mathbf v]}$ is defined as a (subsequential) limit of the same observables constructed in finite regions. Subsequential limits exist due to the uniform bound~\eqref{eq:X-bound-onG} while the uniqueness of~$X_{[\mathbf v]}$ is given by Lemma~\ref{lem:uniqueness-layered}. The discrete Cauchy--Riemann identities~\eqref{eq:Cauchy-Riemann-general} can be written as
\begin{align}
\notag
& H(-k-1,s\pm 1)\sin\theta_{k+1}-H(-k,s)\cos\theta_k\\
&=\,\pm i\cdot[\,H^\circ(-k,s \pm 1)\sin\theta_k-H^\circ(-k\!-\!1,s)\cos\theta_{k+1}\,],\quad k\ge 1,\ k\!+\!s\not\in 2\mathbb Z.
\label{eq:CR-layered}
\end{align}
Near the vertical boundary, these equations should be modified as follows:
\begin{equation}
\label{eq:CR-boundary}
H(-1,s\pm 1)\sin\theta_1-H(0,s)\ =\ \mp i \cdot H^\circ(-1,s)\cos\theta_1, \quad s\not\in 2\mathbb Z.
\end{equation}
Indeed, $X_{[\mathbf v]}((-\frac{1}{2},s\pm\frac{1}{2}))=X_{[\mathbf v]}(0,s))=H(0,s)$ and hence~\eqref{eq:CR-boundary} are nothing but the three-term identities~\eqref{eq:propagation_on_corners}.

\begin{figure}
\centering{
\begin{tikzpicture}[scale=0.17]
\input{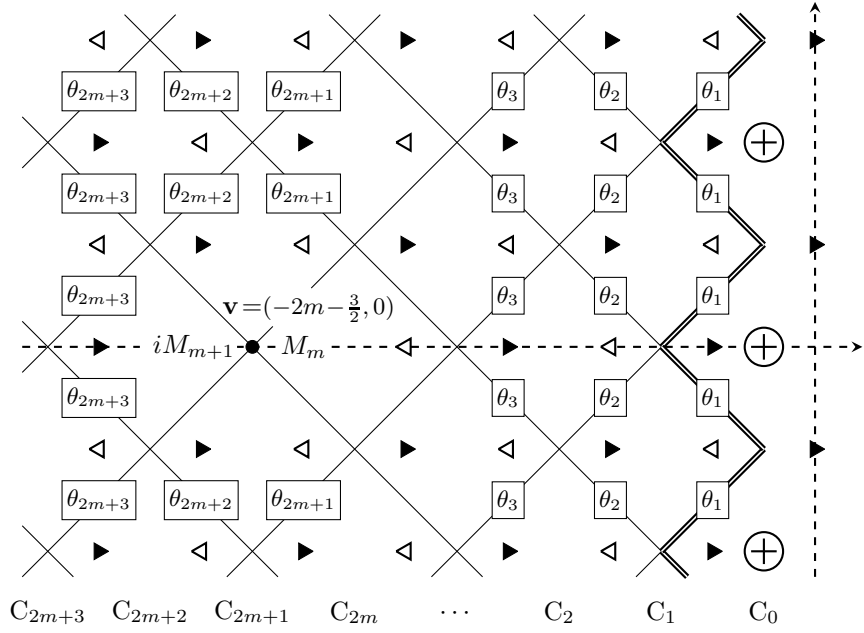}
\end{tikzpicture}}
\caption{The zig-zag layered model in the left half-plane~$\mathbb H^\diamond$. All the interaction parameters between two adjacent columns are assumed to be the same. The~`$+$' boundary conditions are imposed at the column~$\mathrm{C}_0$. To analyze the ratio~$M_{m+1}/M_m$ we consider the Kadanoff--Ceva fermionic observable branching at~$\mathbf v\!=\!(-2m-\frac{3}{2},0)$.}
\label{fig:GlobalLayered}
\end{figure}

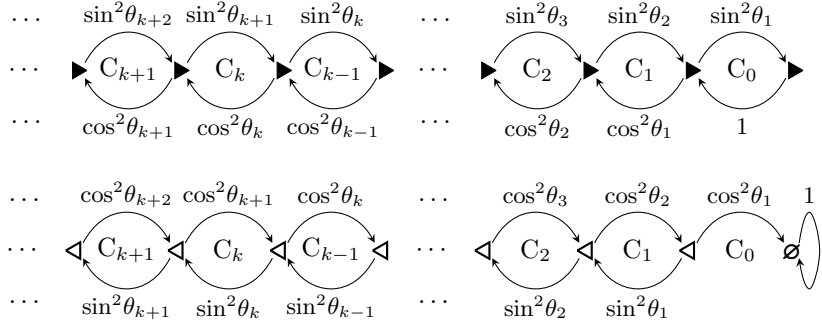
\begin{figure}
\centering{
\begin{tikzpicture}[scale=0.17]
\filldraw[fill=black,draw=black] ( 0.75000, 0.00000) -- (-0.37500, 0.64952) -- (-0.37500,-0.64952) -- ( 0.75000, 0.00000);
\node at (-4.00000, 0.00000) {$\mathrm{C}_0$};
\filldraw[fill=black,draw=black] (-7.25000, 0.00000) -- (-8.37500, 0.64952) -- (-8.37500,-0.64952) -- (-7.25000, 0.00000);
\node at (-12.00000, 0.00000) {$\mathrm{C}_1$};
\filldraw[fill=black,draw=black] (-15.25000, 0.00000) -- (-16.37500, 0.64952) -- (-16.37500,-0.64952) -- (-15.25000, 0.00000);
\node at (-20.00000, 0.00000) {$\mathrm{C}_2$};
\filldraw[fill=black,draw=black] (-23.25000, 0.00000) -- (-24.37500, 0.64952) -- (-24.37500,-0.64952) -- (-23.25000, 0.00000);
\node at (-28.00000, 0.00000) {$\cdots$};
\filldraw[fill=black,draw=black] (-31.25000, 0.00000) -- (-32.37500, 0.64952) -- (-32.37500,-0.64952) -- (-31.25000, 0.00000);
\node at (-36.00000, 0.00000) {$\mathrm{C}_{k-1}$};
\filldraw[fill=black,draw=black] (-39.25000, 0.00000) -- (-40.37500, 0.64952) -- (-40.37500,-0.64952) -- (-39.25000, 0.00000);
\node at (-44.00000, 0.00000) {$\mathrm{C}_k$};
\filldraw[fill=black,draw=black] (-47.25000, 0.00000) -- (-48.37500, 0.64952) -- (-48.37500,-0.64952) -- (-47.25000, 0.00000);
\node at (-52.00000, 0.00000) {$\mathrm{C}_{k+1}$};
\filldraw[fill=black,draw=black] (-55.25000, 0.00000) -- (-56.37500, 0.64952) -- (-56.37500,-0.64952) -- (-55.25000, 0.00000);
\node at (-60.00000, 0.00000) {$\cdots$};
\draw[->,>=stealth] (-7.40000, 0.60000) .. controls (-6.60000, 2.1000000000000000E+0000) and (-5.40000, 3.00000000000000E+000) .. (-4.00000, 3.00000) .. controls (-2.60000, 3.00000000000000E+000) and (-1.40000, 2.1000000000000000E+0000) .. (-0.60000, 6.0000000000000000E-0001);
\draw[->,>=stealth] (-0.60000,-0.60000) .. controls (-1.40000,-2.1000000000000000E+0000) and (-2.60000,-3.00000000000000E+000) .. (-4.00000,-3.00000) .. controls (-5.40000,-3.00000000000000E+000) and (-6.60000,-2.1000000000000000E+0000) .. (-7.40000,-6.0000000000000000E-0001);
\node at (-4.00000, 4.40000) {{\small $\sin^2\!\theta_1$}};
\node at (-4.00000,-4.40000) {{\small $1$}};
\draw[->,>=stealth] (-15.40000, 0.60000) .. controls (-14.60000, 2.1000000000000000E+0000) and (-13.40000, 3.00000000000000E+000) .. (-12.00000, 3.00000) .. controls (-10.60000, 3.00000000000000E+000) and (-9.40000, 2.1000000000000000E+0000) .. (-8.60000, 6.0000000000000000E-0001);
\draw[->,>=stealth] (-8.60000,-0.60000) .. controls (-9.40000,-2.1000000000000000E+0000) and (-10.60000,-3.00000000000000E+000) .. (-12.00000,-3.00000) .. controls (-13.40000,-3.00000000000000E+000) and (-14.60000,-2.1000000000000000E+0000) .. (-15.40000,-6.0000000000000000E-0001);
\node at (-12.00000, 4.40000) {{\small $\sin^2\!\theta_2$}};
\node at (-12.00000,-4.40000) {{\small $\cos^2\!\theta_1$}};
\draw[->,>=stealth] (-23.40000, 0.60000) .. controls (-22.60000, 2.1000000000000000E+0000) and (-21.40000, 3.00000000000000E+000) .. (-20.00000, 3.00000) .. controls (-18.60000, 3.00000000000000E+000) and (-17.40000, 2.1000000000000000E+0000) .. (-16.60000, 6.0000000000000000E-0001);
\draw[->,>=stealth] (-16.60000,-0.60000) .. controls (-17.40000,-2.1000000000000000E+0000) and (-18.60000,-3.00000000000000E+000) .. (-20.00000,-3.00000) .. controls (-21.40000,-3.00000000000000E+000) and (-22.60000,-2.1000000000000000E+0000) .. (-23.40000,-6.0000000000000000E-0001);
\node at (-20.00000, 4.40000) {{\small $\sin^2\!\theta_3$}};
\node at (-20.00000,-4.40000) {{\small $\cos^2\!\theta_2$}};
\node at (-28.00000, 4.00000) {$\ldots$};
\node at (-28.00000,-4.00000) {$\ldots$};
\draw[->,>=stealth] (-39.40000, 0.60000) .. controls (-38.60000, 2.1000000000000000E+0000) and (-37.40000, 3.00000000000000E+000) .. (-36.00000, 3.00000) .. controls (-34.60000, 3.00000000000000E+000) and (-33.40000, 2.1000000000000000E+0000) .. (-32.60000, 6.0000000000000000E-0001);
\draw[->,>=stealth] (-32.60000,-0.60000) .. controls (-33.40000,-2.1000000000000000E+0000) and (-34.60000,-3.00000000000000E+000) .. (-36.00000,-3.00000) .. controls (-37.40000,-3.00000000000000E+000) and (-38.60000,-2.1000000000000000E+0000) .. (-39.40000,-6.0000000000000000E-0001);
\node at (-36.00000, 4.40000) {{\small $\sin^2\!\theta_{k}$}};
\node at (-36.00000,-4.40000) {{\small $\cos^2\!\theta_{k-1}$}};
\draw[->,>=stealth] (-47.40000, 0.60000) .. controls (-46.60000, 2.1000000000000000E+0000) and (-45.40000, 3.00000000000000E+000) .. (-44.00000, 3.00000) .. controls (-42.60000, 3.00000000000000E+000) and (-41.40000, 2.1000000000000000E+0000) .. (-40.60000, 6.0000000000000000E-0001);
\draw[->,>=stealth] (-40.60000,-0.60000) .. controls (-41.40000,-2.1000000000000000E+0000) and (-42.60000,-3.00000000000000E+000) .. (-44.00000,-3.00000) .. controls (-45.40000,-3.00000000000000E+000) and (-46.60000,-2.1000000000000000E+0000) .. (-47.40000,-6.0000000000000000E-0001);
\node at (-43.50000, 4.40000) {{\small $\sin^2\!\theta_{k+1}$}};
\node at (-44.00000,-4.40000) {{\small $\cos^2\!\theta_{k}$}};
\draw[->,>=stealth] (-55.40000, 0.60000) .. controls (-54.60000, 2.1000000000000000E+0000) and (-53.40000, 3.00000000000000E+000) .. (-52.00000, 3.00000) .. controls (-50.60000, 3.00000000000000E+000) and (-49.40000, 2.1000000000000000E+0000) .. (-48.60000, 6.0000000000000000E-0001);
\draw[->,>=stealth] (-48.60000,-0.60000) .. controls (-49.40000,-2.1000000000000000E+0000) and (-50.60000,-3.00000000000000E+000) .. (-52.00000,-3.00000) .. controls (-53.40000,-3.00000000000000E+000) and (-54.60000,-2.1000000000000000E+0000) .. (-55.40000,-6.0000000000000000E-0001);
\node at (-52.25000, 4.40000) {{\small $\sin^2\!\theta_{k+2}$}};
\node at (-52.00000,-4.40000) {{\small $\cos^2\!\theta_{k+1}$}};
\node at (-60.00000, 4.00000) {$\ldots$};
\node at (-60.00000,-4.00000) {$\ldots$};
\draw[->,>=stealth,white] ( 0.60000, 0.60000).. controls ( 0.80000, 2.00000) and ( 1.10000, 3.00000) .. ( 1.40000, 3.00000).. controls ( 1.70000, 3.00000) and ( 2.20000, 1.20000) .. ( 2.20000,0.0).. controls ( 2.20000,-1.20000) and ( 1.70000,-3.00000) .. ( 1.40000,-3.00000).. controls ( 1.10000,-3.00000) and ( 0.80000,-2.00000) .. ( 0.60000,-0.60000);
\end{tikzpicture}

\bigskip

\begin{tikzpicture}[scale=0.17]
\node at ( 0.00000,-0.20000) {{\small $\bm{\varnothing}$}};
\node at (-4.00000, 0.00000) {$\mathrm{C}_0$};
\filldraw[fill=white,draw=black, thick] (-8.75000, 0.00000) -- (-7.62500,-0.64952) -- (-7.62500, 0.64952) -- (-8.75000, 0.00000);
\node at (-12.00000, 0.00000) {$\mathrm{C}_1$};
\filldraw[fill=white,draw=black, thick] (-16.75000, 0.00000) -- (-15.62500,-0.64952) -- (-15.62500, 0.64952) -- (-16.75000, 0.00000);
\node at (-20.00000, 0.00000) {$\mathrm{C}_2$};
\filldraw[fill=white,draw=black, thick] (-24.75000, 0.00000) -- (-23.62500,-0.64952) -- (-23.62500, 0.64952) -- (-24.75000, 0.00000);
\node at (-28.00000, 0.00000) {$\cdots$};
\filldraw[fill=white,draw=black, thick] (-32.75000, 0.00000) -- (-31.62500,-0.64952) -- (-31.62500, 0.64952) -- (-32.75000, 0.00000);
\node at (-36.00000, 0.00000) {$\mathrm{C}_{k-1}$};
\filldraw[fill=white,draw=black, thick] (-40.75000, 0.00000) -- (-39.62500,-0.64952) -- (-39.62500, 0.64952) -- (-40.75000, 0.00000);
\node at (-44.00000, 0.00000) {$\mathrm{C}_k$};
\filldraw[fill=white,draw=black, thick] (-48.75000, 0.00000) -- (-47.62500,-0.64952) -- (-47.62500, 0.64952) -- (-48.75000, 0.00000);
\node at (-52.00000, 0.00000) {$\mathrm{C}_{k+1}$};
\filldraw[fill=white,draw=black, thick] (-56.75000, 0.00000) -- (-55.62500,-0.64952) -- (-55.62500, 0.64952) -- (-56.75000, 0.00000);
\node at (-60.00000, 0.00000) {$\cdots$};
\draw[->,>=stealth] (-7.40000, 0.60000) .. controls (-6.60000, 2.1000000000000000E+0000) and (-5.40000, 3.00000000000000E+000) .. (-4.00000, 3.00000) .. controls (-2.60000, 3.00000000000000E+000) and (-1.40000, 2.1000000000000000E+0000) .. (-0.60000, 6.0000000000000000E-0001);
\node at (-4.00000, 4.40000) {{\small $\cos^2\!\theta_1$}};
\node at (-4.00000,-4.40000) {{\small $$}};
\draw[->,>=stealth] (-15.40000, 0.60000) .. controls (-14.60000, 2.1000000000000000E+0000) and (-13.40000, 3.00000000000000E+000) .. (-12.00000, 3.00000) .. controls (-10.60000, 3.00000000000000E+000) and (-9.40000, 2.1000000000000000E+0000) .. (-8.60000, 6.0000000000000000E-0001);
\draw[->,>=stealth] (-8.60000,-0.60000) .. controls (-9.40000,-2.1000000000000000E+0000) and (-10.60000,-3.00000000000000E+000) .. (-12.00000,-3.00000) .. controls (-13.40000,-3.00000000000000E+000) and (-14.60000,-2.1000000000000000E+0000) .. (-15.40000,-6.0000000000000000E-0001);
\node at (-12.00000, 4.40000) {{\small $\cos^2\!\theta_2$}};
\node at (-12.00000,-4.40000) {{\small $\sin^2\!\theta_1$}};
\draw[->,>=stealth] (-23.40000, 0.60000) .. controls (-22.60000, 2.1000000000000000E+0000) and (-21.40000, 3.00000000000000E+000) .. (-20.00000, 3.00000) .. controls (-18.60000, 3.00000000000000E+000) and (-17.40000, 2.1000000000000000E+0000) .. (-16.60000, 6.0000000000000000E-0001);
\draw[->,>=stealth] (-16.60000,-0.60000) .. controls (-17.40000,-2.1000000000000000E+0000) and (-18.60000,-3.00000000000000E+000) .. (-20.00000,-3.00000) .. controls (-21.40000,-3.00000000000000E+000) and (-22.60000,-2.1000000000000000E+0000) .. (-23.40000,-6.0000000000000000E-0001);
\node at (-20.00000, 4.40000) {{\small $\cos^2\!\theta_3$}};
\node at (-20.00000,-4.40000) {{\small $\sin^2\!\theta_2$}};
\node at (-28.00000, 4.00000) {$\ldots$};
\node at (-28.00000,-4.00000) {$\ldots$};
\draw[->,>=stealth] (-39.40000, 0.60000) .. controls (-38.60000, 2.1000000000000000E+0000) and (-37.40000, 3.00000000000000E+000) .. (-36.00000, 3.00000) .. controls (-34.60000, 3.00000000000000E+000) and (-33.40000, 2.1000000000000000E+0000) .. (-32.60000, 6.0000000000000000E-0001);
\draw[->,>=stealth] (-32.60000,-0.60000) .. controls (-33.40000,-2.1000000000000000E+0000) and (-34.60000,-3.00000000000000E+000) .. (-36.00000,-3.00000) .. controls (-37.40000,-3.00000000000000E+000) and (-38.60000,-2.1000000000000000E+0000) .. (-39.40000,-6.0000000000000000E-0001);
\node at (-36.00000, 4.40000) {{\small $\cos^2\!\theta_{k}$}};
\node at (-36.00000,-4.40000) {{\small $\sin^2\!\theta_{k-1}$}};
\draw[->,>=stealth] (-47.40000, 0.60000) .. controls (-46.60000, 2.1000000000000000E+0000) and (-45.40000, 3.00000000000000E+000) .. (-44.00000, 3.00000) .. controls (-42.60000, 3.00000000000000E+000) and (-41.40000, 2.1000000000000000E+0000) .. (-40.60000, 6.0000000000000000E-0001);
\draw[->,>=stealth] (-40.60000,-0.60000) .. controls (-41.40000,-2.1000000000000000E+0000) and (-42.60000,-3.00000000000000E+000) .. (-44.00000,-3.00000) .. controls (-45.40000,-3.00000000000000E+000) and (-46.60000,-2.1000000000000000E+0000) .. (-47.40000,-6.0000000000000000E-0001);
\node at (-43.50000, 4.40000) {{\small $\cos^2\!\theta_{k+1}$}};
\node at (-44.00000,-4.40000) {{\small $\sin^2\!\theta_{k}$}};
\draw[->,>=stealth] (-55.40000, 0.60000) .. controls (-54.60000, 2.1000000000000000E+0000) and (-53.40000, 3.00000000000000E+000) .. (-52.00000, 3.00000) .. controls (-50.60000, 3.00000000000000E+000) and (-49.40000, 2.1000000000000000E+0000) .. (-48.60000, 6.0000000000000000E-0001);
\draw[->,>=stealth] (-48.60000,-0.60000) .. controls (-49.40000,-2.1000000000000000E+0000) and (-50.60000,-3.00000000000000E+000) .. (-52.00000,-3.00000) .. controls (-53.40000,-3.00000000000000E+000) and (-54.60000,-2.1000000000000000E+0000) .. (-55.40000,-6.0000000000000000E-0001);
\node at (-52.25000, 4.40000) {{\small $\cos^2\!\theta_{k+2}$}};
\node at (-52.00000,-4.40000) {{\small $\sin^2\!\theta_{k+1}$}};
\node at (-60.00000, 4.00000) {$\ldots$};
\node at (-60.00000,-4.00000) {$\ldots$};
\node at ( 1.40000, 4.30000) {{\small $1$}};
\draw[->,>=stealth,black] ( 0.60000, 0.60000).. controls ( 0.80000, 2.00000) and ( 1.10000, 3.00000) .. ( 1.40000, 3.00000).. controls ( 1.70000, 3.00000) and ( 2.20000, 1.20000) .. ( 2.20000,0.0).. controls ( 2.20000,-1.20000) and ( 1.70000,-3.00000) .. ( 1.40000,-3.00000).. controls ( 1.10000,-3.00000) and ( 0.80000,-2.00000) .. ( 0.60000,-0.60000);
\end{tikzpicture}}

\caption{For appropriately chosen prefactors~$\varrho_k$ and~$\varrho^\circ_k$, the identities~\eqref{eq:H-Crelations},~\eqref{eq:Hstar-Crelations} (coming from Proposition~\ref{prop-harmonicity-layered}) can be written as the discrete harmonicity property of functions~$\varrho_k H(-k,s)$ and~$\varrho^\circ_k H^\circ(-k,s)$ with respect to random walks having the indicated transition probabilities in the horizontal direction (and~$\frac{1}{2}$ in the vertical one).
The first random walk (on~$\triangleright$) is reflected from the imaginary axis while the second (on~$\triangleleft$) is absorbed there.
}
\label{fig:layered-transition}
\end{figure}

\begin{lem}\label{lem:uniqueness-layered} The spinors~$H,H^\circ$ defined in~$\mathbb H^\diamond$ and branching over~$\mathbf v$ are uniquely determined by the following conditions: uniform boundedness, Cauchy--Riemann identities~\eqref{eq:CR-layered}, boundary relations~\eqref{eq:CR-boundary}, and the value~\eqref{eq:V=Mm-layered} of~$H$ near~$\mathbf v$.
\end{lem}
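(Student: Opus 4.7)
The strategy follows the pattern of Lemma~\ref{lem:uniqueness}. Suppose $(H_1,H_1^\circ)$ and $(H_2,H_2^\circ)$ both satisfy the hypotheses, and set $\widetilde H:=H_1-H_2$, $\widetilde H^\circ:=H_1^\circ-H_2^\circ$. Then $(\widetilde H,\widetilde H^\circ)$ is a uniformly bounded spinor on $\mathbb H^\diamond$ that still satisfies the Cauchy--Riemann identities~\eqref{eq:CR-layered}, the boundary relations~\eqref{eq:CR-boundary}, and has $\widetilde H(-2m-1,0)=0$; the goal is to deduce $\widetilde H\equiv 0$ and $\widetilde H^\circ\equiv 0$.

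The first step decouples the two observables into separate harmonicity problems. Repeating the argument of Proposition~\ref{prop-harmonicity-layered}, one sums four Cauchy--Riemann identities~\eqref{eq:CR-layered} around each west corner with weights $\pm\sin\theta_k,\pm\cos\theta_{k+1}$, and separately around each east corner with weights $\pm\cos\theta_k,\pm\sin\theta_{k+1}$, obtaining two three-column discrete harmonicity relations purely between values of $\widetilde H$ on west corners, respectively of $\widetilde H^\circ$ on east corners. Near the vertical boundary the boundary relation~\eqref{eq:CR-boundary} replaces one of the bulk Cauchy--Riemann identities, which produces a reflection rule for $\widetilde H$ at $\mathrm C_0$ and, via the convention $H^\circ(0,s)=0$, an absorbing Dirichlet rule for $\widetilde H^\circ$ at $\mathrm C_0$. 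Conjugating by explicit positive prefactors $\varrho_k,\varrho_k^\circ$ built as telescoping products of $\cos\theta_j,\sin\theta_j$, the rescaled functions $\varrho_k\widetilde H(-k,s)$ and $\varrho_k^\circ\widetilde H^\circ(-k,s)$ become harmonic with respect to the two random walks (with killing) depicted in Fig.~\ref{fig:layered-transition}.

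The second step is a martingale/optional-stopping argument. For $\widetilde H^\circ$: starting the second walk at any east corner, it terminates almost surely either by absorption on $\mathrm C_0$ (where $\widetilde H^\circ=0$) or by bulk killing; optional stopping applied to the bounded martingale $\varrho_k^\circ\widetilde H^\circ(Z_t)$ then gives $\widetilde H^\circ\equiv 0$. Feeding $\widetilde H^\circ\equiv 0$ back into~\eqref{eq:CR-layered} reduces the Cauchy--Riemann identities to a pure propagation relation between the values of $\widetilde H$ at adjacent west corners; combined with the spinor property around $\mathbf v$ and the normalization $\widetilde H(-2m-1,0)=0$, an analogous optional-stopping argument for the reflected walk terminates either on the branching column (where $\widetilde H$ is forced to vanish by the sign-flip around $\mathbf v$) or by bulk killing, and yields $\widetilde H\equiv 0$.

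The main technical point is to ensure almost-sure termination of both walks for every admissible sequence $(\theta_k)\subset(0,\tfrac{\pi}{2})$: this is immediate when the killing rates $1-\cos^2\theta_{k-1}-\sin^2\theta_{k+1}$ admit a uniform lower bound, and in the degenerate regime it follows from the recurrence of the corresponding two-dimensional random walk combined with the reflecting/absorbing boundary on $\mathrm C_0$. Once this is in place, the remaining steps are linear algebra and bounded convergence.
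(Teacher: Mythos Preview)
Your overall architecture is right (difference of solutions, decouple via Proposition~\ref{prop-harmonicity-layered}, rescale by $\varrho_k,\varrho_k^\circ$, run a martingale argument), but the crucial step is wrong: after the rescaling the random walks in Fig.~\ref{fig:layered-transition} have \emph{no killing}. Indeed, with $\varrho_k=\prod_{j=1}^k(\sin\theta_j/\cos\theta_{j-1})$ the identity~\eqref{eq:H-Crelations} becomes
\[
\varrho_kH(-k,s)=\tfrac{1}{2}\cos^2\theta_k\cdot\varrho_{k+1}[H(-k{-}1,s{+}1)+H(-k{-}1,s{-}1)]
+\tfrac{1}{2}\sin^2\theta_k\cdot\varrho_{k-1}[H(-k{+}1,s{+}1)+H(-k{+}1,s{-}1)],
\]
and the weights sum to $\cos^2\theta_k+\sin^2\theta_k=1$; similarly for $H^\circ$. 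Your expression ``$1-\cos^2\theta_{k-1}-\sin^2\theta_{k+1}$'' is not a killing rate (it can be negative), and the phrase ``random walks (with killing) depicted in Fig.~\ref{fig:layered-transition}'' contradicts the figure. So the appeal to ``bulk killing'' in your optional-stopping step is empty, and the ``degenerate regime'' fallback via two-dimensional recurrence does not apply either: the walks are biased nearest-neighbour walks on a half-strip and need not be recurrent for arbitrary $(\theta_k)$.

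There is a second, related gap: once there is no killing, the rescaled functions $\varrho_kH(-k,s)$ and $\varrho_k^\circ H^\circ(-k,s)$ are \emph{not} automatically bounded (only $H,H^\circ$ are), so your ``bounded martingale'' premise can fail. The paper handles both issues simultaneously by a trichotomy on the sequences $\varrho_k,\varrho_k^\circ$ (using $\varrho_k\varrho_k^\circ=O(1)$): if $\liminf\varrho_k=0$ one shows $\widetilde H\equiv 0$ via the reflected walk (which a.s.\ visits arbitrarily large $k$, where $\varrho_k$ is small); if $\liminf\varrho_k^\circ=0$ one argues symmetrically for $\widetilde H^\circ$; and if both are bounded away from $0$ one checks directly that the absorbing walk $Z_n^\circ$ hits $\mathrm C_0$ almost surely. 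Only then does one feed the vanishing of one function back into~\eqref{eq:CR-layered},~\eqref{eq:CR-boundary} to kill the other. Your sketch needs this case analysis to be complete.
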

\begin{proof} Taking the difference of two solutions, assume that~$H,H^\circ$ are uniformly bounded, satisfy~\eqref{eq:CR-layered}, \eqref{eq:CR-boundary} and that~$H(-2m-1,0)=0$. Recall that Proposition~\ref{prop-harmonicity-layered} gives the harmonicity-type identity
\begin{align}
H(-k,s)\ &=\ \tfrac{1}{2}\sin\theta_{k+1}\cos\theta_k\cdot[H(-k\!-\!1,s\!+\!1)+H(-k\!-\!1,s\!-\!1)] \notag \\
&+\ \tfrac{1}{2}\sin\theta_k\cos\theta_{k-1}\cdot[H(-k\!+\!1,s\!+\!1)+H(-k\!+\!1,s\!-\!1)]
\label{eq:H-Crelations}
\end{align}
at all west corners~$c=(-k\!+\!\frac{1}{2},s)$ with~$k\ge 2$ except in the case~$k=-2m\!-\!1$, $s=0$ (i.e., at the west corner located near the branching~$\mathbf v$). Moreover, due to the boundary relations~\eqref{eq:CR-boundary}, exactly the same identity holds for~$k=0,1$ (recall that we formally set~$\theta_0:=0$). In its turn, the function~$H^\circ$ satisfies the identities
\begin{align}
H^\circ(-k,s)\ &=\ \tfrac{1}{2}\cos\theta_{k+1}\sin\theta_k\cdot[H^\circ(-k\!-\!1,s\!+\!1)+H^\circ(-k\!-\!1,s\!-\!1)] \notag \\
&+\ \tfrac{1}{2}\cos\theta_k\sin\theta_{k-1}\cdot[H^\circ(-k\!+\!1,s\!+\!1)+H^\circ(-k\!+\!1,s\!-\!1)]
\label{eq:Hstar-Crelations}
\end{align}
at \emph{all} east corners~$d=(-k+\frac{1}{2},s)$, including the one located near the branching~$\mathbf v$ (in the latter case the proof of Proposition~\ref{prop-harmonicity-layered} works verbatim due to the fact that~$H(-2m\!-\!1,0)=0$). Both~\eqref{eq:H-Crelations} and~\eqref{eq:Hstar-Crelations} can be rewritten as \emph{true} discrete harmonicity properties if one passes from~$H$ and~$H^\circ$ to the functions
\begin{align*}
~\widetilde{H}(-k,s):=\varrho_k\cdot H(-k,s),\qquad & \widetilde{H}^\circ:=\varrho_k^\circ\cdot H^\circ(-k,s),\\
\textstyle \varrho_k:=\prod_{j=1}^k(\sin\theta_{j}\,/\cos\theta_{j-1}),\qquad & \textstyle \varrho_k^\circ:=\prod_{j=2}^k(\cos\theta_{j}\,/\sin\theta_{j-1}),
\end{align*}
recall that we set~$\widetilde{H}^\circ(0,s)=H^\circ(0,s):=0$ on the vertical axes.

Let $Z_n=(K_n,S_n)$ (resp., $Z^\circ_n=(K_n^\circ,S_n)$) be the nearest-neighbor random walk on west (resp., east) corners, with jump probabilities~$(\frac{1}{2},\frac{1}{2})$ for the process~$S_n$ and $(\cos^2\theta_k,\sin^2\theta_k)$ for the process~$K_n$ (resp., $(\sin^2\theta_k,\cos^2\theta_k)$ for the process~$K^\circ_n$), see Fig.~\ref{fig:layered-transition}. Note that the walk $Z_n$ on west corners is \emph{reflected} from the vertical axes while the walk $Z^\circ_n$ on east corners is~\emph{absorbed} there.

It follows from~\eqref{eq:H-Crelations} that the stochastic process $\widetilde{H}(Z_n)$ is a martingale, when equipped with the canonical filtration, until the first time when~$Z_n$ hits the west corner~$(-2m\!-\!1,0)$ located near the branching, recall that~\mbox{$\widetilde{H}(-2m-1,0)=0$}. Similarly, \eqref{eq:Hstar-Crelations} implies that the process~$\widetilde{H}^\circ(Z_n^\circ)$ is a martingale until the first time when~$Z_n^\circ$ hits the imaginary axis, {recall that~$\widetilde{H}^\circ=0$ there. As we show below, depending on the behavior of $\varrho_k$ and~$\varrho^\circ_k$ as~$k\to\infty$, the optional stopping theorem allows to conclude that either~$\widetilde{H}$ or~$\widetilde{H}^\circ$ vanishes identically. Once the identity~$\widetilde{H}\equiv 0$ (resp.,~$\widetilde{H}^\circ\equiv 0$) is proven, the equations~\eqref{eq:CR-layered},~\eqref{eq:CR-boundary} and the fact that $\widetilde{H}^\circ$ vanishes on the imaginary axis (resp., $\widetilde{H}$ vanishes at the point~$(-2m-1,0)$) imply that~$\widetilde{H}^\circ\equiv 0$ (resp., $\widetilde{H}\equiv 0$) too. Recall that the functions~$H$ and~$H^\circ$ are uniformly bounded and note that $\varrho_k\varrho^\circ_k=(\cos\theta_1)^{-1}\sin\theta_k\cos\theta_k=O(1)$ as~$k\to\infty$. It follows from the maximum principle that
\begin{itemize}
\item the function $\widetilde{H}$ is uniformly bounded unless~$\varrho_k\to\infty$ as~$k\to \infty$;
\item the function $\widetilde{H}^\circ$ is uniformly bounded unless~$\varrho^\circ_k\to\infty$ as~$k\to \infty$.
\end{itemize}}
We have three cases to consider separately.
\begin{itemize}
\item Let {$\liminf_{k\to\infty }\varrho_k=0$, in particular this implies that~$\widetilde{H}$ is uniformly bounded.} The optional stopping theorem applied to the martingale~$\widetilde{H}(Z_n)$ and the fact that a one-dimensional random walk on $-\mathbb{N}_0$ reflected at~$0$ almost surely takes arbitrary large (negative) values imply that $H \equiv 0$.

\smallskip

\item Let {$\liminf_{k\to\infty}\varrho^\circ_k=0$.} A similar argument applied to the martingale $\widetilde{H}^\circ(Z_n^\circ)$ { (recall that~$\widetilde{H}^\circ$ vanishes on the imaginary axis)} shows that~$\widetilde{H}^\circ\equiv 0$.

\smallskip

\item Let both sequences $\varrho_k$ and~$\varrho^\circ_k$ be uniformly bounded from below as~\mbox{$k\to\infty$}. Since $\varrho_k\varrho^\circ_k=(\cos\theta_1)^{-1}\sin\theta_k\cos\theta_k$, {these sequences are also uniformly bounded from above and}
    the parameters~$\theta_k$, $k\ge 1$, stay away from~$0$. In this case it is easy to see that the process~$K_n^\circ$ hits~$0$ almost surely {(i.e., that the random walk~$Z_n^\circ$ hits the imaginary axis almost surely).} Indeed, the probability~$p^\circ_k$ to hit~$0$ starting from~$-k$ satisfies the recurrence
\[
p^\circ_k-p^\circ_{k+1}=\cot^2\theta_k\cdot (p^\circ_{k-1}-p^\circ_k)=\ldots= \varrho_{k+1}^{-2}\sin^2\theta_{k+1}\cdot (1-p^\circ_1),
\]
which is only possible if~$p^\circ_1=1$ {since the factors~$\varrho_{k+1}/\sin\theta_{k+1}$ are uniformly bounded.} We conclude as before by applying the optional stopping theorem to the uniformly bounded martingale~$\widetilde{H}(Z^\circ_n)$.
\end{itemize}
The proof is complete.
\end{proof}

\subsection{Magnetization~$\bm{M_m}$ in the~$\bm{(2m)}$-th column} \label{subsect:Mm-layered}
Similarly to Section~\ref{subsect:construction-homogen}, {below we rely upon the uniqueness Lemma~\ref{lem:uniqueness-layered} and aim} to construct the values of~$X_{[\mathbf v]}$ on west and east corners (i.e., the pair of spinors~$H,H^\circ$) as explicitly as possible. Note that we have
\begin{equation}
\label{eq:H=0-layered}
H(-2p\!-\!1,0)=0\ \ \text{for}\ \ p\ge m\!+\!1,\qquad H^\circ(-2p,0)\ =\ 0\ \ \text{for}\ \ p\le m.
\end{equation}
since the spinors defined (on the double cover branching over~$\mathbf v$) by the symmetry $H_1(-k,-s):=H(-k,s)$, $H_1^\circ(-k,-s):=-H^\circ(-k,s)$ also satisfy the Cauchy--Riemann equations~\eqref{eq:CR-layered},~\eqref{eq:CR-boundary} and thus must coincide with~$H,H^\circ$.

Given~$s\ge 0$, let~$H_s$ denote the semi-infinite vector of the (real) values~$H(-k,s)$, $k\in\mathbb N_0$, where we assign zero values to the indices~$s$ such that~$s+k\in 2\mathbb Z$. Similarly, let~$H^\circ_s$ be the vector of the (purely imaginary) values~$H^\circ(-k,s)$, $k\in\mathbb N$, where we assign zero values to the indices~$s$ such that~$s+k\not\in 2\mathbb Z$.
We can write the harmonicity-type equations~\eqref{eq:H-Crelations} and~\eqref{eq:Hstar-Crelations} as
\begin{equation}
\label{eq:Hs-recurrence}
H_s={\textstyle\frac{1}{2}}C [H_{s-1}\!+H_{s+1}],\qquad H^\circ_s={\textstyle\frac{1}{2}}C^\circ[H^\circ_{s-1}\!+H^\circ_{s+1}]\,,\qquad s\ge 1,
\end{equation}
where the self-adjoint operators~$C$ and~$C^\circ$ are given by
\begin{align*}
C\ &:=\ \left[\begin{array}{ccccc}
0 & \sin\theta_1 & 0 & 0 & \dots \\
\sin\theta_1 & 0 & \sin\theta_2\cos\theta_1 & 0 & \dots \\
0 & \sin\theta_2\cos\theta_1 & 0 & \sin\theta_3\cos\theta_2 & \dots \\
0 & 0 & \sin\theta_3\cos\theta_2 & 0 & \dots \\
\dots & \dots & \dots & \dots & \dots\end{array}
\right]\,,\\
C^\circ\ &:=\ \left[\begin{array}{ccccc}
0 & \cos\theta_2\sin\theta_1 & 0 & 0 & \dots \\
\cos\theta_2\sin\theta_1 & 0 & \cos\theta_3\sin\theta_2 & 0 & \dots \\
0 & \cos\theta_3\sin\theta_2 & 0 & \cos\theta_4\sin\theta_3 & \dots \\
0 & 0 & \cos\theta_4\sin\theta_3 & 0 & \dots \\
\dots & \dots & \dots & \dots & \dots\end{array}\right]\,.
\end{align*}

Let~$T(\lambda):=\lambda^{-1}\cdot(1-\sqrt{1\!-\!\lambda^2})$. Similarly to Section~\ref{subsect:construction-homogen}, in order to satisfy the recurrences~\eqref{eq:Hs-recurrence} we intend to write
\begin{equation}
\label{eq:Hs-via-H0}
H_s\ :=\ [T(C)]^sH_0,\qquad H^\circ_s\ :=\ [T(C)]^sH^\circ_0,\qquad s\ge 1.
\end{equation}
We now introduce an operator~$D$, {which plays the key role} in the rest of the analysis:
\[
D\ :=\ i\left[\begin{array}{ccccc}
\cos\theta_1 & 0 & 0 & 0 &  \dots \\
0 & \cos\theta_1\cos\theta_2 & 0 & 0 & \dots \\
-\sin\theta_1\sin\theta_2 & 0 & \cos\theta_2\cos\theta_3 & 0 & \dots \\
0  & -\sin\theta_2\sin\theta_3 & 0 & \cos\theta_3\cos\theta_4 & \dots \\
\dots & \dots & \dots &  \dots & \dots\end{array}\right].
\]
A straightforward computation gives
\begin{equation}
\label{eq:DD*-factorization}
CD=DC^\circ,\qquad DD^*=I-C^2\quad \text{and}\quad D^*D=I-(C^\circ)^2.
\end{equation}
In particular, this implies that $-I\le C,C^\circ\le I$. Therefore, the operators~$T(C)$ and~$T(C^\circ)$ in~\eqref{eq:Hs-via-H0} are well-defined {and the vectors~$H_s$ and~$H_s^\circ$ defined by~\eqref{eq:Hs-via-H0} are uniformly bounded as~$s\to\infty$.} Still, we need to find the vectors~$H_0$ and~$H^\circ_0$ so that not only the harmonicity-type identities~\eqref{eq:Hs-recurrence} {for~$H$ and~$H^\circ$} but also the Cauchy--Riemann equations~\eqref{eq:CR-layered},~\eqref{eq:CR-boundary} {relating~$H_s$ and~$H^\circ_s$} are satisfied.

Note that~$\Ker D=\{0\}$ while the kernel of~$D^*$ can be two-dimensional (more precisely, each of the two operators $D^*_\mathrm{even}$ and $D^*_\mathrm{odd}$ can have a one-dimensional kernel). Let~$D^*=U(DD^*)^{1/2}$ be the \emph{polar decomposition} of~$D^*$, where
\begin{equation}
\label{eq:U-def}
U\ :=\ (D^*D)^{-1/2}D^*\ =\ D^*(DD^*)^{-1/2} 
\end{equation}
is a \emph{(partial) isometry}. We are now able to formulate the key proposition on the construction of solutions to~\eqref{eq:CR-layered},~\eqref{eq:CR-boundary} in the upper quadrant.
\begin{prop} \label{prop:CRsolution-layered} Given~$H_0\in\ell^2$, let~$H^\circ_0:=UH_0$. Then, $H_s:=[T(C)]^sH_0$ and $H^\circ_s:=[T(C^\circ)]^sH^\circ_0$ are uniformly bounded in~$\ell^2$ and provide a solution to the Cauchy--Riemann equations~\eqref{eq:CR-layered},~\eqref{eq:CR-boundary} in the upper quadrant.
\end{prop}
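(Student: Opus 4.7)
The plan is to verify, in order, that (i) $(H_s)$ and $(H^\circ_s)$ are uniformly bounded in $\ell^2$; (ii) they satisfy the harmonicity recurrences~\eqref{eq:Hs-recurrence}; and (iii) they satisfy the Cauchy--Riemann relations~\eqref{eq:CR-layered}--\eqref{eq:CR-boundary}. Part~(i) is immediate from spectral calculus: \eqref{eq:DD*-factorization} gives $\|C\|,\|C^\circ\|\le 1$, and since $T(\lambda)=\lambda/(1+\sqrt{1-\lambda^2})$ satisfies $|T(\lambda)|\le 1$ on $[-1,1]$, the operators $T(C)$ and $T(C^\circ)$ are contractions, so $\|H_s\|_{\ell^2}\le\|H_0\|_{\ell^2}$ and $\|H^\circ_s\|_{\ell^2}\le\|H^\circ_0\|_{\ell^2}$. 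Part~(ii) follows from the scalar identity $\lambda(1+T(\lambda)^2)=2T(\lambda)$: applied to $C$ and multiplied on the right by $T(C)^{s-1}$ this gives $C(H_{s-1}+H_{s+1})=2H_s$ for $s\ge 1$, and the same argument with $C^\circ$ handles $H^\circ$.

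For part~(iii), the central step is to establish the two operator identities
\begin{equation*}
DH^\circ_s\;=\;H_s-CH_{s+1}\qquad\text{and}\qquad D^*H_s\;=\;H^\circ_s-C^\circ H^\circ_{s+1}\qquad(s\ge 0).
\end{equation*}
At $s=0$, combining the polar formulas $DU=(DD^*)^{1/2}$ and $\sqrt{I-(C^\circ)^2}\,U=D^*$ (the second coming from the alternative expression $U=(D^*D)^{-1/2}D^*$ in~\eqref{eq:U-def}) with the scalar identity $\sqrt{1-\lambda^2}=1-\lambda T(\lambda)$ gives
\begin{equation*}
DH^\circ_0=\sqrt{I-C^2}\,H_0=(I-CT(C))H_0=H_0-CH_1,
\end{equation*}
and analogously $D^*H_0=H^\circ_0-C^\circ H^\circ_1$. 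Propagation to $s\ge 1$ is then automatic: the intertwining $CD=DC^\circ$ from~\eqref{eq:DD*-factorization} lifts by spectral calculus to $T(C)D=DT(C^\circ)$ and $D^*T(C)=T(C^\circ)D^*$, and applying $T(C)^s$ (respectively $T(C^\circ)^s$) to the $s=0$ identities yields the general case.

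It remains to translate these two operator identities back into the scalar Cauchy--Riemann equations. With the indexing $(H_s)_k=H(-k,s)$ for $k\ge 0$ and $(H^\circ_s)_k=H^\circ(-k-1,s)$ for $k\ge 0$, a direct expansion using the explicit entries of $C$, $C^\circ$ and $D$ identifies the $k$-th row of $DH^\circ_s=H_s-CH_{s+1}$ with the linear combination $-\cos\theta_k\cdot[\text{`$+$' equation at }(k,s)]+\sin\theta_k\cdot[\text{`$-$' equation at }(k-1,s+1)]$ of the scalar identities~\eqref{eq:CR-layered} (the convention $\theta_0:=0$ makes the rows $k=0,1$ reproduce the boundary relation~\eqref{eq:CR-boundary}), while the $k$-th row of $D^*H_s=H^\circ_s-C^\circ H^\circ_{s+1}$ gives the transversal combination $\cos\theta_{k+1}\cdot[\text{`$+$' at }(k,s)]+\sin\theta_{k+1}\cdot[\text{`$-$' at }(k+1,s+1)]$. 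These two combinations are non-proportional at every $(k,s)$, and because the alleged CR residuals are $\ell^2$-sequences in $k$, a short shift-recursion in $k$ obtained by eliminating the `$+$'-residual between the two combinations forces every individual `$\pm$'-equation to vanish. The main technical obstacle is precisely this final row-by-row decoupling, with its attendant parity and boundary bookkeeping; the spectral-calculus core of the argument is essentially routine.
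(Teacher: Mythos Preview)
Your argument is correct, and the spectral-calculus core (boundedness, the intertwining $D^*T(C)=T(C^\circ)D^*$, the identity $DU=(I-C^2)^{1/2}$) coincides with the paper's. The genuine difference lies in the second operator identity you use and, consequently, in the final decoupling step.

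The paper does not use your identity $D^*H_s=H^\circ_s-C^\circ H^\circ_{s+1}$. Instead, after observing that $UC=C^\circ U$ forces $H^\circ_s=UH_s$ for all $s\ge 0$, it pairs your identity $DH^\circ_s=H_s-CH_{s+1}$ with the companion identity
\[
H_{s+1}-CH_s\;=\;-(I-C^2)^{1/2}H_{s+1}\;=\;-DH^\circ_{s+1},
\]
which follows from the scalar relation $T(\lambda)-\lambda=-\sqrt{1-\lambda^2}\,T(\lambda)$. Both of the paper's identities involve only $D$, and since $D$ is lower-bidiagonal, the resulting system of scalar equations is \emph{triangular}: the first entry of one of the two vector identities (depending on the parity of $s$) is exactly the boundary relation~\eqref{eq:CR-boundary}, the first entry of the other is a combination of~\eqref{eq:CR-boundary} with~\eqref{eq:CR-layered} at $k=1$, and each successive entry adds exactly one new CR equation. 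The induction in $k$ is therefore immediate and needs no decay hypothesis.

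Your route through $D^*$ produces, at each row, combinations involving CR residuals at $k-1$, $k$ and $k+1$ simultaneously; eliminating the ``$+$'' residual leaves a two-step recursion on the ``$-$'' residuals that you then close using the $\ell^2$ constraint. This works, but it is the harder path: the $\ell^2$ input is not actually needed, and the parity/boundary bookkeeping you flag as ``the main technical obstacle'' disappears entirely in the paper's triangular formulation. In short, replacing your $D^*$-identity by the second $D$-identity above (equivalently, combining your identity (A) with the harmonicity you already proved in step~(ii)) collapses your shift-recursion into a one-line forward substitution.
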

\begin{proof} Since~$-I\le C,C^\circ\le I$, we have~$0\le T(C),T(C^\circ)\le I$. Therefore, $H_s$ and~$H^\circ_s$ are uniformly bounded in~$\ell^2$. Moreover, \eqref{eq:DD*-factorization} and~\eqref{eq:U-def} imply that~$UC=C^\circ U$ and hence~$H_s^\circ=[T(C^\circ)]^sUH_0=U[T(C)]^sH_0=UH_s$ for all~$s\ge 0$. This allows one to write
\begin{align}
\label{eq:s->s+1-I} C H_{s+1}-H_s\ &=\ -(I-C^2)^{1/2}H_s\ =\ -DUH_s\ =\ -DH^\circ_s\,,\\
\label{eq:s->s+1-II} H_{s+1}-CH_s\ &=\ -(I-C^2)^{1/2}H_{s+1}\ =\ -DH^\circ_{s+1}\,.
\end{align}

{It is not hard to see that these equations are equivalent to the Cauchy--Riemann identities~\eqref{eq:CR-layered},~\eqref{eq:CR-boundary}. Indeed, the first entry of the vector-valued equation~\eqref{eq:s->s+1-I} or~\eqref{eq:s->s+1-II} (depending on the parity of~$s$) gives the relation~\eqref{eq:CR-boundary} while the first entry of the other equation gives a linear combination of~\eqref{eq:CR-boundary} and~\eqref{eq:CR-layered} with~$k=1$. Further, each of the next entries of~\eqref{eq:s->s+1-I} and~\eqref{eq:s->s+1-II} gives a linear combination of two identities~\eqref{eq:CR-layered} with two consecutive~$k$'s. Therefore, for each~$s\ge 0$ one can inductively (in~$k$) recover all the identities~\eqref{eq:CR-boundary},~\eqref{eq:CR-layered} from~\eqref{eq:s->s+1-I} and~\eqref{eq:s->s+1-II}.}
\end{proof}

Clearly, the operators~$D$ and~$U$ can be split into independent components indexed by odd/even indices, only one of which is relevant for the value of the magnetization~$M_m$ in the \emph{even} columns~$\mathrm{C}_{2m}$, the other component is responsible for the magnetization in odd columns. In particular, the relevant block~$D_\mathrm{even}$ of the operator~$D$ is given by~\eqref{eq:Deven-def}.

\begin{rem} \label{rem:U-as-Hilbert}
In view of the result provided by Proposition~\ref{prop:CRsolution-layered}, the (partial) isometry $U_{\mathrm{even}}$ can be thought of as a \emph{discrete Hilbert transform} associated with the Cauchy--Riemann equations \eqref{eq:CR-layered},~\eqref{eq:CR-boundary} in the upper quadrant: given the values~$H_0$ of the real part of a `discrete holomorphic' function~$(H,H^\circ)$ on the real line, it returns the boundary values~$H^\circ_0=U_{\mathrm{even}}H_0$ of its imaginary part.
\end{rem}

We are now able to prove the main result of this section.

\begin{proof}[Proof of Theorem~\ref{thm:layered}]
Let~$H$ and~$H^\circ$ be the values of the half-plane observable~$X_{[\mathbf v]}$ on west and east corners, respectively. Since~$H_0$
is a finite vector (see~\eqref{eq:H=0-layered}), it belongs to~$\ell^2$. Therefore, Lemma~\ref{lem:uniqueness-layered} and Proposition~\ref{prop:CRsolution-layered} imply that
\[
D_\mathrm{even}H_0^\circ\ =\ D_\mathrm{even}U_\mathrm{even}H_0\ =\ S_\mathrm{even}H_0\ =\
J^{1/2}\,[\,*\ \ldots\ *\ M_m\ 0\ 0\ \dots ]^\top,
\]
where we use the symbol~$*$ to denote unknown entries of the vector~$H_0$ and $M_m$ is its $(m+1)$-th coordinate. On the other hand, note that
\[
-iH^\circ(-2m-2,0)=X_{[\mathbf v]}((-2m-2,0))=-\mathbb E^+_{\mathbb H^\diamond}[\sigma_{-2m-\frac{5}{2}}]=M_{m+1}.
\]
By definition of the operator~$D$ and due to~\eqref{eq:H=0-layered} one sees that
\[
D_\mathrm{even}H_0^\circ\ =\ \cos\theta_{2m+1}\cos\theta_{2m+2}\cdot [\,0\ \ldots\ 0\ M_{m+1}\ *\ *\ \ldots]^\top.
\]
Recall that we denote by $P_{m+1}$ the orthogonal projection from $\ell^2$ onto the subspace generated by the first basis vectors $e_1,\ldots,e_{m+1}$ of $\ell^2$. It follows from the considerations given above that
\begin{align*}
P_{m+1}J^{1/2}P_{m+1}:\ f_{m+1}=[\,*\ \ldots\ *\ 1\,]^\top\ \mapsto\ &\beta_m\cdot [\,0\ \ldots\ 0\ 1\,]^\top=\beta_me_{m+1}\,,\\
\beta_m\ :=\ &\cos\theta_{2m+1}\cos\theta_{2m+2}\cdot M_{m+1}/M_m\,,
\end{align*}
for a certain vector $f_{m+1}=P_{m+1}f_{m+1}$ such that $\langle f_{m+1},e_{m+1}\rangle =1$. In particular, if we denote by $e'_1,e'_2,\ldots$ the orthogonalization of the vectors $e_1,e_2,\ldots$ with respect to the scalar product $\langle \,\cdot\, ,J^{1/2} \cdot\,\rangle$, then $\langle e'_{m+1},J^{1/2}e'_{m+1}\rangle=\langle e_{m+1}, J^{1/2}f_{m+1}\rangle =\beta_m$ and hence
\[
\det P_{m+1}J^{1/2}P_{m+1}\ =\ \det[\,\langle e'_p,J^{1/2}e'_q\rangle\,]_{p,q=1}^{m+1}\ =\ \prod_{k=0}^m \beta_k\ =\ M_{m+1}\cdot \prod_{k=1}^{2m+2}\cos\theta_k\,,
\]
where we also used the fact that $M_0=1$;
note that this computations does \emph{not} require any modification in the case~$m=0$ (when dealing with the magnetization in even columns). This gives the second formula for~$M_m$ in~\eqref{eq:Mm-layered-UJH}.

To prove that~$M_m$ also equals to~$|\det P_mU_\mathrm{even}P_m|$, note that
\[
(D^{\phantom{*}}_\mathrm{even}D^*_\mathrm{even})^{1/2}\ =\ D_\mathrm{even}U_\mathrm{even}\quad \text{and}\quad P_mD_\mathrm{even}\ =\ P_mD_\mathrm{even}P_m\,,
\]
which implies
\begin{align*}
\det P_m(D^{\phantom{*}}_\mathrm{even}D^*_\mathrm{even})^{1/2}P_m\ &=\ |\det P_mU_\mathrm{even}P_m|\cdot |\det P_mD_\mathrm{even}P_m|\\
&\textstyle =\ |\det P_mU_\mathrm{even}P_m|\cdot \prod_{k=1}^{2m}\cos\theta_k.
\end{align*}
Finally, to prove the last identity in~\eqref{eq:Mm-layered-UJH}, note that
\[
\det P_mJ^{1/2}P_m\ =\ \frac{\det[\,\langle J^{1/2}f_p,f_q\rangle\,]_{p,q=1}^m}{\det[\,\langle f_p,f_q\rangle\,]_{p,q=1}^m}
\]
for all bases~$f_1,\dots,f_m$ of the~$m$-dimensional space~$\mathrm{Ran} P_m$. Choosing the basis~$1,\lambda,\ldots,\lambda^{m-1}$ in the spectral representation {of the operator~$J$ in~$L^2(\nu_J(d\lambda))$} one obtains the identity
\[
\det P_mJ^{1/2}P_m\,=\,\frac{\mathrm{H}_m[\lambda^{1/2}\nu_J]}{\mathrm{H}_m[\nu_J]}\quad\text{and,\ \ similarly,}\quad \det P_mJP_m\,=\, \frac{\mathrm{H}_m[\lambda\nu_J]}{\mathrm{H}_m[\nu_J]}
\]
As~$\det P_mJP_m=\big[\det P_mD_\mathrm{even}P_m\big]^2 = \prod_{k=1}^{2n}\cos^2\theta_k$, this completes the proof.
\end{proof}

\subsection{Boundary magnetic field and {the wetting phase transition}} \label{subsect:wetting}
In this section we assume that~$\theta_k=\theta<\frac{\pi}{4}$ for all~$k\ge 2$, i.e., that we work with a fully homogeneous subcritical model but we allow the first interaction constant to have a different value. This can be trivially reformulated as inducing an additional magnetic field \emph{at the first column} whose strength~$h=2J_1$ corresponds to~$\theta_1$ via~\eqref{eq:parametrization-model}. The main result is the following theorem which translates the abstract formula~\eqref{eq:Mm-layered-UJH} into the concrete language of Toeplitz+Hankel determinants. Let
\begin{align}
q:=\tan\theta<1\,,\qquad & r:=1-\frac{\cos^2\theta_1}{\cos^2\theta}\in (-q^2;1)\,,\notag\\
w(z):=|1-q^2z|\,,\qquad &\xi(z):= \frac{(rz-q^2)(q^2z-1)}{(z-q^2)(q^2z-r)}\,.
\label{eq:w-tau-def}
\end{align}
Note that~$\xi(z)\xi(z^{-1})=1$.
\begin{theo}\label{thm:wetting} In the setup described above, the following formula holds:
\begin{equation}
\label{eq:Mm-boundary}
M_m\ =\ (1-r)^{-3/2}\det\big[\,\alpha_{k-n}-\beta_{k+n}+(1-r)^{3/2}\gamma_{k+n}\,\big]_{k,n=0}^{m-1}\,,
\end{equation}
where
\[
\alpha_s:=\frac{1}{2\pi}\int_{-\pi}^{\pi}e^{-is\theta}w(e^{i\theta})d\theta\,,\qquad
\beta_s:=\frac{1}{2\pi}\int_{-\pi}^{\pi}e^{-is\theta}\xi(e^{i\theta})w(e^{i\theta})d\theta,
\]
and~$\gamma_s:=c\cdot (q^2/r)^s$, $c=(r^2-q^4)r^{-3/2}(r-q^4)^{-1/2}$, if~$r>q^2$ and~$\gamma_s:=0$ otherwise.
\end{theo}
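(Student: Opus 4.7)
The plan is to specialize Theorem~\ref{thm:layered} to this setup by computing the spectral measure $\nu_J$ explicitly via the Weyl $m$-function, and then to convert the resulting Hankel determinants into a Toeplitz+Hankel determinant through the substitution $\tilde\lambda = w(e^{it})^2$, where $\tilde\lambda := \lambda/\cos^4\theta$.

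First I would unpack~\eqref{eq:J-def} in this regime. Writing $c := \cos\theta$, one checks that $c^{-4} J = \tilde J$ is a Jacobi matrix with bulk coefficients $\tilde b = 1+q^4$, $\tilde a = q^2$ and boundary entries $\tilde b_1 = 1-r$, $\tilde a_1 = q^2\sqrt{1-r}$. The uniform rescaling by $c^4$ drops out of the ratio in~\eqref{eq:Mm-layered-UJH}, and $\prod_{k=1}^{2m}\cos\theta_k = (1-r)^{1/2} c^{2m}$ is responsible for the prefactor $(1-r)^{-3/2}$ in~\eqref{eq:Mm-boundary}. The bulk $m$-function $G(\tilde\lambda)$ of $\tilde J$ after deletion of the first site satisfies the quadratic $\tilde a^2 G^2 - (\tilde b - \tilde\lambda) G + 1 = 0$, whose Herglotz branch is $G = z/q^2$ with $z = z(\tilde\lambda)$ the root of $\tilde\lambda = (1-q^2z)(1-q^2 z^{-1})$ in the open unit disc. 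A short manipulation using $\tilde a_1^2 = (1-r) q^4$ then yields
\begin{equation*}
m(\tilde\lambda)\ =\ \frac{1}{\tilde b_1 - \tilde\lambda - \tilde a_1^2 G(\tilde\lambda)}\ =\ \frac{z}{(1-q^2 z)(q^2 - rz)}\,.
\end{equation*}
One reads off that the absolutely continuous part of $\nu_J$ is supported on $[\cos^2(2\theta),1]$ and that $m$ has an additional pole at $z_0 = q^2/r$, which lies in the open disc \emph{precisely when $r > q^2$}; the residue computation gives the point-mass weight $w_0 = (r^2-q^4)/[r(r-q^4)]$ at $\tilde\lambda_0 = (r-q^4)(1-r)/r$, and the associated eigenfunction $\phi_0(k) \propto z_0^{k-1}$ for $k\geq 2$ justifies the geometric form $(q^2/r)^s$ of $\gamma_s$.

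Next I would parametrize $\tilde\lambda = w(e^{it})^2$ (so $\tilde\lambda^{1/2} = w(e^{it})$ on the continuous spectrum) and rewrite $\int f(\tilde\lambda)\, d\tilde\nu_{\mathrm{ac}} = \int_{-\pi}^{\pi} f(w(e^{it})^2)\, d\mu_\mathrm{sym}(t)$ with the symmetric measure $d\mu_\mathrm{sym}(t) = \frac{q^4\sin^2 t}{\pi |q^2-(r+q^4)e^{it}|^2} dt$ obtained from $\mathrm{Im}\,m(\tilde\lambda+i0)$ via the Plemelj formula. The key bridge to the function $\xi$ in~\eqref{eq:w-tau-def} is the identification of the reflection coefficient of $\tilde J$: writing the generalized eigenfunction $\psi_k(z) = z^{k-1} + R(z)\, z^{1-k}$ for $k\geq 2$ and imposing the boundary relation at $k=1,2$ yields after simplification $R(z) = -\xi(z)$. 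This identity lets one factor $d\mu_\mathrm{sym}(t)$ so that the Fourier coefficients $\alpha_s$ (of $w$) and $\beta_s$ (of $\xi w$) naturally appear when the moment integrals are expanded.

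Finally, using the Gram-determinant representation, I would expand $w(e^{it})^{2s+1} = w\cdot (1-q^2 e^{it})^s(1-q^2 e^{-it})^s$ and exploit the symmetry $d\mu_\mathrm{sym}(-t) = d\mu_\mathrm{sym}(t)$: the entries of the normalised matrix for $\mathrm{H}_m[\lambda^{1/2}\nu_J]/(\mathrm{H}_m[\nu_J]\cdot\mathrm{H}_m[\lambda\nu_J])^{1/2}$ split into a Toeplitz term $\alpha_{k-n}$ (from the $w$-piece) and a Hankel term $-\beta_{k+n}$ (from the $\xi w$-piece). For $r>q^2$, the point-mass contribution is incorporated via the matrix-determinant lemma applied with the rank-one vector $((q^2/r)^k)_k$ built from $\phi_0$; the explicit normalisation yields the $\gamma_{k+n}$ correction with the announced $c$. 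The main obstacle will be this last collapsing step: verifying algebraically that the square root in the denominator $(\mathrm{H}_m[\nu_J]\cdot\mathrm{H}_m[\lambda\nu_J])^{1/2}$ cancels \emph{exactly} against $\mathrm{H}_m[\lambda^{1/2}\nu_J]$ to leave the single determinant in~\eqref{eq:Mm-boundary}. This cancellation is forced by the identity $R=-\xi$ together with $\xi(z)\xi(z^{-1})=1$, which ensures that the odd moments of $\tilde\nu$ in $w$ factor compatibly with the neighbouring even moments so that the normalization square root can be opened without producing branch-cut issues.
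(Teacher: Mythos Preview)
Your spectral analysis is essentially correct and matches the paper's: the continuous spectrum, the reflection coefficient $-\xi$, and the point mass at $\zeta_0=q^2/r$ when $r>q^2$ are all right. The gap is in the last step, and it is self-inflicted by your choice of which identity in~\eqref{eq:Mm-layered-UJH} to start from.

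You work with the third expression, the Hankel ratio $\mathrm{H}_m[\lambda^{1/2}\nu_J]\big/(\mathrm{H}_m[\nu_J]\mathrm{H}_m[\lambda\nu_J])^{1/2}$, and then face the problem of collapsing a ratio with a square root into a single determinant. Your justification for why this collapse should occur (``forced by $R=-\xi$ together with $\xi(z)\xi(z^{-1})=1$, which ensures that the odd moments factor compatibly\ldots'') is not a proof; the Hankel matrices here have entries $\int\lambda^{p+q+1/2}d\nu_J$, i.e.\ integrals of $w^{2(p+q)+1}$ against the density, and these do \emph{not} naturally organise themselves into $\alpha_{k-n}-\beta_{k+n}$ indexed by $k,n$ separately. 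There is no mechanism in what you wrote that produces a Toeplitz part out of a Hankel determinant of moments.

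The paper avoids this entirely by using the \emph{second} expression in~\eqref{eq:Mm-layered-UJH}, namely $M_m=(\prod_{k=1}^{2m}\cos\theta_k)^{-1}\det P_mJ^{1/2}P_m$, and computing the matrix entries $\langle e_k,J^{1/2}e_n\rangle$ directly from the eigenfunction expansion. With $\psi_k(\zeta)=\varrho_k^{-1}[\zeta^k-\xi(\zeta)\zeta^{-k}]$ and $\lambda(\zeta)^{1/2}=(1+q^2)^{-1}w(\zeta)$, the product $\psi_n(\zeta^{-1})\psi_k(\zeta)$ already contains the terms $\zeta^{k-n}$ and $-\xi(\zeta^{-1})\zeta^{k+n}$ (the other two combine via $\xi(\zeta)\xi(\zeta^{-1})=1$ and the $\theta\mapsto -\theta$ symmetry), so integrating against $w(e^{i\theta})d\theta$ over the circle yields $\alpha_{k-n}-\beta_{k+n}$ with no further manipulation. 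The point-mass term contributes the rank-one correction $\gamma_{k+n}$ directly, and the prefactor $(1-r)^{-3/2}$ comes from $\prod\cos\theta_k=(1-r)^{1/2}(1+q^2)^{-m}$ together with the two factors $\varrho_0=(1-r)^{1/2}$ sitting in the first row and column. No square-root cancellation is ever needed.
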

\begin{proof} Denote $a:=\sin^2\theta\cos^2\theta=(q+q^{-1})^{-2}$. The entries of the Jacobi matrix~$J$ (see~\eqref{eq:J-def}) are given by
\[
b_1=(1-r)q^{-2}a\,,\quad a_1=(1-r)^{1/2}a\,;\qquad b_k=1-2a\,,\quad a_k=a\,,\quad k\ge 2.
\]
Let~$\varrho_k:=(1-r\delta_{k,0})^{1/2}$, where~$\delta_{k,0}$ is the Kronecker delta. The \emph{continuous} spectrum of~$J$ has multiplicity~$1$ and equals to~$[1\!-\!4a\,,1]$. The generalized eigenfunctions are
\[
\psi_k(\zeta):=\varrho_k^{-1}\cdot[\zeta^k-\xi(\zeta)\zeta^{-k}],\quad \lambda(\zeta):=1-a\cdot (2+\zeta+\zeta^{-1}),\quad \zeta=e^{i\theta},\ \theta\in[0,\pi].
\]
The coefficient~$\xi(\zeta)$ should satisfy the condition
\mbox{$(b_1-\lambda(\zeta))\psi_0(\zeta)=a_1\psi_1(\zeta)$} which leads to the formula~\eqref{eq:w-tau-def}.
The matrix $J$ also has the \emph{eigenvalue}
\[
\lambda(\zeta_0)\ =\ \frac{(1-r)(r-q^4)}{r(1+q^2)^2}\ \in\ (0,1-4a)\quad \text{if}\ \ \zeta_0:=q^2/r<1
\]
since~$\xi(\zeta_0)=0$. Note that
\begin{align*}
\frac{\varrho_k\varrho_n}{2\pi}\int_0^\pi \psi_n(e^{-i\theta})\psi_k(e^{i\theta})d\theta\ &=\ \frac{1}{2\pi i}\oint_{|\zeta|=1}[\,\zeta^{k-n}-\xi(\zeta^{-1})\zeta^{k+n}\,]\,\frac{d\zeta}{\zeta}\\
& =\ (1-r\delta_{k+n,0})\cdot \delta_{k,n}\ -\ c_0^{\phantom{1}}\zeta_0^{k+n-1},
\end{align*}
where~$c_0=0$ if~$r\le q^2$ and
\[
c_0\ :=\ \res_{z=\zeta_0}\xi(z^{-1})\ =\ \frac{q^2(1-r)(r^2-q^4)}{r^2(r-q^4)}\quad \text{if}\ \ r>q^2.
\]
Thus, the spectral decomposition of the basis vector~$e_n=(\delta_{k,n})_{k\ge 0}$ reads as
\[
\delta_{k,n}\ =\
\frac{1}{2\pi}\int_0^\pi \psi_n(e^{-i\theta})\psi_k(e^{i\theta})d\theta\ +\  \varrho_n^{-1}c_0^{\phantom{1}}\zeta_0^{n-1}\cdot \psi_k(\zeta_0).
\]
Since~$\lambda(e^{i\theta})=(1+q^2)^{-2}(w(e^{it}))^2$, this gives the identity
\begin{align*}
 \varrho_k\varrho_n\langle e_k,J^{1/2}e_n \rangle\ &=\ \frac{\varrho_k\varrho_n}{2\pi}\int_0^\pi \psi_n(e^{-i\theta})\psi_k(e^{i\theta})\frac{w(e^{i\theta})d\theta}{1+q^2}\;+\; c_0^{\phantom{1}}\zeta_0^{k+n-1}(\lambda(\zeta_0))^{1/2}\\
  &=\ \varrho_k\varrho_n\cdot [(1+q^2)^{-1}(\alpha_{k-n}-\beta_{k+n}) + c_0(\lambda(\zeta_0))^{1/2}\zeta_0^{k+n-1}]\,.
\end{align*}

It remains to note that the normalizing factor $[\,\prod_{k=1}^{2n}\cos\theta_k\,]^{-1}$ in~\eqref{eq:Mm-layered-UJH} equals to $(1-r)^{-1/2}\cdot(1+q^2)^{k}$ and hence (note also the two factors~$\varrho_0=(1-r)^{-1/2}$ in the first row and the first column of the matrix~$J^{1/2}$)
\[
M_m\ =\ 
(1-r)^{-3/2}\det\big[\alpha_{k-n}+\beta_{k+n}+(1-r)^{3/2}c\cdot \zeta_0^{k+n-1}\big]_{k,n=0}^{m-1},
\]
where
\begin{align*}
c\ :=\ \frac{r(1+q^2)c_0(\lambda(\zeta_0))^{1/2}}{q^2(1-r)^{3/2}}\ 
 & =\ \frac{r^2-q^4}{r^{3/2}(r-q^4)^{1/2}}
\end{align*}
as claimed.
\end{proof}
\begin{rem}[\emph{free boundary conditions}]
One can pass to the limit~$r\to 1^-$ (which corresponds to $J_1\to 0^+$) in the formula~\eqref{eq:Mm-boundary} since~$\alpha_s=\alpha_{-s}=\beta_s+O(1-r)$ and~$\alpha_0=\beta_0+O((1-r)^2)$ as~$r\to 1^-$. (It is also not hard to adapt the proofs of Theorems~\ref{thm:layered} and~\ref{thm:wetting} for this setup.) In particular, one can easily see that
\[
\mathbb{E}^{+,0}_{\mathbb H^\diamond}[\sigma_{(-\frac{5}{2},0)}]\ =\ 
(1-q^4)^{1/2},\qquad r=1,
\]
where the sign~`$+$' in the superscript indicates the boundary conditions at infinity and $0$ stands for the value of the magnetic field~$h$ at the vertical boundary (free boundary conditions). Note that~$M_1$ does \emph{not} vanish at~$h=0$ provided that~$q<1$: the `$+$' boundary conditions at infinity break the spin-flip symmetry.
\end{rem}
\begin{rem}[{\emph{wetting phase transition}}] In fact, one can analytically continue the right-hand side of~\eqref{eq:Mm-boundary} to \emph{negative} values of~$(1-r)^{1/2}$.
{According to~\cite{frohlich-pfister-87,pfister-velenik-96}, this corresponds to a wetting phase transition. Informally speaking, for small negative values~$-h$ of the boundary magnetic field, the interface separating~`$+$' boundary conditions at infinity from~`$-$' ones on the imaginary line~$i\mathbb{R}$ touches the boundary infinitely often and the~`$+$' phase dominates in the bulk of the half-plane, while for big negative values $-h$ this interface `breaks away' from~$i\mathbb{R}$ and the `$-$' phase dominates in the bulk.}
For instance, one should have
\[
\mathbb{E}^{+,-h}_{\mathbb H^\diamond}[\sigma_{(-\frac{5}{2},0)}]\ =\ -|1-r|^{3/2}(\alpha_0-\beta_0)+\gamma_0\ =\ 2\gamma_0 -\mathbb{E}^{+,h}_{\mathbb H^\diamond}[\sigma_{(-\frac{5}{2},0)}]
\]
provided that~$h$ is small enough. Due to Theorem~\ref{thm:wetting}, the mismatch~$2\gamma_0$ disappears (which means that the boundary conditions at the vertical line dominate those at infinity) if~$h\ge h_\mathrm{crit}(q)$, where the critical value~$h_\mathrm{crit}(q)$ is specified by the condition~$r=q^2$.

We refer the interested reader to~\cite{frohlich-pfister-87,pfister-velenik-96} and~\cite[Chapter~XIII]{mccoy-wu-book} for a discussion of this regime of the Ising model. {(Note that the interpretation of the physics behind this effect given in the book~\cite{mccoy-wu-book} differs from the later work~\cite{frohlich-pfister-87,pfister-velenik-96}.)} In particular,~\cite[Fig.~13.7]{mccoy-wu-book} suggests that
\begin{align*}
\lim_{m\to \infty} \mathbb{E}^{+,-h}_{\mathbb H^\diamond}[\sigma_{(-2m-\frac{1}{2},0)}]\ &=\ (1-q^4)^{1/8}\quad \text{for all}\ \ h< h_\mathrm{crit}(q)
\end{align*}
while, for all~$m\in\mathbb N_0$,
\[
\mathbb{E}^{+,-h}_{\mathbb H^\diamond}[\sigma_{(-2m-\frac{1}{2},0)}] 
\ =\ -\mathbb{E}^{+,h}_{\mathbb H^\diamond}[\sigma_{(-2m-\frac{1}{2},0)}]\quad \text{if}\ \ h\ge h_\mathrm{crit}(q)
\]
since~$\gamma_s=0$ in the latter case. This means that the sign of the bulk magnetization should flip when the negative boundary magnetic field attains the value~$-h_\mathrm{crit}(q)$. It would be interesting to derive this fact as well as to understand the profile of the function~$M_m(h)$ in detail using Toeplitz+Hankel determinants~\eqref{eq:Mm-boundary}.
\end{rem}

\section{Geometric interpretation: isoradial graphs and s-embeddings}\label{sc:geometry}

\subsection{Regular homogeneous grids and isoradial graphs} \label{subsect:isoradial}
In this section we briefly discuss the geometric interpretation of the parameters
\begin{equation}
\label{eq:xh,xv=trig}
\exp[-2\beta J^\mathrm{h}]\,=\,x^\mathrm{h}\ =\ \tan\tfrac{1}{2}\theta^\mathrm{h},\qquad \exp[-2\beta J^\mathrm{v}]\,=\,x^\mathrm{v}\,=\,\tan\tfrac{1}{2}\theta^\mathrm{v}
\end{equation}
of the homogeneous Ising model on the square grid by putting it into a more general context of \emph{Z-invariant} Ising models on isoradial graphs. We refer the reader interested in historical remarks on Z-invariance to the classical paper~\cite{baxter-enting} due to Baxter and Enting, a standard source for the detailed treatment is~\cite[Sections~6~and~7]{baxter-book}. We also refer the interested reader to the paper~\cite{au-yang-perk-critZ-87} and references therein, where the Z-invariance was first (to the best of our knowledge) discussed in a geometric context, as well as to the more recent work~\cite{mercat-CMP} due to Mercat. The latter paper popularized statistical mechanics models on \emph{rhombic lattices}~$\Lambda(G)$ in the probabilistic community {(recall that the vertices of $\Lambda(G)$ are those of $G^\bullet$ and $G^\circ$; see Section~\ref{subsect:contours});} the name \emph{isoradial graphs} for the corresponding embeddings of the graph $G^\bullet$ itself was coined by Kenyon in~\cite{kenyon-02} shortly afterwards. Below we adopt the notation from the recent paper~\cite{BdTR-Ising} on this subject due to Boutillier, de Tili\`ere, and Raschel and refer the interested reader to that paper for more references. The key idea of this geometric interpretation is that the {combinatorial} star-triangle transforms of the Ising model {(which are known as the Yang--Baxter equation in the transfer matrices context)} become {local rearrangements} of~$\Lambda(G)$, e.g. see~\cite[Fig.~5]{BdTR-Ising}.

In the notation of~\cite{BdTR-Ising}, one searches for a re-parametrization
\begin{equation}
\label{eq:xh,xv=elliptic}
x^{\mathrm{v}}=x(\theta\,|\,k)\,:=\,\frac{\mathrm{cn}(\frac{2K}{\pi}\theta\,|\,k)}{1+\mathrm{sn}(\frac{2K}{\pi}\theta\,|\,k)}\,,\qquad x^{ \mathrm{h}}=x(\tfrac{\pi}{2}-\theta\,|\,k)\,,
\end{equation}
where $\mathrm{cn}$ and~$\mathrm{sn}$ are the Jacobi elliptic functions, $\theta\in(0,\frac{\pi}{2})$, $k^2\in (-\infty,1)$, and \mbox{$K=K(k)$} is the complete elliptic integral of the first kind, see~\cite[Section~2.2.2]{BdTR-Ising}. Once such a parametrization is found, it becomes useful to replace the square grid by a rectangular one, with horizontal mesh steps~$2\cos\theta$ and vertical steps~$2\sin\theta$, as the Ising model under consideration fits the framework of~\cite{BdTR-Ising}, with~$\theta$ and~$\frac{\pi}{2}-\theta$ being the half-angles of the rhombic lattice; {note that in~\cite{BdTR-Ising} the Ising spins are assigned to \emph{vertices} of an isoradial graph while in our paper they live on \emph{faces}.}

It is easy to see that the equations~\eqref{eq:xh,xv=trig}, \eqref{eq:xh,xv=elliptic} can be written as
\[
\tan\theta^\mathrm{h}=\mathrm{sc}(\tfrac{2K}{\pi}\theta\,|\,k),\qquad \tan\theta^\mathrm{v}=\mathrm{sc}(K-\tfrac{2K}{\pi}\theta\,|\,k).
\]
In particular, the parametrization~\eqref{eq:xh,xv=elliptic} is always possible and
\[
\tan\theta^\mathrm{h}\tan\theta^\mathrm{v}=(1-k^2)^{1/2}.
\]
Furthermore, the criticality condition~$\theta^\mathrm{h}+\theta^\mathrm{v}=\frac{1}{2}\pi$ is equivalent to~$k^2=0$, and
\begin{equation}
\label{eq:M=k1/4}
\mathcal{M}(\theta^\mathrm{h},\theta^\mathrm{v})\ =\ (1-(\tan\theta^\mathrm{h}\tan\theta^\mathrm{v})^2)^{1/8}\ =\ k^{1/4}\quad \text{if}\ \ k^2\in[0,1),
\end{equation}
a classical result of Baxter (see~\cite[Eq.~(7.10.50)]{baxter-book}). Moreover, the Z-invariance allows one to treat the homogeneous Ising model on the triangular/honeycomb lattices on the same foot with the model on the square grid, see~\cite[Fig.~2]{baxter-enting}: one has
\[
\mathcal{M}_\mathrm{tri}(\theta_\mathrm{tri})\ =\ \mathcal{M}_\mathrm{hex}(\theta_\mathrm{hex})\ =\ k^{1/4}\quad \text{if}\ \ x_{\mathrm{tri}}=x(\tfrac{\pi}{6}\,|\,k),\ \ x_{\mathrm{hex}}=x(\tfrac{\pi}{3}\,|\,k),\ \ k\ge 0,
\]
where we assume that the Ising model is considered on \emph{faces} of the grid and use the same parametrization~\eqref{eq:parametrization-model} of interaction constants as usual in our paper.

The importance of the particular way to draw the lattice becomes fully transparent at criticality, when~$\theta=\theta^\mathrm{h}=\frac{\pi}{2}-\theta^\mathrm{v}$. (Due to Z-invariance, this condition reads as~$\theta_\mathrm{tri}=\tfrac{\pi}{6}$ or~$\theta_\mathrm{hex}=\frac{\pi}{3}$ for the homogeneous model on \emph{faces} of the triangular or honeycomb lattices.) Indeed, under the isoradial embedding, the multiplicative factor in the asymptotics
\[
D_m\ \sim\ \mathcal{C}_\sigma^2\cdot (2m\cos\theta)^{-1/4}\quad \text{as}\ \ m\to\infty
\]
provided by Theorem~\ref{thm:crit-homogen} has a clear interpretation: $2m\cos\theta$ is nothing but the \emph{geometric distance} between the two spins (located at~$m$ lattice steps from each other) under consideration.

\begin{rem} Baxter's formula~\eqref{eq:M=k1/4} suggests that the spontaneous magnetization under criticality equals to~$k^\frac{1}{4}$ for the {whole} family of Ising models considered in~\cite{BdTR-Ising} and not only on regular grids. Moreover, in the critical case~$k=0$ the asymptotics~$\mathbb E[\sigma_u\sigma_w]\ \sim \mathcal{C}_\sigma^2\cdot |u-w|^{-1/4}$ as~$|u-w|\to\infty$ holds on {all} isoradial graphs, with the \emph{universal} multiplicative constant~$\mathcal{C}_\sigma^2$; see~\cite{chelkak-izyurov-mahfouf} for further details.
\end{rem}

\subsection{S-embeddings of the layered zig-zag half-plane in the periodic case} \label{subsect:s-embeddings}
We now move on from classical rhombic lattices to more general and flexible setup of s-embeddings suggested in~\cite{chelkak-icm2018} (see also~\cite{chelkak-semb} and~\cite[Section~7]{KLRR} for more details) as a tool to study critical Ising models on planar graphs. We start with discussing a geometric intuition behind the layered setup with \emph{periodic} interaction constants~$\theta_k=\theta_{k+2n}$ and conclude by formulating questions on the asymptotic behavior of the truncated determinants~\eqref{eq:Mm-layered-UJH} as~$m\to\infty$ in this setup. 

The next lemma is a simple corollary of a general result given in~\cite{cimasoni-duminil} on the criticality condition for the Ising model on a bi-periodic planar graph.
\begin{lem}\label{lem:periodic-crit} Let~$\theta_k=\theta_{k+2n}$ for all~$k\ge 1$ and some~$n\ge 1$. The layered Ising model in the zig-zag (half-)plane with the interaction constants~$x_k=\tan\frac{1}{2}\theta_k$ between the~$(k\!-\!1)$-th and~$k$-th columns is critical (see~\cite{cimasoni-duminil} for a precise definition) if and only if the following condition holds:
\begin{equation}
\label{eq:crit-periodic}\
\textstyle \prod_{k=1}^{2n}\tan\theta_k\ =\ 1.
\end{equation}
\end{lem}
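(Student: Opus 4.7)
The plan is to apply the Cimasoni--Duminil-Copin criterion~\cite{cimasoni-duminil} for criticality of Ising models on biperiodic planar weighted graphs to the zig-zag layered lattice with horizontal period $2n$. Their result states that the model is critical iff a characteristic Laurent polynomial $P_F(z,w)$ --- essentially a Kac-Ward / Fisher determinant on a torus fundamental domain $F$ --- vanishes at a distinguished real point of the unit torus $\{|z|=|w|=1\}$. Because our lattice is layered, $P_F(z,w)$ factorizes nicely through the column-to-column transfer matrix already implicitly used in the paper, so the criticality condition becomes a concrete algebraic condition on $\theta_1,\dots,\theta_{2n}$.

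In the Kadanoff--Ceva free-fermion framework of Section~\ref{sc:layered}, the single-particle energy spectrum of the transfer operator is controlled by the periodic Jacobi matrix $J=D^{\phantom{*}}_\mathrm{even}D_\mathrm{even}^*$ of~\eqref{eq:J-def}, whose $n$-periodicity follows from the $2n$-periodicity of the~$\theta_k$. Criticality corresponds to the fermion gap closing, i.e.\ to $\inf\sigma(J)=0$. Since $J\ge 0$, this holds iff the equation $D_\mathrm{even}^*\psi^\circ=0$ admits a bounded non-trivial solution. By~\eqref{eq:Deven-def}, this equation reduces to the explicit first-order recurrence
\[
\cos\theta_{2k-1}\cos\theta_{2k}\cdot\psi^\circ_k\ =\ \sin\theta_{2k}\sin\theta_{2k+1}\cdot \psi^\circ_{k+1},\qquad k\ge 1,
\]
whose iterate over one period of~$n$ steps gives $\psi^\circ_{k+n}/\psi^\circ_k=\prod_{j=1}^{2n}\cot\theta_j$ (the numerator collects all $\cos\theta_j$ and the denominator all $\sin\theta_j$ for $j=1,\dots,2n$). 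Positivity of this ratio forces the bounded (hence periodic) solution to exist precisely when $\prod_{j=1}^{2n}\tan\theta_j=1$, which is the required criterion.

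The main technical step is the identification between the Cimasoni--Duminil-Copin criticality condition and the gap-closing condition $\inf\sigma(J)=0$ for this specific lattice. One route is a direct Bloch/Floquet analysis: one writes $P_F(z,w)$ as the determinant of the product of $2n$ elementary $2\!\times\!2$ transfer matrices depending on $(\theta_k)$ and $w$, and verifies that the existence of a real zero on $\{|z|=|w|=1\}$ at the distinguished point reduces, after trigonometric simplification, to the bottom of the band of $J$ touching zero. Alternatively, the identification is a consequence of the general free-fermion diagonalization of the Ising transfer matrix, under which the dispersion relation~$\varepsilon(\lambda)$ of the single-particle energies vanishes precisely where $\lambda=0$ lies in the spectrum of the periodic Jacobi matrix~$J$.
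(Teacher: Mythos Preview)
Your approach differs substantially from the paper's and, as written, contains a genuine gap.

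The paper's proof is a short direct computation using the combinatorial form of the Cimasoni--Duminil-Copin criterion~\cite[Theorem~1.1]{cimasoni-duminil}: criticality is equivalent to $\sum_{P\in\mathcal E_0(\mathcal G)}x(P)=\sum_{P\in\mathcal E_1(\mathcal G)}x(P)$ for the fundamental domain $\mathcal G$ on the torus. For the zig-zag lattice the fundamental domain has $2n$ vertices arranged in a cycle with two parallel edges between each consecutive pair, and even subgraphs are easy to enumerate: either $0$-or-$2$ edges in every slot (trivial homology) or exactly one edge in every slot (nontrivial homology). This gives $\prod_{k=1}^{2n}(1-x_k^2)=\prod_{k=1}^{2n}(2x_k)$, which is~\eqref{eq:crit-periodic} since $\tan\theta_k=2x_k/(1-x_k^2)$. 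No transfer matrices, no spectral theory.

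Your route instead passes through the spectral condition on~$J$. The recurrence analysis of $D_\mathrm{even}^*\psi^\circ=0$ is correct and does yield $\psi^\circ_{k+n}/\psi^\circ_k=\prod_{j=1}^{2n}\cot\theta_j$. However, the crucial identification --- that criticality in the sense of~\cite{cimasoni-duminil} is \emph{equivalent} to the spectral gap of~$J$ closing --- is precisely what you label ``the main technical step'' and then do not carry out. Both sketched routes (Floquet analysis of $P_F(z,w)$, or invoking the free-fermion diagonalization) would require real work on this specific lattice; neither is a one-liner, and the first would essentially reproduce the paper's combinatorial computation inside a determinant. As it stands, the proposal assumes the conclusion.

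There is also a smaller imprecision: $\inf\sigma(J)=0$ is not the same as ``a bounded (periodic) solution of $D_\mathrm{even}^*\psi^\circ=0$ exists''. When $\prod_k\tan\theta_k>1$ the solution $\psi^\circ$ decays exponentially and lies in $\ell^2$, so $0$ is an \emph{eigenvalue} of the half-line operator $J$ and one still has $\inf\sigma(J)=0$ (cf.\ Remark~\ref{rem:J-homo} for the homogeneous case). The correct spectral translation of criticality is that the \emph{continuous} spectrum of $J$ begins at $0$, as stated in Remark~\ref{rem:IDS-periodic}; your argument would need to separate this from the eigenvalue scenario.
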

\begin{proof} According to~\cite[Theorem~1.1]{cimasoni-duminil}, the criticality condition reads as
\[
\textstyle \sum_{P\in \mathcal{E}_0(\mathcal{G})}x(P)\ =\ \sum_{P\in\mathcal{E}_1(\mathcal{G})}x(P)\,,
\]
where~$\mathcal{G}$ denotes the fundamental domain of the grid drawn on the \emph{torus},~$\mathcal{E}_0(\mathcal{G})$ is the set of even subgraphs of~$\mathcal{G}$ having the homology type~$(0,0)$ modulo~$2$, and~$\mathcal{E}_1(\mathcal{G})$ is the set of all other even subgraphs of~$\mathcal{G}$ (i.e., those having the types~$(0,1)$, $(1,0)$ or~$(1,1)$ modulo~$2$). In our setup, the fundamental domain consists of~$2n$ vertices and one easily sees that each even subgraph~$P$ of~$\mathcal{G}$ either contains $0$ or~$2$ edges linking the~$k$-th and the~$(k+1)$-th vertices, for all~$k=1,\dots,2n$, or contains exactly one of the two edges between these vertices, for all~$k=1,\dots,2n$. Therefore,
\[
\textstyle \sum_{P\in \mathcal{E}_0(\mathcal{G})}x(P) - \sum_{P\in \mathcal{E}_1(\mathcal{G})}x(P)\ =\ \prod_{k=1}^{2n}(1-x_k^2)-\prod_{k=1}^{2n}(2x_k).
\]
Since~$\tan\theta_k=2x_k/(1-x_k^2)$, the claim easily follows.
\end{proof}
Recall that the same condition~\eqref{eq:crit-periodic} describes the fact that the spectrum of the non-negative Jacobi matrix~$J$ begins at~$0$. In this case, it is easy to see that the unique (up to a multiplicative constant) periodic solution to the equation~$J\psi^\circ=0$ (in other words, a generalized eigenfunction corresponding to~$\lambda=0$) is given by
\begin{equation}
\label{eq:psi-circ-explicit}
\textstyle \psi^\circ_k\ =\ (\sin\theta_{2k-1})^{-1}\cdot \prod_{p=1}^{2k-2}\cot\theta_p\,,\quad k\ge 1\,.
\end{equation}

\begin{figure}
\centering{
\begin{tikzpicture}[scale=0.17]
\input{Fig5.txt}
\end{tikzpicture}}
\caption{Canonical s-embedding of a periodic critical layered Ising model, see~\cite{chelkak-icm2018,chelkak-semb}. The slopes~$\phi_k$ are uniquely determined by the recurrence~\eqref{eq:tanphi-recurrence} and by the condition~\eqref{eq:tanphi-balance} coming from the required periodicity of the function~$\mathcal{Q}$ in the horizontal direction.}
\label{fig:s-embedding}
\end{figure}

Our next goal is to construct a \emph{canonical s-embedding}~$\mathcal{S}$ of the bi-periodic critical planar Ising model under consideration; see~\cite[Lemma~2.3]{chelkak-semb} and~\cite[Lemma~13]{KLRR} for details. For~$k\in\mathbb N_0$ and~$s\in\mathbb Z$, let~
\begin{align*}
\mathcal{S}((-k-\tfrac{1}{2},s))=(-t^\bullet_k\,,s)\ \ &\text{if}\ \ k+s\not\in 2\mathbb Z\,,\\ \mathcal{S}((-k-\tfrac{1}{2},s))=(-t^\circ_k\,,s)\ \ &\text{if}\ \ k+s\in 2\mathbb Z\,,
\end{align*}
where~$t^\circ_0<t^\bullet_1<t^\circ_2<t^\bullet_3<\ldots$ and~$t^\bullet_0<t^\circ_1<t^\bullet_2<t^\circ_3<\ldots$; see Fig.~\ref{fig:s-embedding}. Since the quadrilaterals with vertices $(-t^\bullet_{k},s)$, $(-t^\circ_k,s+1)$, $(-t^\bullet_{k+1},s+1)$, $(-t^\circ_{k+1},s)$ should be tangential, we have
\[
\begin{array}{l}
t^\bullet_{k+1}-t^\circ_k\ =\ \tfrac{1}{2}[\tan\phi_{k+1}+\tan\phi_k],\\[2pt]
t^\circ_{k+1}-t^\bullet_k\ =\ \tfrac{1}{2}[\cot\phi_{k+1}+\cot\phi_k],
\end{array} \quad \text{where}\ \ \phi_k:=\tfrac{1}{2}\mathrm{arccot}(t^\circ_k-t^\bullet_k)\in (0,\tfrac{1}{2}\pi).
\]
Moreover, the formula~\cite[Eq.~(6.3)]{chelkak-icm2018} for the value of the Ising interaction parameter gives the recurrence relation
\begin{equation}
\label{eq:tanphi-recurrence}
\tan\phi_{k+1}\ =\ \tan^2\theta_{k+1}\cdot\tan\phi_k\,,\qquad k\in\mathbb N_0\,.
\end{equation}
Finally, the condition that the 'origami map' function~$\mathcal{Q}$ associated to~$\mathcal{S}$ (or, equivalently, the function $L_\mathcal{S}$ in the notation of~\cite[Section~6]{chelkak-icm2018}) is periodic in the horizontal direction reads as
\begin{equation}
\label{eq:tanphi-balance}
\textstyle \sum_{k=0}^{2n-1}\tan\phi_k\ =\ \sum_{k=0}^{2n-1} \cot\phi_k\,.
\end{equation}

It is easy to see that~\eqref{eq:tanphi-recurrence} and~\eqref{eq:tanphi-balance} define the angles~$\phi_k$ uniquely and that the \emph{width} of the horizontal period
\[
B_\mathcal{S}\ :=\ t^\bullet_{k+2n}-t^\bullet_k\ =\ t^\circ_{k+2n}-t^\circ_k,\qquad k\in\mathbb N_0,
\]
of thus constructed s-embedding~$\mathcal{S}$ of the zig-zag half-plane~$\mathbb H^\diamond$ equals to
\begin{align*}
 B_\mathcal{S}\ &\textstyle =\ \tfrac{1}{2}\bigl[\,\sum_{k=0}^{2n-1}\tan\phi_k+\sum_{k=0}^{2n-1}\cot\phi_k \,\bigr]\ =\ \bigl[\,\sum_{k=0}^{2n-1}\tan\phi_k\cdot \sum_{k=0}^{2n-1}\cot\phi_k \,\bigr]^{1/2}\\
&\textstyle =\ \bigl[\,\sum_{k=0}^{2n-1}\prod_{p=1}^k\tan^2\theta_p\cdot \sum_{k=0}^{2n-1}\prod_{p=1}^k\cot^2\theta_p\,\bigr]^{1/2}.
\end{align*}
A straightforward computation based upon~\eqref{eq:psi-circ-explicit} shows that this expression coincides with the formula~\eqref{eq:IDS=} for the coefficient~$C_J$ in the asymptotics of the integrated density of states of the matrix~$J$ at~$0$. (For completeness of the presentation, we also discuss the proof of~\eqref{eq:IDS=} in Section~\ref{sub:IDS=} below.) More precisely, one has
\begin{align*}
\textstyle \sum_{k=1}^n(\psi^\circ_k)^2\, &\textstyle =\, \sum_{k=1}^n\bigl[(\sin\theta_{2k-1})^{-2}\prod_{p=1}^{2k-2}\cot^2\theta_p\big]\, =\, \sum_{k=1}^{2n}\prod_{p=1}^{k-1}\cot^2\theta_p\,,\\
\textstyle \sum_{k=1}^n (a_k\psi^\circ_k\psi^\circ_{k+1})^{-1}\, &\textstyle =\, \sum_{k=1}^n\big[(\cos\theta_{2k})^{-2}\prod_{p=1}^{2k-1}\tan^2\theta_p\big]\, =\, \sum_{k=1}^{2n}\prod_{p=1}^k\tan^2\theta_p\,,
\end{align*}
and therefore
\begin{equation}
\label{eq:BS=CJ}
\textstyle n^{-1}B_\mathcal{S}\ =\ \left[\,n^{-2}\sum_{k=1}^n (\psi^\circ_k)^2\cdot \sum_{k=1}^n(a_k\psi^\circ_k\psi^\circ_{k+1})^{-1}\,\right]^{1/2}\ =\ C_J\,.
\end{equation}

\smallskip

We conclude this section by coming back to the discussion of the link between the spectral properties of the matrix~$J$ and the asymptotic behavior of the magnetization~$M_m$ as~$m\to \infty$. Contrary to the classical isoradial setup, in the periodic layered case we do \emph{not} expect a regular behavior~$M_m\sim \mathrm{const}\cdot m^{-1/8}$ uniformly over {all} $m$. Instead, one should expect an \emph{oscillating} prefactor~$A_p$ depending on the `type' of the column under consideration:
\[
M_{nm+q}\ \sim\ A_{p}\cdot 2^{1/8}\mathcal{C}_\sigma (B_\mathcal{S}m)^{-1/8}\quad \text{for}\ \ 1\le q\le n\ \ \text{and}\ \ m\to\infty,
\]
where the main factor~$2^{1/8}\mathcal{C}_\sigma (B_\mathcal{S}m)^{-1/8}$ is universal and accounts the geometry of the s-embedding, cf.~\eqref{eq:BS=CJ} and the asymptotics~\eqref{eq:diag-crit-asymptotics} in the homogeneous case. Note that such oscillating behavior of~\eqref{eq:Mm-layered-UJH} is fully consistent with the fact that~$\supp\nu_J$ has~$n$ bands in the periodic setup instead of a single segment in the homogeneous case. From our perspective, it would be interesting
\begin{itemize}
\item to justify the oscillatory behavior described above and, especially, to find spectral and geometric interpretations of the coefficients~$A_q$;
\item to find a natural definition of the \emph{average} magnetization over the period $\overline{M}_m=\overline{M}_m(M_{nm+1},\ldots,M_{n(m+1)})$ such that
\[
\overline{M}_m\ \sim\ 2^{1/8}\mathcal{C}_\sigma (B_\mathcal{S}m)^{-1/8}\quad \text{as}\ \ m\to \infty
\]
(in other words, to find a natural average that makes~$1$ out of~$A_1,\dots,A_n$).
\end{itemize}

\subsection{Proof of the formula~\eqref{eq:IDS=}}\label{sub:IDS=}
For convenience of the readers with a 'probabilistic' background we now sketch a computation of the integrated density of states of a periodic Jacobi matrix~\eqref{eq:J-def} at the bottom edge of its spectrum, which is assumed to be $\lambda=0$; see~\eqref{eq:IDS=} and~\eqref{eq:BS=CJ}. Though this result seems to be quite standard, we were unable to find an explicit reference in the literature; we thank Leonid Parnovski for indicating a convenient way of doing the required computation presented below.

Recall that we assume that $\theta_{k+2n}=\theta_k$ for all $k\ge 1$ and that $\prod_{k=1}^{2n}\tan\theta_k=1$. Let~$J^{[\mathbb{Z}]}$ denote the \emph{doubly-infinite} periodic Jacobi matrix whose entries $-a_k$ and~$b_k$, $k\in\mathbb{Z}$, are given by~\eqref{eq:J-def}. A straigtforward computation shows that the $n$-periodic vector
\[
\psi^\circ=(\ldots\ \psi^\circ_{-1}\ \psi^\circ_0\ \psi^\circ_1\ \ldots)^\top,\qquad \psi^\circ_{pn+q}:=\prod_{k=1}^{2q}\frac{\sin\theta_{k-1}}{\cos\theta_k}\cdot \psi^\circ_0,
\]
solves the equation~$J^{[\mathbb{Z}]}\psi^\circ=0$; let us normalize~$\psi^\circ$ so that $\sum_{q=1}^n(a_q\psi^\circ_q\psi^\circ_{q+1})^{-1}=1$.

Denote by $J^{[Nn]}$ the Jacobi matrix of size $Nn\times Nn$ with the same entries $-a_k,b_k$ and with \emph{periodic} boundary conditions (i.e., we set $J^{[Nn]}_{1,Nn}=J^{[Nn]}_{Nn,1}:=-a_1$); note that the choice of boundary conditions (being a rank two perturbation) is irrelevant when computing the number of eigenvalues of the truncation $P_{Nn}JP_{Nn}$ in a small window near $\lambda=0$ as $N\to\infty$.

The matrix $J^{[Nn]}$ admits a factorization similar to that in the definition of the original matrix~$J$ (see~\eqref{eq:J-def}); in particular, $J^{[Nn]}\ge 0$. Clearly, \mbox{$J^{[Nn]}\psi^\circ=0$} and (small) eigenvalues of $J^{[Nn]}$ correspond to quasi-periodic eigenvectors \mbox{$\psi_{s+n}=e^{it}\psi_s$}, where $tN\in 2\pi \mathbb{Z}$. Let us now introduce an auxiliary function
\begin{equation}
\label{eq:gsk=}
\textstyle g_s(t):=\exp\bigl[\,it\sum_{q=1}^{s-1}(a_q\psi^\circ_q\psi^\circ_{q+1})^{-1}\bigr]
\end{equation}
and denote
\begin{equation}
\label{eq:Jnk-def}
\begin{array}{l} J^{[Nn]}(t)\ :=\ (G^{[Nn]}(t))^{-1}J^{[Nn]}G^{[Nn]}(t)\,,\\[2pt]
\text{where}\ G^{[Nn]}(t)\ :=\ \operatorname{diag}\{g_s(t)\}_{s=1,\ldots,Nn}.
\end{array}
\end{equation}
Due to the choice of the multiplicative normalization of the vector~$\psi^\circ$ made above, we have~$g_{s+n}=e^{it}g_s$. Therefore, studying eigenvalues of the matrix $J^{[Nn]}$ corresponding to quasi-periodic (i.e., $\psi_{s+n}=e^{it}\psi_s$) eigenvectors is equivalent to studying eigenvalues of the \emph{family} of self-adjoint matrices $J^{[Nn]}(t)$ corresponding to \emph{periodic} (i.e., $\psi_{s+n}=\psi_s$) eigenvectors, which are nothing but the eigenvalues of the $n\times n$ matrices $J^{[n]}(t)$ with $tN\in 2\pi \mathbb{Z}$.

The question is now reduced to the standard setup of the perturbation theory of (simple) lowest eigenvalues of matrices $J^{[n]}(t)$ as $t\to 0$. It is well known that both these eigenvalues $\lambda(t)\to 0$ and the corresponding, properly normalized, eigenvectors $\psi(t)\to\psi^\circ$ admit asymptotic expansions
\[
\begin{array}{lll}
\lambda(t) &=& \lambda^{(1)}t+\lambda^{(2)}t^2+\ldots,\\[2pt]
\psi(t) &=& \psi^\circ+\psi^{(1)}t+\psi^{(2)}t^2+\ldots
\end{array}\ \ \text{as}\ \ t\to 0,
\]
where $\langle\psi^{(1)};\psi^\circ\rangle =0$ and $\langle\psi^{(2)};\psi^\circ\rangle=-\frac{1}{2}\langle\psi^{(1)};\psi^{(1)}\rangle$. Moreover, it is easy to see from~\eqref{eq:Jnk-def} that~$J^{[n]}(t)=J^{[n]}+J^{[n],(1)}t+J^{[n],(2)}t^2+\ldots$ as $t\to 0$, where $J^{[n],(1)}$ and $J^{[n],(2)}$ are two-diagonal matrices with entries
\[
J^{[n],(1)}_{q+1,q}=-J^{[n],(1)}_{q,q+1}=i\cdot (\psi_q^\circ\psi_{q+1}^\circ)^{-1};\qquad
J^{[n],(2)}_{q+1,q}=J^{[n],(2)}_{q,q+1}=\tfrac{1}{2}a_q^{-1}\cdot (\psi_q^\circ\psi_{q+1}^\circ)^{-2}.
\]
In particular, we have $J^{[n],(1)}\psi^\circ=0$; note that this is exactly where a special choice of the function~\eqref{eq:gsk=} plays a very important role by simplifying the computations.

\smallskip

Considering the linear (in $t$) terms in the identity $J^{[n]}(t)\psi(t)=\lambda(t)\psi(t)$, we see that $J^{[n]}\psi^{(1)}=\lambda^{(1)}\psi^\circ$, which yields $\lambda^{(1)}=0$ and $\psi^{(1)}=0$ since $\langle \psi^{(1)};\psi^\circ\rangle=0$ and $\psi^\circ$ is an eigenvector of $J^{[n]}$ corresponding to the simple eigenvalue $\lambda^\circ=0$. Expanding the same identity up to the second order in $t$ we obtain the equation
\[
J^{[n]}\psi^{(2)}+J^{[n],(2)}\psi^\circ=\lambda^{(2)}\psi^\circ.
\]
Since $\langle J^{[n]}\psi^{(2)},\psi^\circ\rangle=\langle \psi^{(2)},J^{[n]}\psi^\circ\rangle=0$, this allows us to compute
\begin{align*}
\lambda^{(2)}\ =\ \frac{\langle J^{[n],(2)}\psi^\circ,\psi^\circ\rangle}{\langle \psi^\circ,\psi^\circ\rangle}\
&=\ \frac{\sum_{q=1}^{n}(a_q\psi^\circ_q\psi^\circ_{q+1})^{-1}}{\sum_{q=1}^{n}(\psi^\circ_q)^2}\\
&=\ \biggl[\,\sum_{q=1}^{n}(\psi^\circ_q)^2\cdot\sum_{q=1}^n(a_q\psi^\circ_q\psi^\circ_{q+1})^{-1}\biggr]^{-1}=\ (nC_J)^{-2},
\end{align*}
where in the third equality we used the prescribed normalization of the vector $\psi^\circ$; note that the formula~\eqref{eq:IDS=} does not depend on the choice of this normalization.

Therefore, we have $\lambda(t)=(nC_J)^{-2}t^2+\ldots$ as $t\to 0$, which means that, for small enough $\lambda_0$ and $N\to\infty$, the $Nn\times Nn$ periodic Jacobi matrix $J^{[Nn]}$ has approximately $NnC_J\cdot \pi^{-1}\sqrt{\lambda_0}$ eigenvalues $\lambda(t)\le\lambda_0$ with $tN\in 2\pi \mathbb{Z}$. In other words, the integrated density of states of $J$ behaves like $C_J\cdot \pi^{-1}\sqrt{\lambda}$ as $\lambda\to 0$.

\renewcommand{\thesection}{A}

\setcounter{equation}{0}
\section{Appendix. Critical Ising model~$\theta^\mathrm{h}=\theta^\mathrm{v}=\frac{\pi}{4}$: diagonal correlations\\ and the half-plane magnetization via Legendre polynomials} 

In this appendix we work with the fully homogeneous critical (i.e., $\theta^\mathrm{h}\!=\!\theta^\mathrm{v}\!=\!\frac{\pi}{4}$) Ising model on the $\frac{\pi}{4}$-\emph{rotated} square grid of mesh size~$\sqrt{2}$. (Note that this setup is actually more similar to Section~\ref{sc:layered} rather than to~Section~\ref{sc:homogeneous}.) We begin with a discussion of the famous result of Wu (see Theorem~\ref{thm:wu} below) that provides an explicit expression of the diagonal spin-spin correlations in terms of factorials. Using the same approach as in the core part of our paper, we give a short proof of this theorem by reducing the computation to the norms of the classical \emph{Legendre polynomials}. {This derivation was first published in~\cite[Section~3]{chelkak-ecm2016} based upon an early version of this paper.} (As communicated to the authors by J.H.H.~Perk, a similar link with Legendre functions and Wronskian identities was the starting point of their joint with H.~Au-Yang treatment~\cite{perk-au-yang-Dubna84} of the two-point correlations at criticality via quadratic identities from~\cite{Perk-81dubna}; see also Remark~\ref{rem:quadratic}.) {We reproduce this short proof of Theorem~\ref{thm:wu} below instead of quoting~\cite{chelkak-ecm2016} for two reasons: to keep the presentation self-contained and, more importantly, to emphasize its link with similar explicit formulas for the magnetization $M_m$ in the $(2m)$-th column of {the zig-zag half-plane}~$\mathbb H^\diamond$. To obtain the latter, we use a simple \emph{Schwarz reflection} argument instead of applying our main result, Theorem~\ref{thm:layered}, in the spirit of Section~\ref{subsect:wetting}. This gives a set of exact identities (see Theorem~\ref{thm:Mm-homo-even} and Remark~\ref{rem:Mm-odd}) between $M_m$ and diagonal correlations in the full plane which appear to be new.}

\begin{rem} The interested reader is also referred to~\cite[Section~3]{chelkak-ecm2016} where the non-critical case~$\theta=\theta^\mathrm{h}=\theta^\mathrm{v}<\frac{1}{4}\pi$ is handled in the same way, via the OPUC polynomials corresponding to the weight~$w_q(t)=|1-q^2e^{it}|$ with~$q:=\tan\theta<1$. It would be interesting to understand the precise link between asymptotics of these orthogonal polynomials obtained by Basor, Chen and Haq in~\cite{basor-et-al-2015} and asymptotics of the diagonal Ising correlations obtained by Perk and Au-Yang in~\cite{perk-au-yang-09}.
\end{rem}

Let~$n\in\mathbb N_0$ and assume that the~$\frac{\pi}{4}$-rotated square grid is shifted so that its vertices (resp., centers of faces) form the lattice~$(-n-\frac{1}{2}+k,s)$ (resp., $(n+\frac{1}{2}+k,s)$) with~$k,s\in\mathbb Z$ and~$k+s\in 2\mathbb Z$. Let
\[
D_n:=\mathbb E[\sigma_{(-n+\frac{1}{2},0)}\sigma_{(n+\frac{1}{2},0)}]
\]
be the (infinite-volume limit of the) diagonal spin-spin correlation at distance of~$n$ diagonal steps. Denote~$\mathbf v:=(-n-\frac{1}{2},0)$, $\mathbf u:=(n+\frac{1}{2},0)$ and let
\begin{align*}
V(k,s)\ :=\ 
X_{[\mathbf v,\mathbf u]}((k,s)),\qquad & k,s\in\mathbb Z,\ k\!+\!s\!+\!n\in 2\mathbb Z,
\end{align*}
recall that~$V$ is a spinor on the double covers branching over~$\mathbf v$ and~$\mathbf u$. It follows from Proposition~\ref{prop-massive-harmonicity} (or, equivalently, Proposition~\ref{prop-harmonicity-layered}) that $V$ satisfies the standard discrete harmonicity condition~$[\Delta V](k,s)=0$ for all~$k,s$ except at the points~$(\pm n,0)$ near the branchings, where
\begin{align*}
[\Delta V](k,s)\;:=&\;-V(k,s)\\&+\ \tfrac{1}{4}[V(k\!-\!1,s\!-\!1)+V(k\!+\!1,s\!-\!1)+V(k\!-\!1,s\!+\!1)+V(k\!+\!1,s\!+\!1)].
\end{align*}
It directly follows from the definition of the observable~$X_{[\mathbf v,\mathbf u]}$ and the self-duality of the critical model that
\begin{equation}
\label{eq:Vcrit=Dn}
V(-n,0)\ =\ V(n,0)\ =\ D_n\,.
\end{equation}
Moreover, a straightforward computation similar to the proof of Proposition~\ref{prop-massive-harmonicity} implies that
\begin{equation}
\label{eq:Lcrit=Dn+1}
\begin{array}{ll}
[\Delta V](\pm n,0)\  =\ -\tfrac{1}{2}D_{n+1}\quad & \text{if~$n\ge 1$}\,,\\{}
[\Delta V](0,0)\ =\ -D_1 & \text{if~$n=0$}\,.
\end{array}
\end{equation}
Applying the optional stopping theorem as in the proof of Lemma~\ref{lem:uniqueness}, it is easy to see that the \emph{uniformly bounded} discrete harmonic spinor~$V$ is uniquely defined by its values~\eqref{eq:Vcrit=Dn} near the branchings. Following exactly the same route as in Section~\ref{subsect:construction-homogen} we now construct~$V$ explicitly; {a similar idea was used in~\cite[Appendix~A]{GHP-19} to construct the harmonic measure of the tip in the slit plane, which can be viewed as an analogue of the function~$V(k-n,s)$ for~$n=\infty$.}
\begin{lem}
\label{lem:diag-explicit}
Let~$P_n(x):=(2^nn!)^{-1}\frac{d}{dx}[(x^2\!-\!1)^n]$ be the $n$-th Legendre polynomial. Then, for all $k\in\mathbb Z$ and $s\in\mathbb N_0$ such that~$n\!+\!k\!+\!s\in 2\mathbb Z$, one has
\begin{equation}
\label{eq:Vcrit-explicit}
V(k,\pm s)\ =\ \frac{C_n}{2\pi}\int_{-\pi}^{\pi} e^{-ikt}(y(t))^s P_n(\cos t) dt,
\end{equation}
where~$y(t)=(1-|\sin t|)/\cos t$
and~$C_n$ is chosen so that~$V(\pm n,0)=D_n$.
\end{lem}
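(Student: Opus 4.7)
The plan is to mirror the construction carried out for the problem $\mathrm{[P^{sym}_n]}$ in Lemma~\ref{lem:solution-sym}: produce an explicit bounded discrete-harmonic spinor on the double cover branching over $\mathbf v$ and $\mathbf u$ with the required values $V(\pm n,0) = D_n$, and invoke a uniqueness principle. The uniqueness statement, parallel to Lemma~\ref{lem:uniqueness}, is proved by applying the optional stopping theorem to the simple random walk on the corner lattice (with the four diagonal neighbors), since this walk almost surely hits the branching corners and the difference of two bounded such spinors is a bounded martingale, which must therefore vanish.

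Taking the right-hand side of the claimed formula as the candidate $V^{\mathrm{cand}}$, uniform boundedness is immediate from $|y(t)| \leq 1$ and $|P_n(\cos t)| \leq 1$. Passing to Fourier series in $k$, the ansatz $\widehat{V}_s(t) := \sum_k V^{\mathrm{cand}}(k,s)e^{ikt} = C_n(y(t))^{|s|} P_n(\cos t)$ reduces the discrete harmonicity $[\Delta V^{\mathrm{cand}}](k,s) = 0$ at $s \neq 0$ to the recurrence $\widehat V_s = \tfrac{\cos t}{2}(\widehat V_{s-1} + \widehat V_{s+1})$, which in turn follows directly from the defining quadratic $y + y^{-1} = 2/\cos t$ satisfied by $y(t)$. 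At $s = 0$ the Fourier computation yields $\sum_k [\Delta V^{\mathrm{cand}}](k,0) e^{ikt} = -C_n |\sin t| P_n(\cos t)$, and the task is to show that the $k$-th Fourier coefficient vanishes for $k \neq \pm n$ (in the correct parity class). For $|k| < n$ the substitution $x = \cos t$ converts this coefficient into $(C_n/\pi)\int_{-1}^{1} T_{|k|}(x) P_n(x)\,dx$, which vanishes by the classical orthogonality of $P_n$ to polynomials of lower degree. For $|k| > n$ a different mechanism operates: since $P_n(\cos t)$ is a trigonometric polynomial of degree $n$, the formula forces $V^{\mathrm{cand}}(k,0) = 0$, and the spinor structure, with branch cuts placed along $(-\infty,-n-\tfrac{1}{2}] \cup [n+\tfrac{1}{2},+\infty)$, makes the four neighbor values $V^{\mathrm{cand}}(k\pm 1, \pm 1)$ enter the single-sheet Laplacian with signs that cancel pairwise. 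Finally, the constant $C_n$ is determined by matching $V^{\mathrm{cand}}(\pm n, 0) = D_n$, using that the Fourier coefficient of $P_n(\cos t)$ at $\cos(nt)$ is $\binom{2n}{n} 2^{1-2n}$.

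The main obstacle I anticipate is the careful bookkeeping of the spinor sign convention across the outer branch cuts. The identity $V^{\mathrm{cand}}(k,s) = V^{\mathrm{cand}}(k,-s)$ built into the formula corresponds to Schwarz reflection within a single sheet when $|k| \leq n$, but for $|k| > n$ it encodes a sheet swap, so that on the sheet containing the upper half plane the values $V^{\mathrm{cand}}(k \pm 1, -1)$ must be read with the opposite sign. This sign flip is precisely what cancels the (otherwise nonzero) Fourier contributions of $|\sin t| P_n(\cos t)$ at frequencies $|k| > n$, and verifying it cleanly is the technical crux of the argument; once it is in place, uniqueness concludes that $V^{\mathrm{cand}} = V$.
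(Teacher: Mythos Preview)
Your proposal is correct and follows essentially the same route as the paper's proof: boundedness from $|y(t)|\le 1$, harmonicity for $s\neq 0$ from the quadratic equation for $y(t)$, vanishing of $[\Delta V](k,0)$ for $|k|<n$ via the substitution $x=\cos t$ and Legendre orthogonality to $T_{|k|}$, the observation $V(k,0)=0$ for $|k|>n$ since $P_n(\cos t)$ has degree $n$, and then uniqueness via the optional stopping theorem for the recurrent simple random walk. The one place where you are more explicit than the paper is the $|k|>n$ case: the paper dispatches it in a single phrase (``due to symmetry reasons''), whereas you spell out that the branch cut lies on the outer rays and that on a fixed sheet the values $V(k\pm 1,-1)$ acquire a sign relative to $V(k\pm 1,1)$, forcing the Laplacian to vanish; your description of this mechanism is accurate and is exactly what the paper's ``symmetry reasons'' encode, so there is no real obstacle here.
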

\begin{proof} It is easy to see that
\begin{itemize}
\item the values~$V(k,s)$ defined by~\eqref{eq:Vcrit-explicit} are uniformly bounded since~$|y(t)|\le 1$;

\item $[\Delta V](k,s)=0$ if~$s\ne 0$ since~$y(t)=\frac{1}{2}\cos t\cdot (1+(y(t))^2)$;

\item $V(k,0)=0$ if~$|k|>n$, thus one can view~\eqref{eq:Vcrit-explicit} as a function (spinor) defined on the \emph{double cover} branching over~$\mathbf v$ and~$\mathbf u$ and vanishing over the real line outside the segment~$[\mathbf v,\mathbf u]$, this spinor satisfies the discrete harmonicity property at~$(k,0)$ with~$|k|>n$ due to symmetry reasons.
\end{itemize}
Moreover, the orthogonality in $L^2([-1,1])$ of~$P_n(x)$ to all monomials~$1,x,\dots,x^{n-1}$ gives
\begin{align*}
-[\Delta V](k,0)\ &=\ V(k,0)-\tfrac{1}{2}[V(k\!-\!1,1)+V(k\!+\!1,1)]\\
&=\ \frac{C_n}{2\pi}\int_{-\pi}^\pi e^{-ikt}(1-y(t)\cos t)P_n(\cos t)dt\\
& =\ \frac{C_n}{2\pi}\int_{-\pi}^\pi\cos(kt)|\sin t|P_n(\cos t)dt\ =\ \frac{C_n}{\pi}\int_{-1}^1 T_{|k|}(x)P_n(x)dx\ =\ 0
\end{align*}
for all~$|k|<n$, where~$T_k(x):=\cos(k\arccos x)$ are the Chebyshev polynomials. Therefore, the Kadanoff--Ceva fermion~$X_{[\mathbf v,\mathbf u]}((k,s))$ must coincide with the right-hand side of~\eqref{eq:Vcrit-explicit} up to a multiplicative constant.
\end{proof}

The following theorem can be obtained as a simple corollary of Lemma~\ref{lem:diag-explicit}.
\begin{theo}[{\bf Wu}]
\label{thm:wu} The following explicit formula is fulfilled:
\begin{equation}
\label{eq:Wu-factorials}
D_{n}\ =\ \biggl(\frac{2}{\pi}\biggr)^{\!\!n}\cdot\,\prod_{k=1}^{n-1}\biggl(1-\frac{1}{4k^2}\biggr)^{\!\!k-n},\quad n\ge 0.
\end{equation}
\end{theo}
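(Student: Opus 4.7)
The plan is to exploit the explicit integral representation in Lemma~\ref{lem:diag-explicit} together with the identifications $V(n,0)=D_n$ from~\eqref{eq:Vcrit=Dn} and $-[\Delta V](n,0)=\tfrac12 D_{n+1}$ from~\eqref{eq:Lcrit=Dn+1} (for $n\ge 1$) to extract a closed-form recurrence for $D_n$, and then iterate.

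First I would fix the multiplicative constant $C_n$ in~\eqref{eq:Vcrit-explicit}. Setting $(k,s)=(n,0)$ and changing variables to $x=\cos t$ gives $D_n=(C_n/\pi)\int_{-1}^{1}T_n(x)P_n(x)(1-x^2)^{-1/2}dx$, where $T_n$ denotes the $n$-th Chebyshev polynomial. Expanding $P_n$ in the basis $T_0,\dots,T_n$ and using the $T$-orthogonality with respect to the weight $(1-x^2)^{-1/2}$, this integral collapses to the ratio of leading coefficients times $\|T_n\|_w^2=\pi/2$. A direct calculation then yields $C_n=2^{2n}(n!)^2 D_n/(2n)!$ for $n\ge 1$.

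Next, the quantity $-[\Delta V](n,0)$ is given by $(C_n/\pi)\int_{-1}^{1}T_n(x)P_n(x)\,dx$, by exactly the same computation that the proof of Lemma~\ref{lem:diag-explicit} uses to verify harmonicity at $|k|<n$ (the only change is that the integral no longer vanishes when $k=n$). Since $T_n(x)-2^{n-1}x^n$ has degree strictly less than $n$ and is therefore Legendre-orthogonal to $P_n$ with respect to $dx$, the integral reduces to $2^{n-1}\int_{-1}^{1}x^nP_n(x)\,dx$. The latter is a classical Rodrigues-formula integral equal to $2^{2n}(n!)^2/(2n+1)!$, so combining with~\eqref{eq:Lcrit=Dn+1} and the value of $C_n$ one obtains the clean recurrence
\begin{equation*}
D_{n+1}\ =\ \frac{2^{4n+1}(n!)^4}{\pi\,(2n)!\,(2n+1)!}\,D_n,\qquad n\ge 1,
\end{equation*}
while the $n=0$ case is checked separately from $V(0,0)=1$ and $-[\Delta V](0,0)=D_1$, which force $C_0=1$ and $D_1=2/\pi$.

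To finish, I would rewrite the recurrence factor via the Wallis-type identity
\begin{equation*}
\prod_{k=1}^{n}\frac{4k^2}{4k^2-1}\ =\ \frac{2^{4n}(n!)^4}{(2n)!\,(2n+1)!}\,,
\end{equation*}
so that $D_{n+1}/D_n=(2/\pi)\prod_{k=1}^{n}(1-1/(4k^2))^{-1}$ uniformly for $n\ge 0$. A telescoping induction from $D_0=1$ then produces~\eqref{eq:Wu-factorials}, the exponent $k-n$ appearing naturally after one shifts the product index and collects powers. No step requires real work beyond routine bookkeeping; if there is any obstacle at all it is merely the verification of the Wallis-type identity and the careful handling of the $n=0$ base case, where the relation between $\Delta V$ and $D_1$ differs by a factor of~$2$ from the generic formula.
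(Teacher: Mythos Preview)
Your proposal is correct and follows essentially the same route as the paper: both arguments pin down $C_n$ from the identity $V(n,0)=D_n$, evaluate $-[\Delta V](n,0)=(C_n/\pi)\int_{-1}^1 T_n P_n\,dx$ via Legendre orthogonality, and combine with~\eqref{eq:Lcrit=Dn+1} to obtain the recurrence $D_{n+1}/D_n=(2/\pi)\prod_{k=1}^n(1-1/(4k^2))^{-1}$ (the paper writes this as $(2/\pi)((2n)!!)^2/((2n{-}1)!!(2n{+}1)!!)$, which is the same thing). The only cosmetic difference is that the paper reads off $C_n$ directly from the leading Fourier coefficient of $P_n(\cos t)$ rather than passing through the Chebyshev weight, and phrases the Laplacian integral via $\|P_n\|^2$ instead of the Rodrigues integral $\int x^nP_n\,dx$.
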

\begin{proof} Denote by~$p_n:=(2^nn!)^{-1}\cdot (2n)!/n!$ the leading coefficient of the Legendre polynomial~$P_n$ and let~$t_n:=2^{n-1}$, $n\ge 1$ be the leading coefficient of the Chebyshev polynomial~$T_n$, note that the value~$t_0=1$ does not match the general case. It follows from~\eqref{eq:Vcrit-explicit} that~$D_n=C_n\cdot 2^{-n}p_n$. On the other hand,
\begin{align*}
-[\Delta V](\pm n,0)\; &=\; \frac{C_n}{\pi}\int_{-1}^1 T_n(x)P_n(x)dx\; =\; \frac{C_nt_n}{\pi p_n}\cdot \|P_n\|_{L^2([-1,1])}^2\; =\; \frac{2C_nt_n}{\pi (2n\!+\!1)p_n}\,.
\end{align*}
Due to~\eqref{eq:Lcrit=Dn+1}, we conclude that for \emph{all}~$n\ge 0$ the following recurrence relation holds:
\[
\frac{D_{n+1}}{D_n}\ =\ \frac{2^{n+1}C_n}{\pi(2n\!+\!1)p_n}\ =\ \frac{2^{2n+1}}{\pi(2n\!+\!1)p_n^2}\ =\ \frac{2}{\pi}\cdot \frac{((2n)!!)^2}{(2n\!-\!1)!!(2n\!+\!1)!!}\,.
\]
This easily gives~\eqref{eq:Wu-factorials} by induction.
\end{proof}

We now move on to an explicit expression for the magnetization in the $(2m)$-th column of the zig-zag half-plane~$\mathbb H^\diamond$ with `$+$' boundary conditions:
\[
M_m\ :=\ \mathbb E^{+}_{\mathbb H^\diamond}[\sigma_{(-2m-\frac{1}{2},0)}]\,.
\]

\begin{theo}
\label{thm:Mm-homo-even}
The following identities are fulfilled for all~$m\in\mathbb N_0$:
\begin{equation}
\label{eq:M/M=D/D-even}
\frac{M_{m+1}}{M_m}\ =\ \frac{D_{2m+2}}{D_{2m+1}}\,,\qquad
M_{m}\ =\ \biggl(\frac{2}{\pi}\biggr)^{\!\!m}\cdot\,\prod_{k=1}^{2m-1}\biggl(1-\frac{1}{4k^2}\biggr)^{\!\!\lfloor\frac{k}{2}\rfloor-m}.
\end{equation}
\end{theo}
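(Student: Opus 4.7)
The claim splits into two parts: first the ratio identity $M_{m+1}/M_m = D_{2m+2}/D_{2m+1}$, second the explicit product formula. Given the first, the second follows immediately: iterating from $M_0 = 1$ gives $M_m = \prod_{k=0}^{m-1} D_{2k+2}/D_{2k+1}$, and substituting the ratio $D_n/D_{n-1} = (2/\pi)\prod_{j=1}^{n-1}(1-1/(4j^2))^{-1}$ (an immediate consequence of Theorem~\ref{thm:wu}) yields the claim after noting that, for each $j\in\{1,\ldots,2m-1\}$, the factor $(1-1/(4j^2))^{-1}$ occurs precisely $m-\lfloor j/2\rfloor$ times in the product over $k$ (namely, once for each $k\ge \lfloor j/2\rfloor$).

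For the ratio identity I would use a Schwarz reflection at criticality. When $\sin\theta_k = \cos\theta_k = 1/\sqrt{2}$ for all $k$, the boundary relation~\eqref{eq:CR-boundary} simplifies to $H(-1, s\pm 1) - \sqrt{2}\,H(0, s) = \mp i\,H^\circ(-1, s)$, which, after rescaling $H(0, s)$ by $\sqrt{2}$, becomes structurally identical to the interior Cauchy--Riemann relation~\eqref{eq:CR-layered}. This suggests the extension $\widetilde V(k, s) := H(-|k|, s)$ for $k\ne 0$ together with $\widetilde V(0, s) := \sqrt{2}\,H(0, s)$, and an antisymmetric reflection of $H^\circ$ across the imaginary axis (compatible with $H^\circ(0, s) = 0$). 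The resulting full-plane spinor satisfies the discrete Cauchy--Riemann equations away from $\mathbf v = (-2m-3/2, 0)$ and its reflected image $\mathbf v^* = (2m+3/2, 0)$, and by the uniqueness argument used in the proof of Lemma~\ref{lem:diag-explicit}, it must agree, up to a global multiplicative constant, with the full-plane observable $V = X_{[\mathbf v, \mathbf v^*]}$ of Lemma~\ref{lem:diag-explicit} for $n = 2m+1$.

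Matching values at the corner $(-2m-1, 0)$ yields $M_m = c\,D_{2m+1}$ for some constant $c$. I would then extract $M_{m+1}$ from the discrete Laplacian of $\widetilde V$ at the same corner: on the one hand, by~\eqref{eq:Lcrit=Dn+1} this Laplacian equals $-c\,D_{2m+2}/2$; on the other hand, reducing back to the half-plane values and applying the propagation equation~\eqref{eq:propagation_on_corners} at the four neighboring west corners rewrites it as a specific multiple of $M_{m+1}$ (the spin value one branching step beyond $\mathbf v$). Equating the two expressions and cancelling $c$ gives the recursion. The main obstacle will be the careful sheet-assignment bookkeeping on the double cover, ensuring that the proposed reflection produces precisely the two branchings at $\mathbf v$ and $\mathbf v^*$ with no spurious singularities introduced by the rescaling at $k=0$, and that the numerical coefficient relating the Laplacian of $\widetilde V$ to $M_{m+1}$ is exactly what is needed to cancel the factor $-1/2$ from~\eqref{eq:Lcrit=Dn+1} and produce the clean ratio $D_{2m+2}/D_{2m+1}$ with no extra normalization.
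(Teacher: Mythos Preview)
Your proposal is correct and follows essentially the same route as the paper: extend the half-plane observable $H$ to the full plane by a Schwarz reflection across the imaginary axis (with a rescaling of the boundary column), identify the result with the full-plane two-branching observable $X_{[\mathbf v,\mathbf u]}$ for $n=2m+1$ via uniqueness, and read off the ratio $M_{m+1}/M_m=D_{2m+2}/D_{2m+1}$ by comparing the Laplacian at $(-2m-1,0)$ computed in the two pictures. The paper phrases the extension purely through the discrete harmonicity of $H$ (so $H^\circ$ never appears in the argument), whereas you route it through the Cauchy--Riemann pair $(H,H^\circ)$; this is only a cosmetic difference. One small caution: the paper takes the boundary rescaling to be $V(0,s)=2^{-1/2}\,C\,H(0,s)$ rather than your $\sqrt{2}\,H(0,s)$, so when you carry out the ``careful sheet-assignment bookkeeping'' you flag at the end, double-check which factor actually makes the reflected function harmonic both at $(0,s)$ and at $(\pm 1,s)$ --- this is exactly the place where a factor of~$2$ can slip in, though it does not affect the Laplacian at the branching and hence the final ratio.
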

\begin{proof} Similarly to Section~\ref{subsect:observable-layered}, let~$\mathbf v=(-2m\!-\!\tfrac{3}{2},0)$ and
\[
H(-k,s)\ :=\ X_{[\mathbf v]}((-k,s)),\qquad k\in\mathbb N_0,\ s\in\mathbb Z,\ k\!+\!s\not\in 2\mathbb Z
\]
be the half-plane fermionic observable. This is a bounded discrete harmonic (except at~$(-2m\!-\!1,0)$) spinor on the double cover of~$\mathbb H^\diamond$ branching over~$\mathbf v$ which satisfy the boundary conditions
\[
H(0,s)\ =\ 2^{-1/2}\cdot [H(-1,s\!-\!1)+H(-1,s\!+\!1)],\quad s\not\in 2\mathbb Z,
\]
on the imaginary line (see~\eqref{eq:CR-boundary} and~\eqref{eq:H-Crelations}). Denote
\begin{equation}
\label{eq:V-as-H-crit}
\begin{array}{ll}
V(\pm k,s)\ :=\ CH(k,s) & \text{if~$k\in \mathbb N$},\\
V(0,s):=2^{-1/2}\cdot CH(0,s)\quad &\text{if~$k=0$},
\end{array}
\quad s\in\mathbb Z,\ \ k\!+\!s\not\in 2\mathbb Z,
\end{equation}
where~$C:=D_{2m+1}/M_m$; {up to a change of the multiplicative normalization, this is nothing but the extension of~$H$ from the left half-plane to the full plane via the discrete Schwartz reflection.} By construction,~$V$ is a spinor on the double cover of the full-plane branching over~$\mathbf v$ and~$\mathbf u:=(2m\!+\!\tfrac{3}{2},0)$ which is discrete harmonic everywhere (including points on the imaginary line) except at points~$(\pm(2m\!+\!1),0)$ near the branchings, where one has~$V(\pm(2m\!+\!1),0)=D_{2m+1}$. Therefore, it coincides with the full-plane observable~$X_{[\mathbf v,\mathbf u]}((k,s))$ discussed above. In particular, \eqref{eq:V-as-H-crit} implies the identity
\[
\tfrac{1}{2}D_{2m+2}\ =\ -[\Delta V](-2m\!-\!1,0)\ =\ -C\cdot [\Delta H](-2m\!-\!1,0)\ =\ C\cdot \tfrac{1}{2}M_{m+1}
\]
which is equivalent to the first identity in~\eqref{eq:M/M=D/D-even}. The explicit formula for~$M_m$ easily follows from the explicit formula~\eqref{eq:Wu-factorials} by induction.
\end{proof}

\begin{rem} \label{rem:Mm-odd}
Similarly, let~$M_{m-\frac{1}{2}}$ 
denote the magnetization in the \mbox{$(2m\!-\!1)$-th} column of the critical homogeneous Ising model in the zig-zag plane. It is not hard to repeat the proof of Theorem~\ref{thm:Mm-homo-even} in this situation and to obtain the identity
\[
M_{m+\frac{1}{2}}\,/\,M_{m-\frac{1}{2}}\ =\ D_{2m+1}\,/\,D_{2m},\quad m\in\mathbb N_0,
\]
where we formally set~$M_{-\frac{1}{2}}:=\sqrt{2}$, this convention is the result of the additional factor relating the values of the half-plane and the full-plane observables on the imaginary line via~\eqref{eq:V-as-H-crit}. By induction, one easily gets the identity
\begin{equation}
\label{eq:MM=D-crit}
M_{m+\frac{1}{2}}M_{m\vphantom{\frac{1}{2}}}\ =\ \sqrt{2}\cdot D_{2m+1},\qquad m\in \tfrac{1}{2}\mathbb N_0,
\end{equation}
and an explicit formula for~$M_{m+\frac{1}{2}}$, which is similar to~\eqref{eq:M/M=D/D-even}. Finally, a straightforward analysis gives the asymptotics
\begin{equation}
\label{eq:diag-crit-asymptotics}
D_n\ \sim\ \mathcal{C}_\sigma^2\cdot (2n)^{-1/4},\quad M_m\ \sim\ 2^{1/8}\mathcal{C}_\sigma\cdot (2m)^{-1/8},\quad n,m\to\infty,
\end{equation}
where~$\mathcal{C}_\sigma=2^{\frac{1}{6}}e^{\frac{3}{2}\zeta'(-1)}$ is the same universal constant as in Theorem~\ref{thm:crit-homogen}. Note that we prefer to encapsulate the factors~$2n$ and~$2m$ (rather than simply~$n$ and~$m$), respectively, as they are equal to the geometric distance between the two spins under consideration and the distance from the spin~$\sigma_{(-2m-\frac{1}{2},0)}$ to the boundary of the half-plane~$\mathbb H^\diamond$, respectively.
\end{rem}


\end{document}